\newcommand{\dist}{\textrm{\rm dist}}
\DeclareMathOperator{\operatorClassNP}{NP}
\newcommand{\classNP}{\ensuremath{\operatorClassNP}}
\DeclareMathOperator{\operatorClassFPT}{FPT\xspace}
\newcommand{\classFPT}{\ensuremath{\operatorClassFPT}\xspace}
\DeclareMathOperator{\operatorClassW}{W}
\newcommand{\classW}[1]{\ensuremath{\operatorClassW[#1]}}
\DeclareMathOperator{\operatorClassXP}{XP\xspace}
\newcommand{\classXP}{\ensuremath{\operatorClassXP}\xspace}
\newcommand{\cO}{\mathcal{O}}
\newcommand{\Oh}{\mathcal{O}}
\newcommand{\bran}[1]{branchable\xspace}
\newtheorem{theorem}{Theorem}
\newtheorem{lemma}{Lemma}[section]
\newtheorem{definition}{Definition}[section]
\newtheorem{observation}{Observation}[section]
\newtheorem{proposition}{Proposition}[section]
\theoremstyle{definition}
\newtheorem{reduction}{Reduction Rule}[section]
\newcommand{\pname}{\textsc}
\newcommand{\ProblemFormat}[1]{\pname{#1}}
\newcommand{\ProblemIndex}[1]{\index{problem!\ProblemFormat{#1}}}
\newcommand{\ProblemName}[1]{\ProblemFormat{#1}\ProblemIndex{#1}{}\xspace}
\newcommand{\probWMSC}{\ProblemName{Minimum Spanning Circuit}}
\newcommand{\probSCIR}{\ProblemName{Spanning Circuit}}
\newcommand{\probewmsc}{\ProblemName{Extended Minimum  Circuit}}
\newcommand{\probemc}{\ProblemName{Minimal Terminal Cut}}
\newcommand{\probbemc}{\ProblemName{Border Contractions}}
\newcommand{\probctse}{\ProblemName{Cycle Through Terminals}}
\newcommand{\ctse}{\probctse}
\newcommand{\wmsc}{\probWMSC}
\newcommand{\scir}{\probSCIR}
\newcommand{\emc}{\probemc}
\newcommand{\bemc}{\probbemc}
\newcommand{\ewmsc}{\probewmsc}
\newcommand{\escir}{\textsc{Extended Spanning Circuit}}
\newcommand{\constrext}{circuit constraint\xspace}
\newcommand{\dpair}{{$(T,{\cal M})$}}
\newlength{\RoundedBoxWidth}
\newsavebox{\GrayRoundedBox}
\newenvironment{GrayBox}[1]%
   {\setlength{\RoundedBoxWidth}{.93\textwidth}
    \def\boxheading{#1}
    \begin{lrbox}{\GrayRoundedBox}
       \begin{minipage}{\RoundedBoxWidth}}%
   {   \end{minipage}
    \end{lrbox}
    \begin{center}
    \begin{tikzpicture}%
       \node(Text)[draw=black!20,fill=white,rounded corners,%
             inner sep=2ex,text width=\RoundedBoxWidth]%
             {\usebox{\GrayRoundedBox}};
        \coordinate(x) at (current bounding box.north west);
        \node [draw=white,rectangle,inner sep=3pt,anchor=north west,fill=white] 
        at ($(x)+(6pt,.75em)$) {\boxheading};
    \end{tikzpicture}
    \end{center}}     
\newenvironment{defproblemx}[2][]{\noindent\ignorespaces%
                                \FrameSep=6pt%
                                \parindent=0pt%
                \vspace*{-1.5em}
                \ifthenelse{\isempty{#1}}{%
                  \begin{GrayBox}{\textsc{#2}}%
                }{%
                  \begin{GrayBox}{\textsc{#2} parameterized by~{#1}}%
                }
                \begin{tabular*}{\textwidth}{@{\hspace{.1em}} >{\itshape} p{1.8cm} p{0.8\textwidth} @{}}%
            }{
                \end{tabular*}%
                \end{GrayBox}%
                \ignorespacesafterend
            }  
\newcommand{\defparproblema}[4]{
  \begin{defproblemx}[#3]{#1}
    Input:  & #2 \\
    Task: & #4
  \end{defproblemx}
}%
\newcommand{\defproblema}[3]{
  \begin{defproblemx}{#1}
    Input:  & #2 \\
    Task: & #3
  \end{defproblemx}
}%
\begin{document}

\title{Spanning  Circuits in Regular Matroids
}

\author{
Fedor V. Fomin\thanks{
Department of Informatics, University of Bergen, Norway.} \addtocounter{footnote}{-1}
\and
Petr A. Golovach\footnotemark{}\addtocounter{footnote}{-1}
\and 
Daniel Lokshtanov\footnotemark{}\addtocounter{footnote}{-1}
\and
Saket Saurabh\footnotemark{} \thanks{Institute of Mathematical Sciences, Chennai, India}}

\date{}

\maketitle

\thispagestyle{empty}

\begin{abstract}
We consider the fundamental Matroid Theory problem of finding a circuit in a matroid spanning a set $T$ of given terminal elements. 
For graphic matroids this corresponds to the problem of finding a simple cycle passing through a set of given terminal edges in a graph. 
The  algorithmic study of the problem on  regular matroids, a superclass of graphic matroids, was initiated by Gaven{\v{c}}iak,   Kr{\'a}l',   and Oum [ICALP'12], who proved that the case of the problem with $|T|=2$ is fixed-parameter tractable (FPT) when parameterized by the length of the circuit.  We extend the result of Gaven{\v{c}}iak,  Kr{\'a}l', and Oum by showing that for regular matroids 
\begin{itemize}
 \item the \probWMSC problem, deciding whether there is a circuit with at  most $\ell$ elements   containing $T$, is FPT  
  parameterized by $k=\ell-|T|$;
\item  the \probSCIR problem, deciding whether there is a circuit containing $T$, is FPT parameterized by $|T|$.
 \end{itemize}
We note that extending our algorithmic findings to binary matroids, a superclass of regular matroids, is highly unlikely:  \probWMSC parameterized by $\ell$ is \classW{1}-hard on binary matroids even when $|T | = 1$. 
We also show a limit to how far our results can be strengthened by considering a smaller parameter.
 More precisely, we prove that \probWMSC parameterized by $|T|$ is \classW{1}-hard even on cographic matroids, a proper subclass of regular matroids.
\end{abstract}

\section{Introduction}\label{sec:Intro}
Deciding if a given graph $G$ contains a cycle passing through a specified set $T$ of terminal edges or vertices  is the classical problem in graph theory. The study of this problem can be traced back to the fundamental theorem of Dirac from 1960s about the existence of a cycle in $k$-connected graph passing through a given set of $k$ vertices \cite{MR0121311}.  According to Kawarabayashi \cite{Kawarabayashi08}  \emph{``...cycles through a vertex set or an edge set are one of central topics in all of graph theory."} 
We refer to \cite{Kawarabayashi02a} for an overview on the graph-theoretical study of the problem, including the famous Lov{\'{a}}sz-Woodall Conjecture. 

The algorithmic version of this question, is there a polynomial time algorithm deciding if a given graph contains a cycle passing through the set of terminal vertices or edges,  is the problem  of a fundamental importance  in graph algorithms. Since the problem generalizes the classical Hamiltonian cycle problem, it is NP-complete. However, for a fixed number of  terminals the problem is solvable in polynomial time. 
The case $|T|=1$ with one terminal vertex or edge is trivially  solved by the breadth first search. The case of $|T|=2$ can be reduced to  finding a flow of size 2 between  two vertices in a graph. The case of $|T|=3$ is already nontrivial and was shown to be solvable in linear time
in \cite{LaPaughR80}, see also \cite{FleischnerW92}. 
The fundamental result of  Robertson and Seymour on the disjoint path problem \cite{RobertsonS-GMXIII}
implies that the problem can be solved in polynomial time for a fixed number of terminals. 
 Kawarabayashi  in \cite{Kawarabayashi08} provided a quantitative improvement by showing that the problem is solvable in polynomial time for $|T|=\cO((\log \log{n})^{1/10})$, where $n$ is the size of the input graph.   
 Bj{\"{o}}rklund et al. \cite{BjorklundHT12} gave a randomized algorithm solving the problem in time 
 $2^{|T|}n^{\cO(1)}$.   The algorithm of Bj{\"{o}}rklund et al. solves  also the minimization variant of the problem, where 
 the task is to find a cycle of minimum length passing through terminal vertices.  We refer to the book of Cygan et al.~\cite{CyganFKLMPPS15} for an overview of different techniques in parameterized algorithms for solving problems about cycles and paths in graphs. 
 
 Matroids are combinatorial objects generalizing graphs and linear independence. The study of circuits  containing certain elements of a matroid  is  
one of the central themes  in  matroid theory. 
 For graphic matroids, the problem of finding a circuit spanning (or containing) a given set of elements  corresponds to   finding in a   graph a simple cycle passing through specified edges.
The classical theorem of Whitney  \cite{Whitney35} asserts that any pair of elements of a connected matroid are in a circuit. Seymour   \cite{SeymourP86} obtained a characterization of binary matroids with a circuit containing a  triple of elements. See also 
  \cite{denley2001generalization,mcguinness2009ore,Oxley97} and references there for combinatorial results about circuits spanning certain elements in matroids. 
However,  compared to graphs,  the algorithmic aspects  of ``circuits through elements"  in matroids are much less understood.

 In their work on deciding first order properties on matroids of locally bounded  branch-width, Gaven{\v{c}}iak et al. 
\cite{GavenciakKO12} initiated the algorithmic study of the following problem.

\defproblema{\probWMSC}%
{A binary matroid $M$ with a ground set $E$, a weight function $w\colon E\rightarrow \mathbb{N}$,
a set of \emph{terminals} $T\subseteq E$, and a nonnegative integer $\ell$.}%
{Decide whether there is  a circuit $C$ of $M$ with $w(C)\leq \ell$ such that $T\subseteq C$.}

Since graphic matroids are binary, this problem is a generalization of the problem of finding a cycle through a given set of edges in a graph.  By the result of Vardy  \cite{vardy1997algorithmic} about the  {\textsc{Minimum Distance}} problem from coding theory,   \probWMSC is \classNP-complete even when $T=\emptyset$.
Gaven{\v{c}}iak et al.  \cite{GavenciakKO12} observed that  the hardness result of  
Downey et al. from~\cite{DowneyFVW99}  also implies that \wmsc{} is \classW{1}-hard on binary matroids with unit-weights elements when parameterized by $\ell$ even if $|T|=1$.  Parameterized complexity of \wmsc  for  $T=\emptyset$ on binary matroids, i.e. the case when we ask about the existence of a circuit of length at most $\ell$,  is known as \textsc{Even Set} in parameterized complexity and is a long standing open problem in the area.
The intractability of the problem changes when we restrict the input binary matroid to be regular, i.e. matroid which has a representation by rows of a 
 totally unimodular matrix.  
In particular,  Gaven{\v{c}}iak et al. show that  for $|T|=2$, \wmsc is fixed parameter tractable (\classFPT) being parameterized by $\ell$ by giving time 
$\ell^{\ell^{\ell^{\Oh(\ell)}}}n^{\cO(1)}$ algorithm,  where $n$ is the number of elements in the input matroid. 
Recall that all graphic and cographic matroids are regular and thus  algorithmic results for regular matroids yield algorithms on graphic and cographic matroids. 

\medskip
\noindent\textbf{Our results.} 
In this work we  show, and this is the main result of the paper,   that on regular matroids  \wmsc is  \classFPT being parameterized by $\ell$ without any additional condition on the size of the terminal set. Actually, we obtain the algorithm for  ``stronger" parameterization  $k=\ell-w(T)$. The running time of our algorithm is  $2^{\Oh(k^2\log k)}\cdot n^{\Oh(1)}$.

Our approach  is based on the classical decomposition theorem of Seymour \cite{Seymour81}. Roughly speaking, the theorem allows to decompose a regular matroid by making use of 1,2, and 3-sums into graphic, cographic matroids and matroid of a fixed size. (We refer to Section~\ref{sec:regular}
for the  precise formulation of the theorem).  
Thus to solve the problem on regular matroids, one has to understand how to solve a certain extension of the problem  on graphic and cographic matroids (matroids of constant size are usually trivial), and then employ Seymour's theorem to combine solutions. 
This is exactly the  approach which was taken by  Gaven{\v{c}}iak et al.  in   \cite{GavenciakKO12} for solving the problem for  $|T|=2$, and this is the approach we adapt in this paper.   However,  the details are very different.  
In particular, in order to use the general framework,  we have  to  solve the problem on cographic matroids, which is already quite non-obvious. 
Gaven{\v{c}}iak et al.  \cite{GavenciakKO12} adapt  the method of Kawarabayashi and Thorup  \cite{KawarabayashiT11} who used it to prove that
finding an edge-cut with at most $s$ edges that separates the input graph into at least $k$ component is \classFPT{} when parameterized by $s$.
This approach works for $|T|=2$ and probably may be extended for the case when the number of terminals is bounded, but we doubt that it could be applied for the parameterization by  $k=\ell-w(T)$. Hence,  in order to solve  \wmsc on cographic matroids, 
 we use the recent framework of \emph{recursive understanding} developed by Chitnis et al. in~\cite{ChitnisCHPP12} for the \emc problem. In this problem,  we are given a a connected graph
$G$  with a terminal set of  edges  $T\subseteq E(G)$ and  terminal vertex sets $R_1,R_2\subseteq V(G)$, and  
the task is  to find a cut $C$ of small weight satisfying a number of constraints: (a) this cut should be a minimal cut-set, (b) it should contain all edges of $T$, and (c)  it should separate $R_1$ from $R_2$, meaning that  $G-C$ contains   distinct connected components $X_1$ and $X_2$ such that $R_i\subseteq X_i$ for $i\in\{1,2\}$.
We believe that this problem is interesting on its own. 
Finally, constructing a solution by going through Seymour's matroid decomposition 
when $|T|$ is unbounded 
is also a non-trivial procedure requiring a careful analyses.

In our case, in order to solve  \wmsc on cographic matroids, we have to settle the following  problem. In the \emc problem,  we are given a a connected graph
$G$  with a terminal set of  edges  $T\subseteq E(G)$ and  terminal vertex sets $R_1,R_2\subseteq V(G)$. 
The task is  to find a cut $C$ of small weight satisfying a number of constraints: (a) this cut should be a minimal cut-set, (b) it should contain all edges of $T$, and (c)  it should separate $R_1$ from $R_2$, meaning that  $G-C$ contains   distinct connected components $X_1$ and $X_2$ such that $R_i\subseteq X_i$ for $i\in\{1,2\}$.
This problem is interesting on its own. The solution to this problem is 
non-trivial and here we use the recent framework of \emph{recursive understanding} developed by Chitnis et al. in~\cite{ChitnisCHPP12}. Finally, constructing a solution by going through Seymour's matroid decomposition, is also a non-trivial  procedure requiring a careful analyses. 

With a similar approach, we also obtain an algorithm for the following decision version of the problem, where we put no constrains on the size of the circuit.

 \defproblema{\probSCIR}{A binary matroid $M$ with a ground set $E$ and a set of terminals $T\subseteq E$.} 
{Decide whether there is a circuit $C$ of $M$  such that $T\subseteq C$.}
 
 We show that on regular matroids \probSCIR is \classFPT parameterized by $|T|$.

 \medskip
 
 The remaining part of the paper is organized as follows. In Section~\ref{sec:defs} we introduce basic notions used in the paper. In Section~\ref{sec:regular} we briefly introduce the fundamental structural results of Seymour~\cite{Seymour80a} about regular matroids. We also explain the refinement of the decomposition theorem of Seymour~~\cite{Seymour80a}  given by Dinitz and Kortsarz in~\cite{DinitzK14} that is more convenient for the algorithmic purposes. We conclude this section by some structural results about circuits in regular matroids. Section~\ref{sec:cut-graph} contains the algorithm for \emc{}. In Section~\ref{sec:wmsc} we give the  algorithm for~\wmsc{} on regular matroids. First, we solve the extended variant of \wmsc{} on matroids that are basic for the Seymour's decomposition~\cite{Seymour80a}. Then, we explain how to obtain the general result. We follow the same scheme in Section~\ref{sec:scir} for \scir{} parameterized by $|T|$. In Section~\ref{sec:conclusion} we provide some hardness observations and state open problems.
 
\section{Preliminaries}\label{sec:defs}
{\bf Parameterized Complexity.}
Parameterized complexity is a two dimensional framework
for studying the computational complexity of a problem. One dimension is the input size
$n$ and another one is a parameter $k$. It is said that a problem is \emph{fixed parameter tractable} (or \classFPT), if it can be solved in time $f(k)\cdot n^{O(1)}$ for some function $f$.  We refer to the recent books of Cygan et al.~\cite{CyganFKLMPPS15} and  Downey and Fellows~\cite{DowneyF13} for  an introduction  to parameterized complexity. 

It is standard for a parameterized algorithm to use \emph{(data) reduction rules}, i.e., polynomial  or \classFPT{} algorithms that either solve an instance or reduce it to another one that typically has a lesser input size and/or a lesser value of the parameter. 
We say that  reduction rule is \emph{safe} if it either correctly solves the problem or outputs an equivalent instance of the problem without increasing the parameter.

\medskip
\noindent
{\bf Graphs.} 
We consider finite undirected (multi) graphs that can have  loops or multiple edges. Throughout the paper we use $n$ to denote the number of vertices and $m$ the number of edges of considered graphs unless it crates confusion.
For a graph $G$ and a subset $U\subseteq V(G)$ of vertices, we write $G[U]$ to denote the subgraph of $G$ induced by $U$. We write $G-U$ to denote the subgraph of $G$ induced by $V(G)\setminus U$, and $G-u$ if $U=\{u\}$.
Respectively, for $S\subseteq E(G)$, $G[S]$ denotes the graph induced by $S$, i.e., the graph with the set of edges $S$ whose vertices are the vertices of $G$ incident to the edges of $S$.
We denote by $G-S$ the graph obtained from $G$ by the deletion of the edges of $G$; for a single element set, we write $G-e$ instead of $G-\{e\}$.
For $e\in E(G)$, we denote by $G/e$ the graph obtained by the contraction of $e$. Since we consider multigraphs, it is assumed that if $e=uv$, then to construct $G/e$, we delete $u$ and $v$, construct a new vertex $w$, and then for each $ux\in E(G)$ and each $vx\in E(G)$, where $x\in V(G)\setminus \{u,v\}$, we construct new edge $wx$ (and possibly obtain multiple edges), and for each $e'=uv\neq e$, we add a new loop $ww$.
For a vertex $v$, we denote by $N_G(v)$ the \emph{(open) neighborhood} of $v$, i.e., the set of vertices that are adjacent to $v$ in $G$.
For a set $S\subseteq V(G)$, $N_G(S)=(\cup_{v\in S}N_G(v))\setminus S$. We denote by $N_G[v]=N_G(v)\cup\{v\}$ the \emph{closed neighborhood} of $v$. To vertices $u$ and $v$ are \emph{true twins} if $N_G[u]=N_G[v]$, and $u$ and $v$ are \emph{false twins} if $N_G(u)=N_G(v)$.

\medskip
\noindent
{\bf Cuts.} Let $G$ be a graph. 
A \emph{cut} $(A,B)$ of a graph $G$ is a partition of $V(G)$ into two disjoint sets $A$ and $B$. 
A set $S\subseteq E(G)$ is an \emph{(edge) cut-set} if the deletion of $S$ increases the number of components. A cut-set $S$ is \emph{(inclusion) minimal} if any proper subset of $S$ is not a cut-set. 
A \emph{bridge} is a cut-set of size one. 
For two disjoint vertex sets of vertices $A$ and $B$ of a graph $G$, $E(A,B)=\{uv\in E(G)\mid u\in A,v\in B\}$.
Clearly, $E(A,B)$ is an edge  cut-set, and for any cut-set $S\subseteq E(G)$, there is a cut $(A,B)$ with $S=E(A,B)$. Notice also that
$E(A,B)$ is a minimal cut-set of  a connected graph $G$ if and only if $G[A]$ and $G[B]$ are connected.

\medskip
\noindent
{\bf Matroids.}
We refer to the book of Oxley~\cite{Oxley92} for the detailed introduction to matroid theory.
 Recall that a matroid $M$ is a pair $(E,\mathcal{I})$, where $E$ is a finite \emph{ground} set of $M$ and $\mathcal{I}\subseteq 2^E$ is a collection of \emph{independent} sets that satisfy the following three axioms:
\begin{itemize}
\item[I1.] $\emptyset\in \mathcal{I}$,
\item[I2.] if $X\in \mathcal{I}$ and $Y\subseteq X$, then $Y\in\mathcal{I}$,
\item[I3.] if $X,Y\in \mathcal{I}$ and $|X|<|Y|$, then there is $e\in Y\setminus X$ such that $X\cup\{e\}\in \mathcal{I}$.
\end{itemize}
We denote the ground set of $M$ by $E(M)$
and 
the set of independent set by $\mathcal{I}(M)$ or simply by $E$ and  $\mathcal{I}$ if it does not creates confusion. If a set $X\subseteq E$ is not independent, then $X$ is \emph{dependent}. 
Inclusion maximal independent sets are called \emph{bases} of $M$. We denote the set of bases by $\mathcal{B}(M)$ (or simply by $\mathcal{B}$).
The matroid $M^*$ with the ground set $E(M)$ such that $\mathcal{B}(M^*)=\mathcal{B}^*(M)=\{E\setminus B\mid B \in\mathcal{B}(M) \}$ is \emph{dual} to $M$. 
 
 An (inclusion) minimal dependent set is called a \emph{circuit} of $M$. We denote the set of all circuits of $M$ by $\mathcal{C}(M)$ or simply $\mathcal{C}$ if it does not create a confusion.  
The circuits satisfy the following conditions (\emph{circuit axioms}):
\begin{itemize}
\item[C1.] $\emptyset\notin \mathcal{C}$,
\item[C2.] if $C_1,C_2\in \mathcal{C}$ and $C_1\subseteq C_2$, then $C_1=C_2$,
\item[C3.] if $C_1,C_2\in \mathcal{C}$, $C_1\neq C_2$, and $e\in C_1\cap C_2$, then there is $C_3\in \mathcal{C}$ such that $C_3\subseteq (C_1\cup C_2)\setminus\{e\}$.
\end{itemize}  
An one-element circuit is called \emph{loop}, and if $\{e_1,e_2\}$ is a two-element circuit, then it is said that $e_1$ and $e_2$ are \emph{parallel}. An element $e$ is \emph{coloop} if 
$e$ is a loop of $M^*$ or, equivalently,
$e\in B$ for every $B\in\mathcal{B}$.  A \emph{circuit} of $M^*$ is called \emph{cocircuit} of $M$. 
 A set $X\subseteq E$ is a \emph{cycle} of $M$ if $X$ either empty or $X$ is a disjoint union of circuits. By $\mathcal{S}(M)$ (or $\mathcal{S}$) we denote the set of all cycles of $M$. 
The sets of circuits and cycles completely define matroid. Indeed, a set is independent if and only if it does not contain a circuit, and the circuits are exactly inclusion minimal nonempty cycles.

Let $M$ be a matroid, $e\in E(M)$. The matroid $M'=M-e$ is obtained by \emph{deleting} $e$ if $E(M')=E(M)\setminus\{e\}$ and $I(M')=\{X\in \mathcal{I}(M)\mid e\notin X\}$. 
We say that $M'$ is obtained from $M$ by \emph{adding a parallel to $e$  element} if $E(M')=E(M)\cup\{e'\}$, where $e'$ is a new element, and 
$\mathcal{I}(M')=\mathcal{I}(M)\cup \{(X\setminus\{e\})\cup\{e'\}\mid X\in \mathcal{I}(M)\text{ and }e\in X\}$. It is straightforward to verify that $\mathcal{I}(M')$ satisfies the axioms I.1-3, i.e., $M'$ is a matroid with the ground set $E(M)\cup\{e'\}$. It is also easy to see that $\{e,e'\}$ is a circuit, that is, $e$ and $e'$ are parallel elements of $M'$. 

We can observe the following.

\begin{observation}\label{obs:par}
Let $\{e_1,e_2\},C\in \mathcal{C}$ for a matroid $M$. If $e_1\in C$ and $e_2\notin C$, then $C'=(C\setminus\{e_1\})\cup\{e_2\}$ is a circuit.
\end{observation}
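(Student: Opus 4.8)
The plan is to obtain $C'$ as a circuit by two applications of the circuit elimination axiom~C3, using the minimality axiom~C2 to pin down that the circuit produced by elimination is exactly $C'$ and not merely a subset of it. First I would note that $\{e_1,e_2\}\neq C$, since $e_2$ lies in $\{e_1,e_2\}$ but not in $C$, and that $e_1\in\{e_1,e_2\}\cap C$. Axiom~C3 then yields a circuit $C_3$ with $C_3\subseteq(\{e_1,e_2\}\cup C)\setminus\{e_1\}=(C\setminus\{e_1\})\cup\{e_2\}=C'$. In particular $e_1\notin C'$, hence $e_1\notin C_3$; it remains to strengthen ``$C_3\subseteq C'$'' to ``$C_3=C'$''.

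The crux of the argument — and the only step requiring care — is ruling out that $C_3$ is a proper subset of $C'$. First, if $e_2\notin C_3$ then $C_3\subseteq C'\setminus\{e_2\}=C\setminus\{e_1\}$, a proper subset of the circuit $C$, contradicting~C2; so $e_2\in C_3$. Now suppose $C_3\subsetneq C'$ and choose $f\in C'\setminus C_3$. Then $f\neq e_2$ (as $e_2\in C_3$), so $f\in C\setminus\{e_1\}$. Since $e_1\notin C_3$ we have $C_3\neq\{e_1,e_2\}$, while $e_2\in C_3\cap\{e_1,e_2\}$, so C3 applied to $C_3$ and $\{e_1,e_2\}$ (eliminating $e_2$) gives a circuit $C_4$ with $C_4\subseteq(C_3\setminus\{e_2\})\cup\{e_1\}$. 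As $C_3\setminus\{e_2\}\subseteq C'\setminus\{e_2\}=C\setminus\{e_1\}$, this forces $C_4\subseteq C$, hence $C_4=C$ by~C2. But $C_4\subseteq(C_3\setminus\{e_2\})\cup\{e_1\}$ together with $f\neq e_1$ and $f\notin C_3$ gives $f\notin C_4=C$, contradicting $f\in C$. Therefore $C_3=C'$, and $C'$ is a circuit.

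I expect the main obstacle to be precisely this second elimination step: the first application of~C3 only delivers a subset of $C'$, and promoting it to an equality requires ``returning'' to $C$ via a further elimination and invoking the minimality of $C$. A conceptually different route would go through the rank function — $C\setminus\{e_1\}$ is independent of rank $|C|-1$, and substituting the parallel element $e_2$ for $e_1$ preserves both rank and cardinality, so $C'$ is a minimal dependent set — but since only the circuit axioms have been introduced at this point, the purely combinatorial argument above is the cleaner choice.
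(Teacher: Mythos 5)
Your proposal is correct and follows essentially the same route as the paper: one application of C3 eliminating $e_1$ to obtain a circuit inside $C'$, then the observation that this circuit must contain $e_2$ (else it sits inside $C\setminus\{e_1\}$, contradicting minimality of $C$), followed by a second elimination of $e_2$ producing a circuit inside a proper subset of $C$, which is absurd. Your version merely makes the witness element $f$ and the use of C2 explicit where the paper leaves them implicit.
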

 
\begin{proof} 
By the axiom C3, $(\{e_1,e_2\}\cup C)\setminus\{e_1\}=(C\setminus\{e_1\})\cup\{e_2\}=C'$ contains a circuit $C''$. Suppose that $C''\neq C'$. Notice that because $C\setminus \{e_1\}$ contains no circuit, 
$e_2\in C''$. As $e_1\notin C''$, we obtain that $(\{e_1,e_2\}\cup C'')\setminus\{e_2\}$ contains a circuit, but $(\{e_1,e_2\}\cup C'')\setminus\{e_2\}$ is a proper subset of $C$; a contradiction. Hence, $C''=C'$, i.e., $C'$ is a circuit. 
\end{proof}

\medskip
\noindent
{\bf Matroids associated with graphs.} Let $G$ be a graph. The \emph{cycle} matroid $M(G)$ has the ground set $E(G)$ and a set $X\subseteq E(G)$ is independent if $X=\emptyset$ or $G[X]$ has no cycles.  Notice that $C$ is a circuit of $M(G)$ if and only if $C$ induces a cycle of $G$.  The \emph{bond} matroid $M^*(G)$  with the ground set $E(G)$ is dual to $M(G)$, and  $X$ is a circuit of $M^*(G)$ if and only if $X$ is a minimal cut-set of $G$. Respectively, \wmsc{} for a cycle matroid $M(G)$ is to decide  whether $G$ has a cycle $C$ of weight at most $\ell$ that goes through the edges of $T$, and  for a bond matroid $M^*(G)$ it is to decide whether $G$ has a minimal cut-set $C$ of weight at most $\ell$ that contains $T$. 
We say that $M$ is a \emph{graphic} matroid if $M$ is isomorphic to $M(G)$ for some graph $G$. Respectively, $M$ is \emph{cographic} if 
there is graph $G$ such that  $M$ is isomorphic to $M^*(G)$.
Notice that $e\in E$ is a loop of a cycle matroid $M(G)$ if and only if $e$ is a loop of $G$, and 
$e$ is a loop of $M^*(G)$ if and only if $e$ is a bridge of $G$.

Notice also that by the addition of an element parallel to $e\in E$ for $M(G)$ we obtain $M(G')$ for the graph $G'$ obtained by adding a new edge with the same end vertices as $e$. Respectively,  by adding of an element parallel to $e\in E$ for $M^*(G)$ we obtain $M^*(G')$ for the graph $G'$ obtained by subdividing $e$. Hence, adding or deleting a parallel element of graphic or cographic matroid does not put it outside the corresponding class.

\medskip
\noindent
{\bf Matroid representations.}  Let $M$ be a matroid and let $F$ be a field.  An $n\times m$-matrix $A$ over $F$ is a \emph{representation of $M$ over $F$} if there is one-to-one correspondence $f$ between $E$ and the set of columns of $A$ such that for any $X\subseteq E$, $X\in \mathcal{I}$ if and only if the columns $f(X)$ are linearly independent (as vectors of $F^n$); if $M$ has such a representation, then it is said that $M$ has a \emph{representation over $F$}. In other words, $A$ is a representation of $M$ if $M$ is isomorphic to the \emph{column matroid} of $A$, i.e., the matroid whose ground set is the set of columns of $A$ and a set of columns is independent if and only if these columns are linearly independent. 
A matroid is \emph{binary} if it can be represented over ${\rm GF}(2)$. A matroid is \emph{regular} if it can be represented over any field. In particular, graphic and cographic matroids are regular.

 As we are working with binary matroids, we assume that for an input matroid, we are given  its representation over ${\rm GF}(2)$. 
 Then it can be checked in polynomial time whether a subset of the ground set is independent by checking the linear independence of the corresponding columns.

\section{Structure of regular matroids}\label{sec:regular}
 Our results for regular matroids use  the structural decomposition for regular matroids given by  
Seymour~\cite{Seymour80a}. Recall that, for two set $X$ and $Y$, $X\bigtriangleup Y=(X\setminus Y)\cup (Y\setminus X)$ denotes the \emph{symmetric difference} of $X$ and $Y$. For our purpose we also need the following observation.

\begin{observation}[see~\cite{Oxley92}]\label{obs:symm}
Let $C_1$ and $C_2$ be circuits (cycles) of a binary matroid $M$. Then $C_1\bigtriangleup C_2$ is a cycle of $M$.
\end{observation}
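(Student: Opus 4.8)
The plan is to fix a representation of $M$ over $\mathrm{GF}(2)$ and translate the statement into linear algebra. Let $A$ be such a representation, and for $e\in E$ let $A_e\in \mathrm{GF}(2)^n$ be the corresponding column; for $X\subseteq E$ write $\sigma(X)=\sum_{e\in X}A_e$, the sum taken in $\mathrm{GF}(2)^n$. The heart of the argument is the claim that $X$ is a cycle of $M$ if and only if $\sigma(X)=0$. Granting this, the observation is immediate: since coefficients live in $\mathrm{GF}(2)$ we have $\mathbf 1[e\in C_1]+\mathbf 1[e\in C_2]=\mathbf 1[e\in C_1\bigtriangleup C_2]$ for every $e\in E$, hence $\sigma(C_1\bigtriangleup C_2)=\sigma(C_1)+\sigma(C_2)=0+0=0$, so $C_1\bigtriangleup C_2$ is a cycle. (This handles both readings of the statement, since every circuit is in particular a cycle.)

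It remains to prove the claim. For the forward direction, first take a circuit $C$. Since $C$ is dependent, its columns satisfy a nontrivial relation $\sum_{e\in C}\lambda_e A_e=0$ with $\lambda_e\in\mathrm{GF}(2)$; if some $\lambda_f=0$ then the columns indexed by the proper subset $\{e\in C:\lambda_e=1\}$ would already be dependent, contradicting minimality of $C$. Hence $\lambda_e=1$ for all $e\in C$, i.e.\ $\sigma(C)=0$. An arbitrary cycle is either empty (so $\sigma=0$ trivially) or a disjoint union of circuits $C_1,\dots,C_t$, whence $\sigma(X)=\sum_i\sigma(C_i)=0$. For the converse, assume $\sigma(X)=0$ and induct on $|X|$: if $X=\emptyset$ we are done; otherwise the columns indexed by $X$ are dependent, so $X$ contains a circuit $C$, and then $\sigma(X\setminus C)=\sigma(X)-\sigma(C)=0$ with $|X\setminus C|<|X|$, so by induction $X\setminus C$ is a disjoint union of circuits, each contained in $X\setminus C$ and hence disjoint from $C$; thus $X$ is a cycle.

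The only genuinely non-routine ingredient is the use of the field being $\mathrm{GF}(2)$: it is exactly over the two-element field that a minimal dependent set of columns has the all-ones dependency vector, which is what lets a circuit act as a ``characteristic vector'' and makes symmetric difference correspond to coordinatewise addition. Over larger fields this fails, and the statement is genuinely false for non-binary matroids, so no argument avoiding binariness can succeed. One could alternatively bypass representations and argue directly from the circuit elimination axiom C3 by induction on $|C_1\cap C_2|$, but that bookkeeping is more delicate, and since we are in any case handed a $\mathrm{GF}(2)$-representation the linear-algebraic route above is the cleaner one.
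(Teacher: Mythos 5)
Your proof is correct. Note that the paper itself does not prove this observation at all: it is stated with a pointer to Oxley's book, so there is no in-paper argument to compare against. Your route --- fix a ${\rm GF}(2)$-representation, prove that a set $X$ is a cycle if and only if its characteristic vector lies in the kernel (i.e.\ $\sigma(X)=0$), and then use that symmetric difference of sets is coordinatewise addition of characteristic vectors over ${\rm GF}(2)$ --- is the standard textbook proof, and both directions of your key claim are argued soundly: minimality of a circuit forces the all-ones dependency, and conversely $\sigma(X)=0$ with $X\neq\emptyset$ yields a circuit inside $X$ which you peel off by induction. Since the paper defines a binary matroid precisely as one representable over ${\rm GF}(2)$ and even assumes such a representation is given as input, your reliance on the representation is fully legitimate; the purely axiomatic alternative you mention (induction on $|C_1\cap C_2|$ via circuit elimination) would also work but is indeed messier and buys nothing here.
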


To describe the decomposition of matroids we need the notion of ``$r$-sums''  of matroids. However for our purpose it is sufficient that we restrict ourselves to binary matroids and  up to $3$-sums. We refer to~\cite[Chapter 8]{Truemper92} for a more detailed introduction to matroid sums. 
Let $M_1$ and $M_2$ be binary matroids. The \emph{sum} of $M_1$ and $M_2$, denoted by $M_1\bigtriangleup M_2$, is the matroid $M$ with the ground set $E(M_1)\bigtriangleup E(M_2)$. The cycles of $M$ are all subsets $C\subseteq E(M_1)\bigtriangleup E(M_2)$ of the form  $C_1\bigtriangleup C_2$, where $C_1$ is a cycle of $M_1$ and  $C_2$ is a cycle of $M_2$.
This does indeed define a binary matroid~\cite{Seymour80a} 
as can be seen from Observation~\ref{obs:symm}, 
in which the circuits are the minimal nonempty cycles and the independent sets are (as always) the sets that do not contain any circuit. 
For our purpose the following special cases of matroid sums are sufficient. 
\begin{enumerate}
\setlength{\itemsep}{-2pt}
\item  If $E(M_1)\cap E(M_2)=\emptyset$ and  $E(M_1), E(M_2)\neq\emptyset$, then  $M$ is the \emph{$1$-sum} of $M_1$ and $M_2$ and we write $M=M_1\oplus_1 M_2 $.
\item  If $|E(M_1)\cap E(M_2)|=1$, the unique $e\in E(M_1)\cap E(M_2)$ is not a loop or coloop of $M_1$ or $M_2$, and $|E(M_1)|,|E(M_2)|\geq 3$, then $M$ is the \emph{$2$-sum} of $M_1$ and $M_2$ and we write $M=M_1\oplus_2 M_2$.
\item  If $|E(M_1)\cap E(M_2)|=3$, the 3-element set $Z=E(M_1)\cap E(M_2)$ is a circuit of $M_1$ and $M_2$, $Z$ does not contain a cocircuit of $M_1$ or $M_2$, 
and $|E(M_1)|,|E(M_2)|\geq 7$, 
then $M$ is the \emph{$3$-sum} of $M_1$ and $M_2$ and we write $M=M_1\oplus_3 M_2$.
\end{enumerate}
If $M=M_1\oplus_rM_2$ for some $r\in\{1,2,3\}$, then we write $M=M_1\oplus M_2$.

\begin{definition}
A \emph{$\{1,2,3\}$-decomposition} of a matroid $M$ is a collection of matroids $\mathcal{M}$, called the \emph{basic matroids} and a rooted binary tree $T$ in which $M$ is the root and the elements of $\mathcal{M}$ are the leaves such that any internal node is either $1$-, $2$- or $3$-sum of its children. 
\end{definition}

We also need the special binary matroid $R_{10}$ to be able to define the decomposition theorem for regular matroids. It is represented over   ${\rm GF}(2)$ by the 
$5\times 10$-matrix whose columns are formed by vectors that have exactly three non-zero entries (or rather three ones) and no two columns are identical.  
Now we are ready to give the 
 decomposition theorem for regular matroids due to Seymour~\cite{Seymour80a}.

\begin{theorem}[\cite{Seymour80a}]\label{thm:decomp}
Every regular matroid $M$ has an  $\{1, 2, 3\}$-decomposition in which every basic matroid is either graphic, cographic, or isomorphic to $R_{10}$. Moreover,
such a decomposition (together with the graphs whose cycle and bond matroids are isomorphic to the corresponding basic graphic and cographic matroids) can be found in time polynomial in $|E(M)|$.
\end{theorem}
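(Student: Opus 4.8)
The statement is Seymour's regular-matroid decomposition theorem, so the plan is to reconstruct Seymour's original argument~\cite{Seymour80a}, running an induction on $|E(M)|$ in which the leaves of the decomposition tree are precisely the regular matroids that cannot be split off by a $1$-, $2$-, or $3$-sum. The entire content of the theorem is to show that those indecomposable pieces are graphic, cographic, or $R_{10}$.

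First I would dispose of the low-connectivity cases. If $M$ is not connected, then $M = M_1\oplus_1 M_2$ with both $M_i$ regular and strictly smaller, and we recurse on each. If $M$ is connected but has a $2$-separation, the classical analysis of $2$-separations (Bixby, Cunningham, Seymour) gives $M = M_1\oplus_2 M_2$ with the $M_i$ regular and smaller, and again we recurse. Hence we may assume $M$ is $3$-connected. Next, combine Tutte's excluded-minor characterization of regularity (a binary matroid is regular iff it has no $F_7$ or $F_7^{*}$ minor) with Seymour's Splitter Theorem (see~\cite{Oxley92}): since $R_{10}$ is a splitter for the class of regular matroids, the only $3$-connected regular matroid with an $R_{10}$ minor is $R_{10}$ itself, which we record as a leaf of the tree.

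The crux is the case where $M$ is $3$-connected, regular, and $R_{10}$-minor-free. Here one must prove that $M$ is graphic, or cographic, or has a $3$-separation $(X_1,X_2)$ with $|X_1|,|X_2|\ge 4$ that is exact enough to write $M = M_1\oplus_3 M_2$ with the $M_i$ regular and smaller. Seymour's route introduces the auxiliary $12$-element matroid $R_{12}$ and establishes two facts: (i) every $3$-connected regular matroid with no $R_{10}$ and no $R_{12}$ minor is graphic or cographic, proved by a delicate case analysis of how the obstructions to graphicness ($M^{*}(K_5), M^{*}(K_{3,3})$) and to cographicness ($M(K_5), M(K_{3,3})$) can occur simultaneously inside a regular matroid; and (ii) the canonical $3$-separation of $R_{12}$ lifts along minors, so an $R_{12}$ minor in $M$ forces the desired exact $3$-separation of $M$. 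Feeding (i) and (ii) back into the connectivity reductions and the Splitter Theorem closes the induction. This structural step --- the long verification in (i) and the minor-lifting in (ii) --- is by far the main obstacle; the connectivity bookkeeping around it is routine.

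For the algorithmic claim, observe that every step of the recursion is polynomial on a binary (equivalently, totally unimodular) representation. A $k$-separation with $k\le 3$ can be detected and extracted in polynomial time by matroid-connectivity algorithms; graphicness is decidable in polynomial time by the algorithms of Tutte and of Bixby--Cunningham, which moreover output a graph realizing the matroid, and cographicness follows by first dualizing; recognizing $R_{10}$ is immediate since it has only $10$ elements; and an $R_{12}$ minor together with its inducing $3$-separation can be found in polynomial time (see~\cite{Truemper92}). Since the recursion tree has at most $|E(M)|$ internal nodes and each node is processed in polynomial time, the full $\{1,2,3\}$-decomposition, together with the graphs witnessing the graphic and cographic leaves, is produced in time polynomial in $|E(M)|$; the refinement of Dinitz and Kortsarz~\cite{DinitzK14} then reorganizes this output into the form used in the later sections.
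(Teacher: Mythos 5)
There is nothing in the paper to compare against here: Theorem~\ref{thm:decomp} is invoked as a black box, cited to Seymour~\cite{Seymour80a} (with the constructive/polynomial-time aspect coming from the known algorithmic treatments, cf.~\cite{Truemper92}, and the refinement actually used later being the Dinitz--Kortsarz version, Theorem~\ref{thm:decomp-good}). So the paper offers no proof, and your task was effectively to reproduce Seymour's argument. Your outline does follow Seymour's strategy faithfully: reduce along $1$- and $2$-separations, use that $R_{10}$ is a splitter for regular matroids, and in the $3$-connected, $R_{10}$-minor-free case invoke the $R_{12}$ dichotomy (no $R_{12}$ minor implies graphic or cographic; an $R_{12}$ minor forces a $3$-separation of $M$ itself). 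The algorithmic remarks (Cunningham--Edmonds-style separation finding, Tutte/Bixby--Cunningham graphicness recognition with a witnessing graph, dualizing for cographicness, bounded-size checks for $R_{10}$ and $R_{12}$) are also consistent with how the polynomial-time version is usually justified.

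As a proof, however, the proposal has a genuine gap: the two facts you label (i) and (ii) \emph{are} the theorem, and you only name them. Establishing that every $3$-connected regular matroid with no $R_{10}$ and no $R_{12}$ minor is graphic or cographic, and that the $3$-separation of an $R_{12}$ minor lifts to a $3$-separation of $M$, occupies the bulk of Seymour's paper; a proof that defers both is a roadmap, not an argument. A second, smaller elision: a $3$-separation $(X_1,X_2)$ of a binary matroid does not directly yield $M=M_1\oplus_3 M_2$ in the sense used here, since the $3$-sum is taken along a common triangle $Z$ that belongs to neither $X_1$ nor $X_2$; one must show how to extend both sides by such a circuit $Z$ (and verify the cocircuit and size conditions in the definition of the $3$-sum), which your sketch passes over. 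Within the present paper none of this needs to be redone --- the correct move is exactly what the authors do, namely cite \cite{Seymour80a} and then work with the Dinitz--Kortsarz conflict-tree form --- but as a standalone derivation your text should either carry out (i), (ii) and the separation-to-sum conversion or explicitly cite them rather than present them as steps of the proof.
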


For our algorithmic purposes we will not use the Theorem~\ref{thm:decomp} but rather a modification 
proved by Dinitz and Kortsarz in~\cite{DinitzK14}. Dinitz and Kortsarz in~\cite{DinitzK14} observed that  some restrictions in the definitions of $2$- and $3$-sums are not important for the algorithmic purposes. In particular, in the definition of the $2$-sum, the unique $e\in E(M_1)\cap E(M_2)$ is not a loop or coloop of $M_1$ or $M_2$, and $|E(M_1)|,|E(M_2)|\geq 3$ could be dropped. Similarly, in the definition of $3$-sum the conditions  that $Z=E(M_1)\cap E(M_2)$ does not contain a cocircuit of $M_1$ or $M_2$, and $|E(M_1)|,|E(M_2)|\geq 7$ could be dropped. We define \emph{extended} $1$-, $2 $- and $3$-sums by omitting these restrictions.  Clearly, Theorem~\ref{thm:decomp} holds if we replace sums by extended sums in the definition of the $\{1,2,3\}$-decomposition. To simplify notations, we use $\oplus_1,\oplus_2,\oplus_3$ and $\oplus$ to denote these extended sums. Finally, we also need the notion of a conflict graph associated with a \emph{$\{1,2,3\}$-decomposition} of a 
matroid $M$ given by  Dinitz and Kortsarz in~\cite{DinitzK14}.

\begin{definition}[\cite{DinitzK14}]
Let \dpair\ be a $\{1, 2, 3\}$-decomposition of  a matroid $M$. The \emph{intersection} (or \emph{conflict}) graph of \dpair\  is the graph $G_T$ with the vertex set $\mathcal{M}$ such that distinct $M_1,M_2\in \mathcal{M}$ are adjacent in $G_T$ if and only if $E(M_1)\cap E(M_2)\neq\emptyset$. 
\end{definition}

Dinitz and Kortsarz in~\cite{DinitzK14} showed how to modify a given decomposition in order to make the conflict graph a forest. In fact they proved a slightly stronger condition that for any $3$-sum (which by definition is summed along a circuit of size $3$), the circuit in the intersection is contained entirely in two of the lowest-level matroids. In other words, while the process of summing matroids might create new circuits that contain elements that started out in different matroids, any circuit that is used as the intersection of a sum existed from the very beginning.

We state the result of \cite{DinitzK14} in the following form that is convenient for us.

\begin{theorem}[\cite{DinitzK14}]\label{thm:decomp-good}
For a given regular matroid $M$, there is a (conflict) tree $\mathcal{T}$, whose set of nodes is a set of matroids $\mathcal{M}$, where each element of $\mathcal{M}$ is a graphic or cographic matroid, or a matroid obtained from $R_{10}$ by (possible) deleting some elements and  adding  parallel elements,  that has the following properties:
\begin{itemize}
\item[i)]  if two distinct matroids $M_1,M_2\in \mathcal{M}$ have nonempty intersection, then $M_1$ and $M_2$ are adjacent in $\mathcal{T}$,
\item[ii)] for any distinct $M_1,M_2\in \mathcal{M}$, $|E(M_1)\cap E(M_2)|=0,~1$ or $3$,
\item[iii)] $M$ is obtained by the consecutive performing extended 1, 2 or 3-sums for adjacent matroids in any order. 
\end{itemize}
Moreover, $\mathcal{T}$ can be constructed in a polynomial time.
\end{theorem}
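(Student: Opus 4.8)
The plan is to take Seymour's decomposition from Theorem~\ref{thm:decomp}, viewed as a rooted binary tree $T_0$ whose internal nodes are extended $1$-, $2$- and $3$-sums (by the remark after Theorem~\ref{thm:decomp} the nondegeneracy conditions may be dropped) and whose leaves are the basic matroids, and to reshape it into the desired tree $\mathcal T$ on the leaf set $\mathcal M$ by a bounded sequence of \emph{local} modifications, each of which only deletes an element from, or adds a parallel copy of an element to, a single basic matroid. Since graphic and cographic matroids are closed under these operations and $R_{10}$ turns into ``$R_{10}$ with deletions and parallel additions,'' the stated form of the basic matroids is preserved. The first step is to arrange (by light preprocessing of $T_0$, using that in Seymour's decomposition summing elements do not belong to $E(M)$) that every element lies in the ground sets of at most two basic matroids — an element of $E(M)$ in exactly one, and an internal (summing) element in exactly two, one in each of the two subtrees below the node at which it is summed away. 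The only two obstacles to reading $\mathcal T$ off from $T_0$ are then: (a) the (at most two) basic matroids carrying a common element need not be adjacent in $T_0$; and (b) the three elements of a $3$-sum triangle may sit in three distinct basic matroids on one side of the sum, so that the triangle is not the intersection of a single pair.

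Obstacle (a) is cosmetic. For $1$-sums (summing set $\emptyset$) and $2$-sums (a single summing element) each internal element determines a unique pair of basic matroids, and since no element lies in three basic matroids these pairs are mutually consistent; because $\bigtriangleup$ (matroid sum) is associative and commutative, we may take $\mathcal T$ to be any tree on $\mathcal M$ in which each such pair is an edge labelled by its summing set, and recover $M$ by performing the sums along the edges of $\mathcal T$ in any order, changing no matroid. The content lies entirely in obstacle (b).

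To deal with (b), fix a $3$-sum $M_1\oplus_3 M_2$ in the tree, summed along a triangle $Z=\{z_1,z_2,z_3\}$, and work inside the subdecomposition producing $M_1$ (and symmetrically $M_2$); the claim, proved by induction on the depth of that subdecomposition, is that $Z$ can be pulled into a single basic matroid. If the subdecomposition is a single basic matroid there is nothing to do; otherwise let $M_1=P\oplus Q$ be its topmost sum with summing set $W$, $|W|\in\{0,1,3\}$. Since $Z$ is a cycle of $M_1$, write $Z=Z_P\bigtriangleup Z_Q$ with $Z_P$ a cycle of $P$, $Z_Q$ a cycle of $Q$, and (using $Z\cap W=\emptyset$) $Z_P\cap Z_Q=Z_P\cap W=Z_Q\cap W=:W_0\subseteq W$, so $(Z_P\setminus W_0)\sqcup(Z_Q\setminus W_0)=Z$. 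If $Z_Q=\emptyset$ then $Z=Z_P$ is a circuit of $P$ (it is dependent there, and any cycle of $P$ inside $Z$ is a cycle of $M_1$ inside $Z$, hence empty or $Z$) and we recurse into $P$; symmetrically if $Z_P=\emptyset$. If $W_0=\emptyset$ with both parts nonempty, $Z$ would be a disjoint union of two nonempty cycles of $M_1$, contradicting minimality, so this does not occur. In the remaining cases $W_0\neq\emptyset$, and when $|W|=3$ one replaces $(Z_P,Z_Q)$ by $(Z_P\bigtriangleup W,Z_Q\bigtriangleup W)$ (Observation~\ref{obs:symm}) to reduce to $|W_0|=1$, say $W_0=\{w\}$; then the private parts have sizes $1$ and $2$ (neither can be empty, else $w$ would be both a loop and, when $|W|=3$, a member of the triangle $W$, impossible by circuit minimality), say $Z_P\setminus W=\{a\}$ and $Z_Q\setminus W=\{b,c\}$. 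Thus $\{a,w\}$ is a $2$-circuit of $P$ ($a$ and $w$ are parallel in $P$) and $\{w,b,c\}$ is a triangle of $Q$. The \emph{move} is: delete $a$ from the unique basic matroid of $P$'s subdecomposition containing it, and add a new element $a$, parallel to $w$, into the unique basic matroid of $Q$'s subdecomposition containing $w$; one checks that $P\oplus_3 Q$ is still $M_1$ (the role of $a$ in $M_1$ was dictated by $w$ both before and after) and that $\{a,b,c\}=Z$ is now a triangle of the modified $Q$, so we recurse into $Q$. At the end of the recursion $Z$ lies in one basic matroid on the $M_1$ side, and the same on the $M_2$ side makes $Z$ precisely the intersection of two basic matroids — the triangle ``existed from the very beginning.''

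After doing this for every $3$-sum, every summing set lies inside exactly two basic matroids, these pairs are still consistent (the relocations never place a triangle element into a third matroid, and distinct summing triangles are element-disjoint since each element is summed away at a unique node), and obstacle (a) is cleared exactly as before, giving a tree $\mathcal T$ on $\mathcal M$. Property (i) holds since every pair sharing an element is an edge of $\mathcal T$; property (ii) since a shared element is a summing element of the unique edge between its two matroids and summing sets have size $0,1$ or $3$; property (iii) since the sums may be performed along the edges in any order. The running time is polynomial: Seymour's decomposition is polynomial, each move is polynomial, and the number of moves (hence the number of parallel elements introduced) is bounded by a strictly decreasing potential such as the sum over all $3$-sum triangles of the pairwise tree-distances of their elements. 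The step I expect to be the real obstacle is the \emph{move} together with its verification: checking cleanly that deleting the parallel element $a$ on one side and reintroducing it parallel to $w$ on the other leaves the composed matroid $M_1$ unchanged and does not disturb the summing sets of ancestor sums, and that consolidating one triangle never undoes a previously consolidated one. The case analysis on $|W|$ and $|W_0|$ is routine but must carefully exclude the degeneracies — a triangle element being a loop or coloop, or a summing element — that passing to extended sums would otherwise permit.
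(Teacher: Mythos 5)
The paper itself contains no proof of this statement: Theorem~\ref{thm:decomp-good} is imported from Dinitz and Kortsarz~\cite{DinitzK14}, and the surrounding text only paraphrases what they prove (every summing triangle ``existed from the very beginning''). Your reconstruction follows exactly that route, and its core is sound: starting from Theorem~\ref{thm:decomp}, if a triangle $Z$ splits across the topmost sum $P\oplus Q$ of one side as $Z=Z_P\bigtriangleup Z_Q$ with $Z_P=\{a,w\}$ and $Z_Q=\{w,b,c\}$, then deleting $a$ from its leaf and re-adding it parallel to $w$ in the leaf of $Q$'s subtree containing $w$ leaves the composed matroid unchanged (cycles of $(P-a)\oplus Q'$ and of $P\oplus Q$ coincide, since the $2$-cycle $\{a,w\}$ can be shuttled across the sum), makes $Z$ a triangle of $Q'$, and preserves all other summing sets and previously consolidated triangles because summing sets of distinct nodes are disjoint and deletions/parallel additions destroy no circuits avoiding the touched elements. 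Two points are asserted where an argument is needed. First, excluding an empty private part in the $2$-sum case uses that the summing element is not a loop; this comes from the genuine (non-extended) $2$-sums of Seymour's theorem and must be noted to survive your moves (neither deletion nor parallel addition creates loops). Second, the final assembly of $\mathcal T$: for a tree containing all designated leaf pairs to exist, and for property~(ii) to hold, you need that these pairs form a forest and that no two leaves share elements of two different summing sets. Both follow from the invariant that each node's pair has one leaf in each of that node's two subtrees (so only one pair crosses any node's bipartition, giving acyclicity by induction, and distinct nodes designate distinct pairs), but ``mutually consistent'' as written does not cover this. With those additions your proof is complete and is, in approach, the same as the cited one.
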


If $\mathcal{T}$ is a conflict tree for a matroid $M$, we say that $M$ is defined by $\mathcal{T}$. 

In our algorithms we are working with rooted conflict trees. Fixing a  {\em root} $r$ in  $\mathcal{T}$  defines the natural parent-child, descendant and ancestor relationships on the nodes of $\mathcal{T}$.  Our algorithms are based on performing {\em bottom-up} traversal of the tree $\mathcal{T}$. 
We say that a node $M_\ell$ of $\mathcal{T}$ is a \emph{leaf}  if it has no children, and $M_s$ is a \emph{sub-leaf} if it has at least one child and the children of $M_s$ are leaves. Let $M_\ell$ be a leaf and let $M_s$ be its adjacent sub-leaf.  We say that $M_\ell$ is \emph{$s$-leaf} for $s\in\{1,2,3\}$ if the edge between $M_s$ and $M_\ell$ corresponds to the extended $s$-sum. 

\medskip
As in \wmsc{} and \scir{} we are looking for circuits containing terminals, we need some results about the structure of circuits of  matroids and matroid sums.

\begin{lemma}\label{lem:circ-triangle}
Let $Z=\{e_1,e_2,e_3\}$ be a circuit of a binary matroid $M$. Let also $C$ be a circuit of $M$ such that $C\cap Z=\{e_3\}$. If $C'=C\bigtriangleup Z$ is not a circuit, then $C'$ is a disjoint union of two circuits $C_1$ and $C_2$ containing $e_1$ and $e_2$ respectively, and $C_1\bigtriangleup Z$ and  $C_2\bigtriangleup Z$ are circuits.
\end{lemma}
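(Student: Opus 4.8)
The plan rests on two elementary facts about binary matroids: the symmetric difference of a circuit with a cycle is again a cycle (Observation~\ref{obs:symm}), and a circuit, being inclusion-minimal among dependent sets, cannot properly contain a nonempty cycle. Since $C \cap Z = \{e_3\}$, we have $C' = C \bigtriangleup Z = (C \setminus \{e_3\}) \cup \{e_1, e_2\}$, which is a cycle; as it is assumed not to be a circuit, it is a disjoint union $C_1 \sqcup \cdots \sqcup C_k$ of $k \ge 2$ circuits. Observe that $e_1, e_2 \in C'$ while $e_3 \notin C'$ (the latter because $e_3 \in C$ and $e_3 \in Z$).

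I would first pin down this decomposition exactly. If some part, say $C_1$, contained both $e_1$ and $e_2$, then $C_1 \cap Z = \{e_1, e_2\}$, and a short computation gives $C = C' \bigtriangleup Z = (C_1 \bigtriangleup Z) \sqcup C_2 \sqcup \cdots \sqcup C_k$; since $k \ge 2$, the circuit $C_2$ would then be a nonempty cycle inside the circuit $C$, forcing $C_2 = C$ and every other part of the union --- in particular $C_1 \bigtriangleup Z$, which contains $e_3$ --- to be empty, a contradiction. Hence $e_1$ and $e_2$ lie in different parts, say $e_1 \in C_1$ and $e_2 \in C_2$. Next, any further part $C_i$ with $i \ge 3$ is disjoint from $Z$, hence unaffected by $\bigtriangleup Z$, so $C_i \subseteq C$ and thus $C_i = C$, again contradicting $e_3 \in C \setminus C_i$. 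Therefore $k = 2$, that is $C' = C_1 \sqcup C_2$, and along the way one obtains $C_i \cap Z = \{e_i\}$ for $i \in \{1,2\}$, so each $C_i \bigtriangleup Z$ is a cycle.

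It remains to show that $C_1 \bigtriangleup Z$ (and, symmetrically, $C_2 \bigtriangleup Z$) is a circuit. The crucial identity, obtained by substituting $C = (C_1 \setminus \{e_1\}) \cup (C_2 \setminus \{e_2\}) \cup \{e_3\}$, is $C_1 \bigtriangleup Z = C \bigtriangleup C_2$; I also record that $C \cap C_2 = C_2 \setminus \{e_2\}$, which is nonempty because $e_2$ is not a loop (it lies in the $3$-element circuit $Z$). Let $D$ be any circuit with $D \subseteq C \bigtriangleup C_2 = (C_1 \setminus \{e_1\}) \cup \{e_2, e_3\}$. If $e_2 \notin D$, then $D \subseteq (C_1 \setminus \{e_1\}) \cup \{e_3\} \subseteq C$, so $D = C$, whence $C \subseteq C \bigtriangleup C_2$ and therefore $C \cap C_2 = \emptyset$ --- impossible. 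If $e_2 \in D$, then $D \cap C_2 = \{e_2\}$, so $D \bigtriangleup C_2 = (D \setminus \{e_2\}) \sqcup (C_2 \setminus \{e_2\})$ is a nonempty cycle contained in $C$, hence equal to $C$, which forces $D = C \bigtriangleup C_2$. In both cases the only circuit contained in the nonempty cycle $C \bigtriangleup C_2$ is $C \bigtriangleup C_2$ itself, so $C \bigtriangleup C_2$ is a circuit, completing the proof.

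The symmetric-difference bookkeeping is routine; the genuinely delicate point is the last paragraph, namely ruling out that $C_1 \bigtriangleup Z$ splits into several circuits. Contradicting such a splitting head-on appears to lead to an infinite regress, and the device that avoids it is to reason about an \emph{arbitrary} circuit $D$ sitting inside $C \bigtriangleup C_2$ and to play it against both $C$ and $C_2$ via the principle that a circuit cannot properly contain a nonempty cycle; here it is essential that none of $e_1, e_2, e_3$ is a loop.
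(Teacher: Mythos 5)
Your proof is correct. The first half (that $C'$ splits into exactly two circuits $C_1\ni e_1$, $C_2\ni e_2$ with $C_i\cap Z=\{e_i\}$) is essentially the paper's argument: every circuit in the decomposition of $C'$ must meet $Z$, since otherwise it would be a proper subset of the circuit $C$. For the second half you take a genuinely different route. The paper argues by contradiction and \emph{reuses} the first half: if $C_1\bigtriangleup Z$ is not a circuit, then (applying the decomposition step to the circuit $C_1$, which meets $Z$ only in $e_1$) it splits into circuits $C_2'\ni e_2$ and $C_3'\ni e_3$, and then $C_2\bigtriangleup C_2'$ is a nonempty cycle properly contained in $C$, a contradiction. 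Note that this reuse is only one level deep --- the decomposition claim has already been established for an arbitrary circuit meeting $Z$ in a single element --- so the ``infinite regress'' you were worried about does not actually arise in the paper's argument. Your alternative sidesteps the reuse entirely: you observe the identity $C_1\bigtriangleup Z=C\bigtriangleup C_2$ and show directly that every circuit $D$ inside this set must equal it, by playing $D$ against both $C$ and $C_2$ via the principle that a circuit cannot properly contain a nonempty cycle. This is somewhat longer but more self-contained, and it has the merit of making explicit a small point the paper leaves implicit, namely that $C_2\setminus\{e_2\}\neq\emptyset$ (equivalently, that $e_2$ is not a loop, which follows from $Z$ being a $3$-element circuit); the analogous nonemptiness of $C_2\bigtriangleup C_2'$ is needed but unstated in the paper's contradiction. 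Both arguments rest only on Observation~\ref{obs:symm} and the circuit axioms, so either is acceptable.
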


\begin{proof}
 By Observation~\ref{obs:symm},  $C'$ is a cycle of $M$. If $C'$ is not a circuit, then $C'$ is a disjoint union of circuits of $M$. If $C'$ contains a circuit $C''$ such that $C''\cap Z=\emptyset$, then $C''\subset C$ contradicting the condition that $C$ is a minimal dependent set. Hence, each circuit of $C'$ contains an element of $Z$. Since $Z\cap C'=\{e_1,e_2\}$,  $C'$ is a disjoint union of two circuits $C_1$ and $C_2$ containing $e_1$ and $e_2$ respectively. 

Suppose that, say $C_1\bigtriangleup Z$, is not a circuit. 
Then by the above, $C_1\bigtriangleup Z$ is a disjoint union of two circuits $C_2'$ and $C_3'$ containing $e_2$ and $e_3$ respectively. But then $C''=C_2\bigtriangleup C_2'$ is a cycle and $C''\subset C$ contradicting that $C$ is a circuit. Hence, $C_1\bigtriangleup Z$ and  $C_2\bigtriangleup Z$ are circuits.
\end{proof}

\begin{lemma}\label{lem:circ-triangle-a}
Let $Z=\{e_1,e_2,e_3\}$ be a circuit of a binary matroid $M$. Let also $C$ be a circuit of $M$ such that $C\cap Z=\{e_1,e_2\}$. Then $C'=C\bigtriangleup Z$ is a circuit of $M$.
\end{lemma}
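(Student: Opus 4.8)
The plan is to mimic the argument of the previous lemma but exploit the fact that here $Z$ meets $C$ in two elements rather than one, so the symmetric difference $C' = C \bigtriangleup Z$ contains $e_3$ and none of $e_1, e_2$. First I would invoke Observation~\ref{obs:symm} to conclude that $C'$ is a cycle of $M$, hence a disjoint union of circuits; note also $C' = (C \setminus \{e_1, e_2\}) \cup \{e_3\}$, which is nonempty (it contains $e_3$), so there is at least one circuit in the decomposition. The goal is to show this decomposition is trivial, i.e. $C'$ itself is a circuit.

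The key step is to argue that every circuit contained in $C'$ must pass through $e_3$. Suppose, for contradiction, that $C'$ contains a circuit $C''$ with $e_3 \notin C''$. Then $C'' \subseteq C' \setminus \{e_3\} = C \setminus \{e_1,e_2\} \subsetneq C$, which contradicts the minimality of the circuit $C$ (a circuit cannot properly contain a nonempty dependent set, let alone a circuit). Hence every circuit in the decomposition of $C'$ contains $e_3$. Since the circuits in a cycle's decomposition are pairwise disjoint, there can be only one such circuit, and since their union is all of $C'$, that single circuit equals $C'$. Therefore $C'$ is a circuit of $M$.

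I do not expect a serious obstacle here; the argument is strictly simpler than Lemma~\ref{lem:circ-triangle} because there is no need for a second-level case analysis (in that lemma $C'$ could split into two pieces each meeting $Z$, whereas here $C' \cap Z = \{e_3\}$ is a single element, forcing uniqueness immediately). The only point requiring a little care is the justification that a circuit $C$ cannot properly contain another circuit $C''$ — but this is exactly circuit axiom C2 (or equivalently the definition of a circuit as an inclusion-minimal dependent set), which has already been recorded in the preliminaries. So the proof is essentially: cycle by Observation~\ref{obs:symm}, then minimality of $C$ kills any circuit avoiding $e_3$, then disjointness forces the decomposition to be a single circuit.
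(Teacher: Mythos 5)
Your proof is correct and takes essentially the same route as the paper: apply Observation~\ref{obs:symm} to see that $C'=C\bigtriangleup Z$ is a nonempty cycle, and then use the minimality of the circuit $C$ to show that its decomposition into disjoint circuits must be trivial. The only cosmetic difference is in how the contradiction is reached---the paper picks the circuit $C''\subseteq C'$ through $e_3$ and notes that $C''\bigtriangleup Z\subset C$ would be a nonempty cycle if $C''\neq C'$, whereas you rule out any circuit of $C'$ avoiding $e_3$ (exactly as in the first step of the proof of Lemma~\ref{lem:circ-triangle}) and then invoke disjointness; both variants are equally valid.
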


\begin{proof}
By Observation~\ref{obs:symm}, $C'$ is a cycle of $M$.  Because $e_3\in C'$, there is a circuit $C''\subseteq C'$ containing $e_3$. If $C''\neq C'$, then the cycle $C''\bigtriangleup Z\subset C$ contradicting the fact that $C$ is a circuit. Hence, $C'=C''$, i.e., $C'$ is a circuit. 
\end{proof}

\begin{lemma}\label{lem:circ}
Let $M=M_1\oplus_r M_2$ for $r\in\{1,2,3\}$, where $M_1$ and $M_2$ are binary matroids, and $Z=E(M_1)\cap E(M_2)$. 
\begin{itemize}
\item[i)] If $r=1$, then $\mathcal{C}(M)=\mathcal{C}(M_1)\cup \mathcal{C}(M_2)$.
\item[ii)] If $r=2$ and $Z=\{e\}$, then 
\begin{align*}
\mathcal{C}(M)&=\{C\in \mathcal{C}(M_1)\mid e\notin C\}\cup \{C\in\mathcal{C}(M_2)\mid e\notin C\}\\
&\cup \{C_1\bigtriangleup C_2\mid C_1\in \mathcal{C}(M_1),C_2\in \mathcal{C}(M_2),e\in C_1,e\in C_2 \}.
\end{align*}
\item[iii)] If $r=3$, then
\begin{align*}
\mathcal{C}(M)=&\{C\in \mathcal{C}(M_1)\mid Z\cap C=\emptyset\}\cup \{C\in\mathcal{C}(M_2)\mid Z\cap C=\emptyset\}\\
\cup& \{C_1\bigtriangleup C_2\mid C_1\in \mathcal{C}(M_1),C_2\in \mathcal{C}(M_2),C_1\cap Z=\{e\}\text{ and }C_2\cap Z=\{e\}\\
&\text{ for some }e\in Z,\text{ and }C_1\bigtriangleup Z\in\mathcal{C}(M_1)\text{ or }C_2\bigtriangleup Z\in\mathcal{C}(M_2)\}.
\end{align*}
\end{itemize}
 
\end{lemma}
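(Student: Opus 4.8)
The plan is to realise the circuits of $M$ as the minimal nonempty cycles of $M$, using the following reading of the definition of $\oplus_r$: the cycles of $M$ are exactly the sets $C_1\bigtriangleup C_2$ with $C_i\in\mathcal{S}(M_i)$ that are contained in $E(M)=E(M_1)\bigtriangleup E(M_2)$, and since the part of $C_1\bigtriangleup C_2$ lying in $Z=E(M_1)\cap E(M_2)$ equals $(C_1\cap Z)\bigtriangleup(C_2\cap Z)$, this containment is equivalent to $C_1\cap Z=C_2\cap Z$. For $r=1$ this finishes the argument at once: $Z=\emptyset$, so the cycles of $M$ are the disjoint unions $C_1\cup C_2$, a minimal nonempty one has one side empty and the other a minimal nonempty cycle of the corresponding $M_i$, i.e.\ a circuit of $M_i$; conversely a circuit of $M_i$ is a cycle of $M$ inside which every proper nonempty cycle of $M$ would again be a cycle of $M_i$, hence would not exist. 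So I concentrate on $r\in\{2,3\}$ and describe the argument for $r=3$; the case $r=2$ is the same but lighter, with $Z=\{e\}$ (so the reduction below is vacuous, $|C_1\cap Z|\le 1$ always), and it uses that $e$ is not a loop of $M_1$ or $M_2$ --- in the $r=3$ case the corresponding fact, that $Z$ contains no loops, is automatic because $Z$ is a triangle.

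To show that every set listed on the right-hand side is a circuit of $M$: each such set is first of all a cycle of $M$ --- the first two families are cycles of $M_i$ disjoint from $Z$, and for the third, $C_1\cap Z=C_2\cap Z=\{e\}$ gives $(C_1\bigtriangleup C_2)\cap Z=\emptyset$ --- and is nonempty (in the third family $|C_i|\ge 2$ since $e$ lies in the triangle $Z$ and so is not a loop). The real point is minimality. The device I rely on is that $C\mapsto C\bigtriangleup Z$ sends cycles of $M_i$ to cycles of $M_i$ (Observation~\ref{obs:symm}), is an involution, and leaves $C_1\bigtriangleup C_2$ unchanged; so a hypothetical nonempty cycle $D=D_1\bigtriangleup D_2\subsetneq C$ of $M$ may, after replacing $(D_1,D_2)$ by $(D_1\bigtriangleup Z,D_2\bigtriangleup Z)$ if necessary, be assumed to have $|D_i\cap Z|\le 1$. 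Decomposing each $D_i$ into its circuit components, every component disjoint from $Z$ is again a cycle of $M$ lying inside the circuit $C$, hence empty; and what remains is contained in $C_1$ or in $C_1\bigtriangleup Z$ (respectively $C_2$ or $C_2\bigtriangleup Z$), the latter being circuits of $M_i$ precisely by the hypothesis on $C_1\bigtriangleup Z,C_2\bigtriangleup Z$. Since a cycle of $M_i$ strictly inside a circuit of $M_i$ is empty, this forces $D=\emptyset$ or $D=C$, a contradiction.

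To show that every circuit $C$ of $M$ is listed: write $C=C_1\bigtriangleup C_2$ with $C_i\in\mathcal{S}(M_i)$, set $Y=C_1\cap Z=C_2\cap Z$, and use the $\bigtriangleup Z$ move to reduce to $|Y|\le 1$. Decomposing the $C_i$ into circuit components and invoking minimality of $C$ exactly as above kills every component disjoint from $Z$, so when $Y=\emptyset$ the set $C$ is itself a circuit of some $M_i$ avoiding $Z$, and when $Y=\{e\}$ we get $C=C_1\bigtriangleup C_2$ with each $C_i$ a circuit of $M_i$ meeting $Z$ exactly in $\{e\}$. In the latter case I must still verify that $C_1\bigtriangleup Z\in\mathcal{C}(M_1)$ or $C_2\bigtriangleup Z\in\mathcal{C}(M_2)$: if both failed, Lemma~\ref{lem:circ-triangle} would split each $C_i\bigtriangleup Z$ into disjoint circuits $C_i',C_i''$ meeting $Z$ in the two elements of $Z\setminus\{e\}$, and then $C_1'\bigtriangleup C_2'$ (choosing the two pieces that meet $Z$ in the same element) would be a nonempty cycle of $M$ strictly inside $C$ --- strictly, because $|C_i'|\le|C_i\bigtriangleup Z|-2=|C_i|-1$ since $|C_i''|\ge 2$ --- contradicting that $C$ is a circuit. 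Lemma~\ref{lem:circ-triangle-a} is what legitimises the $|Y|\le 1$ reduction: it guarantees that when $|Y|=2$ the sets $C_i\bigtriangleup Z$ are again circuits.

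I expect the whole difficulty to lie in the $r=3$ case, and there in these minimality arguments: exhibiting, whenever the clause on $C_1\bigtriangleup Z,C_2\bigtriangleup Z$ is violated, an honestly smaller nonempty cycle of $M$, and making sure that no intersection pattern of the cycles in play with the triangle $Z$ has been overlooked. The reduction to $|Y|\le 1$ via the involution $C\mapsto C\bigtriangleup Z$, together with Lemmas~\ref{lem:circ-triangle} and~\ref{lem:circ-triangle-a}, is exactly what neutralises both points; the case $r=2$, by contrast, should be routine, since a binary circuit has at most one circuit component through $e$ and the move $C\mapsto C\bigtriangleup\{e\}$ produces no new intersection patterns.
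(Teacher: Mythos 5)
Your proposal is correct in substance and uses the same toolkit as the paper (circuits of the sum as minimal nonempty cycles, the involution $C\mapsto C\bigtriangleup Z$ to normalize the trace on $Z$, decomposition into circuit components, and Lemma~\ref{lem:circ-triangle} to force the clause ``$C_1\bigtriangleup Z\in\mathcal{C}(M_1)$ or $C_2\bigtriangleup Z\in\mathcal{C}(M_2)$''), and your forward inclusion coincides with the paper's almost step by step. Where you genuinely diverge is the reverse inclusion: the paper assumes a listed set $C=C_1\bigtriangleup C_2$ is not a circuit, takes a circuit $C'\subsetneq C$, applies the already-proved forward characterization to write $C'=C_1'\bigtriangleup C_2'$ with common trace $\{e'\}$, and then runs an element analysis distinguishing $e'=e$ from $e'\neq e$, using that no two-element subset of the triangle $Z$ can be a cycle; you instead take an arbitrary nonempty cycle $D=D_1\bigtriangleup D_2\subsetneq C$, normalize its trace, kill the $Z$-avoiding components, and contradict via the containments $D_i\setminus Z\subseteq C_i\setminus\{e\}$ together with the circuit property of $C_1$, $C_2$ and of whichever $C_i\bigtriangleup Z$ is a circuit. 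Your route is slightly more self-contained (it does not reuse the forward direction, nor the ``$\{e,e'\}$ is not a cycle'' observation), at the cost of carrying the cycle-decomposition bookkeeping a second time; the paper's route localizes all the triangle combinatorics in one place. Two spots need tightening when you write this up: in the third family the phrase ``a cycle of $M$ lying inside the circuit $C$, hence empty'' is circular as stated (there $C$ is not yet known to be a circuit of $M$; the correct reason is that such a component sits inside $C_i\setminus\{e\}$, a proper subset of the circuit $C_i$ of $M_i$), and the hypothesis gives only that \emph{one} of $C_1\bigtriangleup Z$, $C_2\bigtriangleup Z$ is a circuit, so you should note that when the surviving trace element $f$ differs from $e$ you have $D_1\subseteq C_1\bigtriangleup Z$ and $D_2\subseteq C_2\bigtriangleup Z$ simultaneously, and the contradiction is drawn on whichever side the circuit lives (the containment is proper since the traces on $Z$ have sizes $1$ and $2$). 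Also, the reduction from trace $2$ to trace $1$ only needs Observation~\ref{obs:symm}, since at that stage the $C_i$ are merely cycles; Lemma~\ref{lem:circ-triangle-a} is not required there.
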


\begin{proof}
The claims i) and ii) follow directly from the definitions of the extended 1 and 2-sums. Hence, we have to prove only iii). Recall that $Z$ is a circuit of $M_1$ and $M_2$ in the case of the extended 3-sum.

Let $C$ be  a circuit of $M$. If $C\subseteq E(M_i)$ for $i\in\{1,2\}$, then $C$ is a cycle of $M_i$ and, by minimality, $C$ is a circuit of $M_i$. Assume that $C\setminus E(M_i)\neq\emptyset$ for each $i\in\{1,2\}$. By  definition, $C=C_1\bigtriangleup C_2$ and $C_1\cap Z=C_2\cap Z$, where $C_1$ and $C_2$ are cycles of $M_1$ and $M_2$ respectively.   

If $Z\subseteq E(C_1)$, then  by Observation~\ref{obs:symm}, $C'=C_1\bigtriangleup Z\subseteq C$ is a cycle of $M_1$. Hence, $C'$ is a cycle of $M$ contradicting that $C$ is a  minimal  dependent set. Therefore $1\leq |C_1\cap Z|\leq 2$. Suppose that $|C_1\cap Z|=2$. Consider $C_1'=C_1\bigtriangleup Z$ and $C_2'=C_2\bigtriangleup Z$. By  Observation~\ref{obs:symm}, $C_i'$ ia a  cycle of $M_i$,   $i\in\{1,2\}$. Clearly, $C=C_1'\bigtriangleup C_2'$, but now $|C_1'\cap Z|=|C_2'\cap Z|=1$.
It means, that we always can assume that  $C=C_1\bigtriangleup C_2$, where $C_1\cap Z=\{e\}$ and $C_2\cap Z=\{e\}$ for some $e\in Z$.

Suppose that one of the cycles $C_1$ and $C_2$, say $C_1$, is not a circuit. Then $C_1$ is a disjoint union of circuits of $M_1$. This union contains a circuit $C_1'$ with $e\in C_1'$. Then $C'=C_1'\bigtriangleup C_2\subset C$ is a cycle of $M$ contradicting the minimality of $C$. Hence, $C_1$ and $C_2$ are circuits of $M_1$ and $M_2$ respectively.

Suppose that  $C_1'=C_1\bigtriangleup Z$ and $C_2'=C_2\bigtriangleup Z$ are not circuits  of $M_1$ and $M_2$ respectively. By Lemma~\ref{lem:circ-triangle}, for $i\in\{1,2\}$, $C_i'$ is a disjoint union of two circuits $C_i^{1}$ and $C_i^2$
 of $M_i$ containing $e_1$ and $e_2$ respectively 
 for distinct $e_1,e_2\in Z\setminus\{e\}$. Then $C'=C_1^1\bigtriangleup C_2^1$ is a cycle of $M$ contradicting the minimality of $C$. Hence, for each $i\in\{1,2\}$,  $C_i'$  is a circuit  of $M_i$.

In the opposite  direction, if $C$ is a circuit of $M_1$ or $M_2$ such that $C\cap Z=\emptyset$, then   $C$ is a circuit of $M$. Suppose now that $C=C_1\bigtriangleup C_2$, where $C_1$ and $C_2$ are circuits of $M_1$ and $M_2$ respectively, $C_1\cap Z=\{e\}$  and $C_2\cap Z=\{e\}$
 for some $e\in Z$,  and $C_1\bigtriangleup Z$  or $C_2\bigtriangleup Z$  is a circuit of $M_1$  or $M_2$  respectively. We show that $C$ is a circuit of $M$.

To obtain a contradiction, assume that $C$ is not a circuit. By Observation~\ref{obs:symm}, $C$ is a cycle of $M$. Therefore, there is a circuit $C'\subset C$. If $C'\subseteq E(M_1)$ or $C'\subseteq E(M_2)$, then $C'\subset C_1$ or $C'\subset C_2$, but this  contradicts   the condition that $C_1$ and $C_2$  are circuits of $M_1$ and $M_2$ respectively. Hence, $C'\setminus E(M_1)\neq \emptyset$ and $C'\setminus E(M_2)\neq\emptyset$. As we already proved above, $C'=C_1'\bigtriangleup C_2'$, where $C_i'$ is  a circuit of $M_i$, $i\in \{1,2\}$,  and $C_i'\cap Z=\{e'\}$ and $C_2'\cap Z=\{e'\}$ for some $e'\in Z$. Clearly, $C_1'\setminus\{e'\}\subseteq C_1\setminus\{e\}$ and $C_2'\setminus\{e'\}\subseteq C_2\setminus\{e\}$ and at least one of the inclusions is proper. If $e'=e$, then 
$C_1'\subseteq C_1$ and $C_2'\subseteq C_2$ and at least one of the inclusions is proper contradicting the fact that $C_1$ and $C_2$ are circuits of $M_1$ and $M_2$ respectively. Hence, $e'\neq e$. If $C_1'\setminus\{e'\}= C_1\setminus\{e\}$, then $\{e,e'\}=C_1'\bigtriangleup C_1$. This  contradicts the condition that $Z$ is a circuit. Hence, $C_1'\setminus\{e'\}\subset C_1\setminus\{e\}$. But then $C_1'\subset C_1\bigtriangleup Z$, and  therefore $C_1\bigtriangleup Z$ is not a circuit of $M_1$. Symmetrically,  $C_2\bigtriangleup Z$ is not a circuit of $M_2$; a contradiction. Hence, $C$ is a circuit of $M$.
\end{proof}

We conclude this section by the following lemma about circuits in graphic and cographic matroids.

\begin{lemma}\label{lem:gr-cogr}
Let $Z=\{e_1,e_2,e_3\}$ be a circuit of a binary matroid $M$. Let also $C$ be a circuit of $M$ such that $C\cap Z=\{e_3\}$. Then  the following holds:
\begin{itemize}
\item[i)] If $M=M(G)$ for a graph $G$, then  $C'=C\bigtriangleup Z$ is a circuit of $M$ if and only if
$C$ induces a cycle of $G-v$, where $v$ is the vertex of $G$ incident with $e_1$ and $e_2$.
\item[ii)] If $M=M^*(G)$ for a connected graph $G$, then  $C'=C\bigtriangleup Z$ is a circuit of $M$ if and only if
$C=E(A,B)$ for a cut $(A,B)$ of $G$ such that $G[A]$ and $G[B]$ are connected graphs and either $e_1,e_2\in E(G[A])$,  or $e_1,e_2\in E(G[B])$.   
\end{itemize}
\end{lemma}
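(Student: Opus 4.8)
The plan is to reduce both statements to one dichotomy about symmetric differences and then evaluate it separately in the graphic and cographic models. By Observation~\ref{obs:symm}, $C':=C\bigtriangleup Z$ is always a cycle of $M$, hence a circuit unless it is a disjoint union of two or more circuits; and by Lemma~\ref{lem:circ-triangle} the only way it can fail to be a circuit is $C'=C_1\sqcup C_2$ with $C_1,C_2\in\mathcal{C}(M)$, $e_1\in C_1$ and $e_2\in C_2$. So in each model it suffices to describe $C'$ explicitly and to decide when this split does or does not occur.

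For (i), a $3$-element circuit of $M(G)$ is a triangle on three distinct vertices, so I would let $v$ be the common endpoint of $e_1,e_2$ and write $e_1=va_1$, $e_2=va_2$, $e_3=a_1a_2$. Since $C$ is the edge set of a cycle of $G$ through $e_3$ that misses $e_1,e_2$, deleting $e_3$ leaves a simple path $P$ from $a_1$ to $a_2$ with $V(P)=V(C)$, and $C'=E(P)\cup\{e_1,e_2\}$ is the edge set of the closed walk running along $P$ from $a_1$ to $a_2$ and then $a_2\to v\to a_1$. If $C$ induces a cycle of $G-v$, then $v\notin V(C)=V(P)$, so this closed walk is a simple cycle and $C'\in\mathcal{C}(M(G))$. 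Conversely, if $v\in V(C)$, then, as $v\neq a_1,a_2$, $v$ is an internal vertex of $P$, so $P=P_1P_2$ with $P_1$ a path from $a_1$ to $v$ and $P_2$ a path from $v$ to $a_2$; then $C_1:=E(P_1)\cup\{e_1\}$ and $C_2:=E(P_2)\cup\{e_2\}$ are edge-disjoint cycles of $G$ with union $C'$, so $C'$ is not a circuit. That settles (i).

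For (ii), $Z=E(P,Q)$ and $C=E(A,B)$ are bonds of $G$, with $G[P],G[Q]$ connected and, for the forward direction, $\{A,B\}$ the unique partition with $C=E(A,B)$ and $G[A],G[B]$ connected. As the cut space of $G$ over ${\rm GF}(2)$ is closed under symmetric difference, $C'=E(A,B)\bigtriangleup E(P,Q)=E(A',B')$ with $A'=(A\cap Q)\cup(B\cap P)$ and $B'=(A\cap P)\cup(B\cap Q)$, both nonempty since $C\neq Z$, and $C'$ is a circuit of $M^*(G)$ exactly when $G[A']$ and $G[B']$ are both connected. To test this I would sort the edges of $G$ by which of the four cells $A\cap P$, $A\cap Q$, $B\cap P$, $B\cap Q$ their ends lie in: the only edges crossing both partitions lie in $C\cap Z=\{e_3\}$, so after possibly swapping $P$ and $Q$ we may assume $e_3$ joins $A\cap P$ and $B\cap Q$ and no edge joins $A\cap Q$ and $B\cap P$; moreover the edges of $Z$ inside $A$ (inside $B$) are among $\{e_1,e_2\}$, and since $e_1,e_2\notin C$ each lies wholly inside $A$ or wholly inside $B$. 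Hence $G[B']$ is $G[A\cap P]$ and $G[B\cap Q]$ joined by the single edge $e_3$, while $G[A']$ is the edge-free union of $G[A\cap Q]$ and $G[B\cap P]$; combining this with the connectivity of $G[A],G[B],G[P],G[Q]$ and a short case split on which of $A\cap Q$, $B\cap P$ is empty, one checks that $G[A'],G[B']$ are both connected precisely when $e_1,e_2$ lie on the same side of $(A,B)$, i.e.\ both in $E(G[A])$ or both in $E(G[B])$. This is the stated condition, and the same analysis yields the converse, since that side condition forces the witnessing partition to be $\{A,B\}$.

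The cut-space identity and the closed-walk bookkeeping in (i) are routine. The hard part will be the final case analysis in (ii): tracking how the position of $e_1$ and $e_2$ relative to the partition $(A,B)$ governs the connectivity of the four cells, and hence of $G[A']$ and $G[B']$, being careful about degenerate (empty) cells. That is where I would expect the real work to lie.
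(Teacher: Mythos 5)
Your part (i) is correct and complete (it supplies the details the paper dismisses as ``straightforward''), and your reduction of both parts to the dichotomy of Lemma~\ref{lem:circ-triangle} is a reasonable frame. In part (ii) the setup is also sound: writing $Z=E(P,Q)$, $C=E(A,B)$, normalizing so that $e_3$ joins $A\cap P$ to $B\cap Q$, noting $C'=E(A',B')$ with $A'=(A\cap Q)\cup(B\cap P)$, $B'=(A\cap P)\cup(B\cap Q)$ both nonempty, and describing $G[A']$ as $G[A\cap Q]\sqcup G[B\cap P]$ with no edge between the parts and $G[B']$ as $G[A\cap P]$ and $G[B\cap Q]$ joined only by $e_3$ --- all correct, and it yields the ``only if'' direction at once (if $e_1\subseteq A$ and $e_2\subseteq B$ then $A\cap Q$ and $B\cap P$ are both nonempty, so $G[A']$ is disconnected and $C'$ is not a bond). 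The genuine gap is the step you defer as ``a short case split'': the claim that $G[A']$ and $G[B']$ are both connected precisely when $e_1,e_2$ lie on the same side of $(A,B)$. That equivalence is false, so no case analysis will close it. If $e_1,e_2\subseteq A$, your cells do give $B\cap P=\emptyset$ (an edge between $B\cap P$ and $B\cap Q$ would lie in $Z\setminus C=\{e_1,e_2\}\subseteq E(G[A])$), hence $G[B']$ is connected; but the connectivity of $G[A']=G[A\cap Q]$ does not follow from the connectivity of $G[A]$, $G[B]$, $G[P]$, $G[Q]$, and it can fail.

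Concretely, let $G$ be $K_4$ minus the edge $y_1y_2$, on vertices $x,y_1,y_2,b$; take $Z=\{xy_1,xy_2,xb\}$ (the bond at $x$) and $C=\{xb,y_1b,y_2b\}$ (the bond at $b$), so $e_3=xb$, $e_1=xy_1$, $e_2=xy_2$, and $C\cap Z=\{e_3\}$. The unique cut realizing $C$ is $A=\{x,y_1,y_2\}$, $B=\{b\}$, with $e_1,e_2\in E(G[A])$; yet $C\bigtriangleup Z=\{xy_1,xy_2,y_1b,y_2b\}$ is the disjoint union of the bonds at $y_1$ and at $y_2$ and is not a circuit of $M^*(G)$ (in your notation $A\cap Q=\{y_1,y_2\}$ induces no edge, so $G[A']$ is disconnected). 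So the sufficiency direction needs an extra hypothesis: in your notation that $G[A\cap Q]$ is connected, equivalently that $\{e_1,e_2\}$ is a \emph{minimal} cut-set of $G[A]$. This is exactly the point where the paper's own proof is unjustified --- it asserts that ``because $Z$ is a minimal cut-set, $\{e_1,e_2\}$ is a minimal cut-set of $G[A]$,'' which the same example refutes ($\{e_1,e_2\}$ is a cut-set of $G[A]$ there but not a minimal one). So your instinct that the real work lies in that final equivalence was right, but as stated it (and hence the ``if'' direction of (ii)) cannot be proved; a correct version of (ii) must incorporate the connectivity of $G[A\cap Q]$, after which the rest of your argument goes through.
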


\begin{proof}
The first claim is straightforward. To show ii), recall that $C$ is a minimal cut-set of $G$. Hence, there is a cut $(A,B)$ of $G$ such that $C=E(A,B)$ and $G[A]$ and $G[B]$ are connected. 

Assume that $e_1\in E(G[A])$ and $e_2\in E(G[B])$. Since $Z$ is a minimal cut-set of $G$, we have that $e_1$ and $e_2$ are bridges of $G[A]$ and $G[B]$ respectively. Then $C\bigtriangleup Z$ is a cut-set  separating $G$ into 3 components. Hence  $C'$ is not a minimal cut-set, which is  a contradiction. Therefore, either $e_1,e_2\in E(G[A])$,  or $e_1,e_2\in E(G[B])$. 

Suppose now that $C=E(A,B)$ for a cut $(A,B)$ of $G$ such that $G[A]$ and $G[B]$ are connected and
 $e_1,e_2\in E(G[A])$. Because $Z$ is a minimal cut-set, $\{e_1,e_2\}$ is a minimal cut-set of $G[A]$. Let $(A_1,A_2)$ be a cut of $G[A]$ such that $E(A_1,A_2)=\{e_1,e_2\}$. Assume that the end-vertex of $e_3$ in $A$ is in $A_1$. Since $Z$ is a minimal cut-set, the edges of $C\setminus\{e_3\}$ join $A_2$ with $B$. It implies, that $C\bigtriangleup Z$ is a minimal cut-set that separates $A_2$ and $A_1\cup B$.  
\end{proof}

\section{Minimal cut with specified edges}\label{sec:cut-graph}
To construct an algorithm for \wmsc{} for regular matroids, we need an algorithm for cographic matroids. 
Let $G$ be a connected graph, and let   $T\subseteq E(G)$ be a set of terminal edges. For  sets $R_1,R_2\subseteq V(G)$, we say that 
$C\subseteq E(G)$ is \emph{$(R_1,R_2)$-terminal cut-set} if $C$ is  (a) a minimal cut-set; (b) $C\supseteq T$;  and (c) $G-C$ contains   distinct connected components $X_1$ and $X_2$ such that $R_i\subseteq X_i$ for $i\in\{1,2\}$.

We will need solve  the following auxiliary parameterized problem  

\defparproblema{\emc}
{A connected graph $G$, a weight function $w\colon E(G)\rightarrow \mathbb{N}$, a set of terminals $T\subseteq E(G)$,  sets $R_1,R_2\subseteq V(G)$, and a positive integer $k$.}%
{$k$}
{Decide whether  $G$ contains an $(R_1,R_2)$-terminal cut-set $C$  such that $w(C)-w(T)\leq k$.}

\smallskip
We say that an {$(R_1,R_2)$-terminal cut-set} $C$ with the required weight is a \emph{solution}  of \emc. Observe that if in the instance of \emc we have
$R_1\cap R_2\neq\emptyset$, then the problem does not have a solution and this is  a no-instance.

In what follows, we prove that \emc is \classFPT{}.
In the special case when $R_1=R_2=\emptyset$, \emc{} essentially asks for a minimum weight minimal cut of a graph that contains specified edges. 
We believe that this graph problem is interesting in its own.

\begin{theorem}\label{thm:graph-cut}
\emc is solvable in time $2^{\Oh(k^2\log k)}\cdot n^{\Oh(1)}$.
\end{theorem}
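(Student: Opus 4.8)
The plan is to apply the \emph{recursive understanding} technique of Chitnis et al.~\cite{ChitnisCHPP12}, exactly as announced in the introduction. The overall structure is a recursion that, given the instance $(G,w,T,R_1,R_2,k)$, either (a) finds a small ``good'' edge separator $(A,B)$ with $|E(A,B)|$ bounded by a function of $k$ and both sides large, in which case we recurse on the smaller side to \emph{understand} it, replace it by a bounded-size gadget, and shrink the instance; or (b) concludes that $G$ is \emph{highly connected} (every balanced separator is large), in which case the whole graph is ``small relative to its connectivity'' and a direct branching/random-separation argument solves it. The key parameters to pin down first are: the bound $q = q(k)$ on the size of a good separator, and the bound on the size of a fully understood piece. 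Here the natural choice is $q = \Theta(k)$, since any solution $C$ that is a minimal cut-set with $w(C)-w(T)\le k$ and $T$ fixed has $|C\setminus T|\le k$ (unit-ish weights after a reduction rule removing zero-weight and collapsing heavy edges), so a solution crossing a separator of size more than, say, $k+|T|$-ish cannot interact with it cheaply --- but $|T|$ is \emph{not} a parameter, so the genuine bound must be $q=O(k)$ coming only from $C\setminus T$, and we must be careful that terminal edges in $T$ crossing a separator are handled by the gadget, not counted.

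The heart of the argument --- and the main obstacle --- is defining the right notion of ``what a piece needs to remember'' so that the understood side can be replaced by a constant-size (in terms of $k$) gadget without losing any solution. Concretely, after isolating a good separator $(A,B)$ with boundary $\partial = E(A,B)$, $|\partial|\le q$, we should recurse on $G[A]$ augmented with $\partial$-endpoints as a new small terminal/boundary set, and for \emph{every} way a global solution can restrict to the $A$-side --- i.e.\ every partition of the boundary vertices into ``component $X_1$ part'', ``component $X_2$ part'', ``other components'' together with which boundary edges are cut --- compute the minimum-weight partial minimal cut inside $A$ realizing that interface, subject to containing $T\cap E(G[A])$. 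There are only $q^{O(q)}=2^{O(k\log k)}$ such interfaces. We then build a gadget on $O(q)$ vertices that reproduces exactly these optimal interface costs (a weighted gadget realizing a bounded-size ``cost function on interfaces'' --- standard but needs care to ensure minimality of cut-sets is preserved, since a minimal cut-set globally must induce a minimal cut-set on each side after the quotient). The subtlety: $R_1,R_2$ may be split across $A$ and $B$, and ``$G-C$ has two distinct components containing $R_1$ and $R_2$'' is a global, connectivity-sensitive condition, so the interface must also record the partition of $R_i\cap A$-boundary reachability. Getting this bookkeeping complete and sound is where the real work lies.

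For the highly-connected base case, I would argue that if $G$ has no good separator then $|V(G)|$ is bounded in terms of $q$ (hence in terms of $k$) \emph{or} the solution structure is forced: any minimal $(R_1,R_2)$-terminal cut-set $C$ with $|C\setminus T|\le k$ gives a cut $(A,B)$ with $G[A],G[B]$ connected and $E(A,B)=C\supseteq T$; if $C\setminus T$ is small then $C$ is itself essentially a small separator, so either one of $A,B$ is small (and we can guess it by random separation / colour-coding on $2^{O(k\log k)}$ colourings, then solve the rest greedily as a connectivity check) or both sides are large, contradicting high connectivity once we account for $T$. Here one must be slightly delicate because $T$ itself may be a large cut; the fix is that $T$ being part of \emph{every} candidate solution means we may contract/mark it and work in the graph where only the $\le k$ extra edges are free, restoring the standard recursive-understanding bounds.

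Putting it together: the recursion depth is $O(\log n)$ with branching factor $2^{O(k\log k)}$ at each good-separator step and a $2^{O(k\log k)}$-time base case, and each node does polynomial work, giving total running time $2^{O(k^2\log k)}\cdot n^{O(1)}$ --- the extra factor of $k$ in the exponent over the per-level $2^{O(k\log k)}$ coming from the interplay of the two connectivity conditions (two separator-like objects, $\partial$ and $C$, must be tracked simultaneously). I expect the write-up to proceed as: (1) preprocessing reduction rules on weights and parallel edges; (2) the definition of interfaces and the ``border reduction'' lemma replacing an understood piece by a gadget, with a proof that solutions are preserved in both directions; (3) the good-separator-finding step (itself an FPT subroutine, via the recursive-understanding machinery of~\cite{ChitnisCHPP12} or iterative compression); (4) the highly-connected base case via random separation; (5) assembling the recursion and the running-time bound. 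The main obstacle, to repeat, is step (2): correctly capturing the minimal-cut-set condition and the $R_1/R_2$-separation condition in a bounded-size interface.
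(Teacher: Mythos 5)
Your high-level skeleton (good edge separations, recursing on one side with border terminals, replacing the understood side, random separation in the highly connected case) is the same as the paper's, but two of your steps would fail as described. First, the base case: a graph with no good separation need not have bounded size, and the solution cut $C\supseteq T$ may be far larger than the unbreakability threshold $p=\Oh(k)$, so ``both sides large contradicts high connectivity'' is not available, and your fix of contracting/marking $T$ is unsound --- the terminal edges must lie \emph{in} the sought cut, hence across it, so they cannot be contracted, and marking them does not restore any bound. The paper needs two ingredients you do not have: it first computes \emph{some} cut $(A,B)$ with $T\subseteq E(A,B)$ and $w(E(A,B)\setminus T)\leq k$ via a reduction to \textsc{Odd Cycle Transversal} (Lemma~\ref{lem:OCT}), and then proves that for any two cuts containing $T$ with extra weight at most $k$ the cut $E(A_1\bigtriangleup A_2,A_1\bigtriangleup B_2)$ avoids $T$ and has weight at most $2k$ (Lemma~\ref{lem:two-cuts}), so unbreakability forces $\dist((A,B),(A',B'))\leq pq$ (Lemma~\ref{lem:struct-cuts}); random separation then only has to find the bounded set $X=A\bigtriangleup A'$. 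Without the initial cut and this comparison lemma the unbreakable phase does not go through.

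Second, your step (2) --- a weighted gadget realizing optimal interface costs --- is precisely the step you flag as the obstacle and leave unresolved, and encoding the minimal-cut-set condition and the $R_1/R_2$ component structure in such a gadget is where it would break. The paper avoids gadgets altogether: it solves the border-contraction problem \bemc{} for all $2^{\Oh(k\log k)}$ partitions of the at most $4k$ border terminals (with assignments to $R_1/R_2$), takes the union $L$ of the returned solution edge sets, and simply contracts every edge of the understood side outside $L\cup T$; one direction of equivalence is then immediate since only non-solution, non-terminal edges are contracted, and the other is proved by re-partitioning the border terminals (Lemma~\ref{lem:eq}). This also corrects your parameter accounting: the side-size threshold cannot be $\Theta(k)$ --- it must be $q=2^{\Oh(k\log k)}$ so that the contracted piece, which retains up to $k\cdot 2^{\Oh(k\log k)}$ solution edges, is strictly smaller than the side it replaces (Lemma~\ref{lem:red-size}); and the final $2^{\Oh(k^2\log k)}$ comes from $\min\{p,q\}\log(p+q)=\Oh(k^2\log k)$ in the derandomization (Lemma~\ref{lem:derand}) and the $(pq)^{2k}$ success-probability bound, not from the ``interplay of two separator-like objects'' you invoke.
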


The proof of Theorem~\ref{thm:graph-cut} is technical and  is given in the remaining part of the section.  
It is based on a (non-trivial) application of  the recent
algorithmic technique 
  of \emph{recursive understanding}  introduced by Chitnis et al. in~\cite{ChitnisCHPP12} (see also~\cite{ChitnisCHPP12a} for more details). 
  
\subsection{Preliminaries}\label{sec:rec-und}
First, we introduce some notions required for the proof of Theorem~\ref{thm:graph-cut}.

Let $G$ be a graph, $X\subseteq V(G)$. We say that $G'$ is obtained from $G$ by the \emph{contraction of $X$}, if we get $G'$ by deleting the vertices of $X$ and replacing by a vertex $x$, and then each edge $uv\in E(G)$ with $u,v\in X$ is replaced by a loop $xx$, and each edge $uv\in E(G)$ with $u\in X$ and $v\notin X$ is replaced by $xv$. Notice that while contracting, we do not reduce the number of edges, and that  we can obtain loops and multiple edges by this operation. For simplicity, we do not distinguish edges of the original graph from the edges obtained from them by contracting a set if it does not create confusions. 
 For an edge weighted graph, we assume that every new edge has the same weight as the edge it replaces.  
To simplify notations, throughout this section we also assume that if the contraction is done for some set $X\subseteq V(G)$ in an instance $(G,w,T,R_1,R_2,k)$ of \emc, 
 then if $X\cap R_i\neq\emptyset$ for $i\in\{1,2\}$, then the vertex obtained from $X$ is in $R_i$, and if a terminal edge is replaced, then the obtained edge is included in $T$. 

For a set $X$, we denote by $\mathcal{P}(X)$ the set of all partitions of $X$. We assume that  $\mathcal{P}(X)=\emptyset$ if $X=\emptyset$.

The main idea behind the recursive understanding technique~\cite{ChitnisCHPP12} is the following. We try to find a minimal cut-set of bounded size that separates an input graph into two sufficiently big parts.
If such a cut-set exists, then we solve the problem recursively for one of the parts and replace this part by an equivalent graph of bounded size; the equivalence here means that the replacement keeps all essential solutions of the original part. In our case, the replacement is obtained by contracting some edges. This way, we obtain a graph of smaller size. 
If the input graph has no cut-set with the required properties, then it either has a bounded size or has high connectivity. In the case of the bounded size graph we can apply brute force, and if the graph is highly connected, then we can exploit this property to solve the problem.  
To define formally what we mean by high connectivity, we need the following definition. 
\begin{definition}[\cite{ChitnisCHPP12}]
Let $G$ be a connected graph and let $p,q$ be positive integers. A cut $(A,B)$ of $G$ is called a \emph{$(q,p)$-good edge separation} if
\begin{itemize}
\item[i)] $|A|,|B|>q$,
\item[ii)] $|E(A,B)|\leq p$,
\item[iii)] $G[A]$ and $G[B]$ are connected.
\end{itemize}
\end{definition}

Let $G$ be a connected graph and let $p,q$ be positive integers. We say that $G$ is  $(q,p)$-unbreakable if there is no cut $(A,B)$ of $G$ such that 
\begin{itemize}
\item[$i)$] $|A|,|B|>q$, and 
\item[$ii)$] $|E(A,B)|\leq p$,
\end{itemize}

Chitnis et al. proved the following lemma~\cite{ChitnisCHPP12}.

\begin{lemma}[\cite{ChitnisCHPP12}]\label{lem:good}
There  exists  a  deterministic  algorithm that,  given a  connected  graph $G$ along  with  integers $p$ and $q$,  in  time
$2^{\Oh(\min\{p,q\}\log (p+q))}\cdot n^3\log n$
either finds a $(q,p)$-good edge separation, or correctly concludes that no such separation exists.
\end{lemma}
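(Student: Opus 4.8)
The plan is to follow the \emph{randomized contractions} machinery of Chitnis et al.~\cite{ChitnisCHPP12,ChitnisCHPP12a}; I describe a concrete realization and indicate where the stated running time comes from. If $n\le 2q$ then no $(q,p)$-good edge separation can exist and we report this immediately, so assume otherwise.

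Suppose that a $(q,p)$-good edge separation $(A^*,B^*)$ exists, put $S^*=E(A^*,B^*)$, so that $|S^*|\le p$, and let $\partial_B$ be the set of endpoints of $S^*$ lying in $B^*$ (so $|\partial_B|\le p$). Since $G[A^*]$ is connected and $|A^*|>q$, it contains a connected subset $A_0$ with $|A_0|=q+1$; fix an arbitrary vertex $a_0\in A_0$ and an arbitrary vertex $b\in B^*$. The first — and most delicate — step is a \emph{random marking}: colour each vertex of $G$ independently with one of two colours $\{A,B\}$, and call the colouring \emph{successful} (for the fixed data) if every vertex of $A_0$ receives colour $A$ and every vertex of $\partial_B$ receives colour $B$. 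A successful colouring occurs with probability at least $2^{-(|A_0|+|\partial_B|)}\ge 2^{-(p+q+1)}$, and replacing the random colouring by the strings of an $(n,O(p+q))$-universal set derandomizes this step, producing a deterministic family of $2^{\Oh(p+q)}\cdot\log n$ colourings one of which is successful. The sharper count $2^{\Oh(\min\{p,q\}\log(p+q))}$ requires the refined analysis of~\cite{ChitnisCHPP12}, which shows that only $\Oh(\min\{p,q\})$ of the ``critical'' vertices and contracted blobs actually need to be guessed correctly; I take this as given.

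Given a successful colouring, let $R$ be the connected component of $a_0$ in the subgraph of $G$ induced by the $A$-coloured vertices. Because every edge of $S^*$ has its $B^*$-endpoint in $\partial_B$, and $\partial_B$ is entirely $B$-coloured, no $A$-coloured path can cross from $A^*$ to $B^*$; hence $A_0\subseteq R\subseteq A^*$ and $|R|\ge q+1$. Contract $R$ to a single vertex $r$ (the paper's contraction keeps all parallel edges, so edge counts are preserved), obtaining a connected graph $G'$, and compute in $G'$, by one max-flow computation, a minimal $(b,r)$-edge-cut $S$ of size at most $p$ whose $b$-side $B':=R_{G'}(b,S)$ is inclusion-maximal, i.e.\ an important $(b,r)$-separator of size $\le p$. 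Since $S^*$ descends to a $(b,r)$-edge-cut of $G'$ of size $\le p$ whose $b$-side is the image of $B^*$, such an $S$ exists, its $b$-side satisfies $B'\supseteq B^*$ (so $|B'|>q$, and $B'$ is connected by construction), and its complementary side lies inside the image of $A^*$ and contains $r$. Let $A_1$ be the connected component of $r$ in $G'$ restricted to $V(G')\setminus B'$, with $r$ expanded back to $R$ inside $G$. One checks routinely that $(A_1,V(G)\setminus A_1)$ is a cut of $G$ with at most $p$ edges, both of whose sides induce connected subgraphs, with $|A_1|\ge|R|>q$ and $|V(G)\setminus A_1|\ge|B^*|>q$ — a $(q,p)$-good edge separation. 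The algorithm runs this procedure over all $\le n^2$ choices of $(a_0,b)$ and all colourings in the derandomized family, returns any good edge separation it produces, and otherwise correctly reports that none exists, since every step outputs only certified good separations.

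The hard part is the random marking step: one must highlight \emph{just enough} of the hypothetical separation — a connected $(q+1)$-chunk of one side together with enough of the cut's boundary to confine a colour class to that side — so that the reconstructed cut simultaneously has one large connected side containing that chunk and a large connected complementary side, while paying only $2^{\Oh(\min\{p,q\}\log(p+q))}$ rather than the naive $2^{\Oh(p+q)}$. This combinatorial bookkeeping, together with the derandomization via splitters/universal sets, is exactly the content of the recursive understanding toolkit of Chitnis et al.~\cite{ChitnisCHPP12}, which we invoke for the precise exponent.
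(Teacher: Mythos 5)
The paper does not prove this statement at all: Lemma~\ref{lem:good} is imported verbatim from Chitnis et al.~\cite{ChitnisCHPP12}, so your sketch is really an attempt to reconstruct \emph{their} argument, and as written it both defers the decisive quantitative point back to the very paper being cited (``I take this as given \dots which we invoke for the precise exponent'') and contains a concrete gap in the reconstruction step. After a successful colouring you have only marked a connected $(q+1)$-chunk $A_0$ of $A^*$ and the $\le p$ endpoints $\partial_B$ of the cut on the $B^*$-side; nothing large on the $B^*$-side is highlighted. Your claim that ``one max-flow computation'' yields a minimal $(b,r)$-edge-cut of size at most $p$ whose $b$-side is inclusion-maximal, and that this side $B'$ contains $B^*$, is not correct: cuts of size at most $p$ with inclusion-maximal $b$-side (important separators) are neither unique nor obtainable from a single max-flow, and the farthest \emph{minimum} cut that one max-flow does give can have value strictly smaller than $|S^*|$, in which case its $b$-side need not contain $B^*$ at all. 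Without $B'\supseteq B^*$ there is nothing forcing the $b$-side of the cut you output to have more than $q$ vertices, so the algorithm may fail to find any good separation even when one exists, and the ``no'' answer is then wrong. The obvious repairs do not stay within the claimed time: marking a $(q+1)$-chunk on \emph{both} sides makes both colour classes of size up to $q+1+p$, so the splitter bound degenerates to $2^{\Oh((p+q)\log(p+q))}$, while branching over all important $(b,r)$-separators of size at most $p$ costs $2^{\Oh(p)}$, which exceeds $2^{\Oh(\min\{p,q\}\log(p+q))}$ when $q\ll p$.

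The actual argument of Chitnis et al.\ avoids this by colouring \emph{edges}: one asks that the at most $p$ cut edges receive one colour and the at most $2q$ edges of two trees, each spanning $q+1$ vertices inside $A^*$ and $B^*$ respectively, receive the other; the derandomized family of Lemma~\ref{lem:derand} applied with parameters $(2q,p)$ has size $2^{\Oh(\min\{p,q\}\log(p+q))}\log n$, and for every pair of resulting components of size greater than $q$ one computes a minimum cut between them and post-processes for connectivity. Two further inaccuracies in your write-up: the ``sharper count'' does not require any refined analysis of critical vertices --- the splitter-type family (exactly the paper's Lemma~\ref{lem:derand}) with parameters $(q+1,p)$ already has the stated size, so your contrast with a na\"{\i}ve $2^{\Oh(p+q)}$ bound conflates the success probability of a uniform colouring with the size of the derandomized family; and the probability computation $2^{-(|A_0|+|\partial_B|)}$ plays no role once splitters are used.
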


We use this lemma to show the following.

\begin{lemma}\label{lem:good-unbreak}
There  exists  a  deterministic  algorithm that,  given a  connected  graph $G$ along  with  integers $p$ and $q$,  in  time
$2^{\Oh(\min\{p,q\}\log (p+q))}\cdot n^3\log n$
either finds a $(q,p)$-good edge separation, or correctly concludes that $G$ is $(pq,p)$-unbreakable.
\end{lemma}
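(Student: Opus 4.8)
The plan is to run the algorithm of Lemma~\ref{lem:good} on $G$ with the very same parameters $p$ and $q$. If that algorithm outputs a $(q,p)$-good edge separation, we return it. Otherwise it has certified that $G$ has no $(q,p)$-good edge separation, and it remains only to show that this certificate already forces $G$ to be $(pq,p)$-unbreakable. The claimed running time $2^{\Oh(\min\{p,q\}\log(p+q))}\cdot n^3\log n$ is then inherited verbatim from Lemma~\ref{lem:good}, so the entire content of the proof is the structural implication ``no $(q,p)$-good edge separation $\Rightarrow$ $(pq,p)$-unbreakable''.

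For that implication I would argue by contradiction: assume $G$ is not $(pq,p)$-unbreakable, so there is a cut $(A,B)$ with $|A|,|B|>pq$ and $|E(A,B)|\le p$. Since $G$ is connected and $B\neq\emptyset$, every connected component of $G[A]$ is incident with at least one edge of $E(A,B)$, and distinct components use distinct such edges; hence $G[A]$ has at most $|E(A,B)|\le p$ components, and as $|A|>pq$ one of them, say $A_1$, has $|A_1|>q$, with $G[A_1]$ connected. Because $A_1$ is a component of $G[A]$, all edges leaving $A_1$ go to $B$, so $E(A_1,V(G)\setminus A_1)\subseteq E(A,B)$; by the same counting $G[V(G)\setminus A_1]$ has at most $p$ components, and since $|V(G)\setminus A_1|\ge|B|>pq$ its largest component $Z$ satisfies $|Z|>q$. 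Now set $A':=V(G)\setminus Z$. Then $G[Z]$ is connected by the choice of $Z$; $G[A']$ is connected because $A_1\subseteq A'$ is connected and every other component of $G[V(G)\setminus A_1]$ is joined by an edge of $G$ to $A_1$ (it has no edge to another such component); $|A'|\ge|A_1|>q$ and $|Z|>q$; and $E(A',Z)$ consists only of edges between $Z$ and $A_1$, so $E(A',Z)\subseteq E(A_1,V(G)\setminus A_1)\subseteq E(A,B)$ has size at most $p$. Thus $(A',Z)$ is a $(q,p)$-good edge separation of $G$, contradicting the certificate. Hence $G$ is $(pq,p)$-unbreakable.

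The only \emph{delicate point}, and the step I expect to need the most care, is exactly that a single large component $A_1$ of $G[A]$ need not have a connected complement, so $(A_1,V(G)\setminus A_1)$ is not immediately a good edge separation. This is resolved by the ``absorb all-but-the-largest'' move of the previous paragraph: merging every complement-component except the largest one $Z$ into $A_1$ keeps both sides connected, keeps the larger side of size $>q$, and keeps the cut inside $E(A,B)$, so the size bound and both connectivity requirements survive; the choice of $Z$ as the largest of at most $p$ pieces summing to more than $pq$ vertices is what guarantees $|Z|>q$. Everything else is bookkeeping.
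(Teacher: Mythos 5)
Your proposal is correct, and its overall structure is the one the paper uses: run the algorithm of Lemma~\ref{lem:good} with the same $p,q$, and reduce everything to the implication that a cut $(A,B)$ with $|A|,|B|>pq$ and $|E(A,B)|\le p$ forces the existence of a $(q,p)$-good edge separation; both proofs also start identically, by noting that $G[A]$ has at most $p$ components and hence contains a component of size greater than $q$. Where you diverge is in how the good separation is manufactured. The paper takes a large component $H_A$ of $G[A]$ \emph{and} a large component $H_B$ of $G[B]$, lets $C$ be a minimum cut-set separating $V(H_A)$ from $V(H_B)$ (of size at most $p$ since $E(A,B)$ already separates them), and asserts that the resulting cut $(A',B')$ is a $(q,p)$-good edge separation; the connectivity of both sides of that cut is left implicit. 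You instead work with a single large component $A_1$ of $G[A]$, observe that $E(A_1,V(G)\setminus A_1)\subseteq E(A,B)$, take the largest component $Z$ of $G[V(G)\setminus A_1]$ (which has size $>q$ by the same counting, using $|B|>pq$), and absorb all remaining complement components into $A_1$; your justification that this keeps both sides connected and keeps the cut inside $E(A,B)$ is complete and elementary. The trade-off is that your route avoids the (unstated in the paper) fact that a minimum cut between two connected subgraphs can be realized with both sides connected, at the cost of the small extra absorption argument; both yield the same bound and the same inherited running time.
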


\begin{proof}
We use Lemma~\ref{lem:good} to find a $(q,p)$-good edge separation. If the algorithm returns a $(q,p)$-good edge separation, we return it. Assume that the algorithm reported that no such separation exists. We claim that $G$ is $(pq,p)$-unbreakable. To obtain a contradiction, assume that $(A,B)$ is a cut of $G$ such that $|A|,|B|>pq$ and $|E(A,B)|\leq p$. Consider $G[A]$. Because $G$ is connected and $|E(A,B)|\leq p$, $G[A]$ has at most $p$ components. Hence, $G$ has a component $H_A$ with at least $q+1$ vertices. Symmetrically, we obtain  that $G[B]$ has a components $H_B$ with at least $q+1$ vertices. Let $C$ be a minimum cut-set in $G$ that separates $V(H_A)$ and $V(H_B)$. Clearly, $|C|\leq p$.
Let $(A',B')$ be the cut of $G$ with $V(H_A)\subseteq A'$, $V(H_B)\subseteq B'$ and $E(A',B')=C$. We have that $(A',B')$ is a $(q,p)$-good separation, but it contradicts the assumption that the algorithm reported that there is no such a separation.
\end{proof}

We use Lemma~\ref{lem:good-unbreak} to find a $(q,p)$-good edge separation for appropriate $p$ and $q$. If such a cut $(A,B)$ exists, we solve the problem recursively for one of the parts, say, for $G[A]$. But to be able to obtain a solution for the original instance, we should combine solutions for the both parts. We use the fact that $G[A]$ is separated from the remaining part of the graph by a small number of vertices that are the end-vertices of the edges of the cut-set which are called \emph{border terminals}. (In fact, we keep $2p$ border terminals to execute the recursive step.) As we want to find all essential solutions for $G[A]$ to replace this graph by a graph of bounded size, we have to take into account all possibilities for the part of a solution in $B$ to separate the border terminals.

This leads us to the following definition. 
Let $({G},w,T,{R}_1,{R}_2,{k})$ be an instance of \emc given together with a set   $X\subseteq V(G)$  of  border terminals of $G$.  
We say that an instance $(\hat{G},w,T,\hat{R}_1,\hat{R}_2,\hat{k})$ of \emc is obtained from $({G},w,T,{R}_1,{R}_2,{k})$ by \emph{border contraction} if $\hat{k}\leq k$ and  there is 
a partition $(X_1,\ldots,X_t)\in\mathcal{P}(X)$ and partition $(I_1,I_2)$ of $\{1,\ldots,t\}$, where $I_i$ can be empty, such that 
 $\hat{G}$ is obtained by consecutively  contracting $X_1,\ldots,X_t$, and setting   $\hat{R}_i=R_i\cup\{x_j\mid  j\in I_i\}$ for $i\in\{1,2\}$, where each $x_j$ is the vertex obtained from $X_j$ by contraction. Let us note that the total number of different border contractions  of a given instance depends only on the size of $X$ and $k$ and is  $k\cdot |X|^{\cO(|X|)}$.

It leads us to the following auxiliary problem. In this problem we have to output a solution (if there is any) for each of the instances 
of \emc obtained by all possible border contractions of a given instance. 
Notice that this is not a decision problem.

\defparproblema{\bemc{}}%
{A connected graph $G$, a weight function $w\colon E(G)\rightarrow \mathbb{N}$, a set of terminals $T\subseteq E(G)$,  sets $R_1,R_2\subseteq V(G)$, a positive integer $k$, and 
a sets of border terminals $X\subseteq V(G)$ with $|X|\leq 4k$.}%
{k}%
{Output for each possible  instance of \emc which can be obtained from $({G},w,T,{R}_1,{R}_2,{k})$ by border contractions of  $X$ a solution, if there is any. 
In a case when a border contraction instance has no solution, output $\emptyset$. 
 }

\smallskip
Thus an output for   \bemc is a family of edge sets, where the total number of edges in the solution is at most 
$k\cdot(4k)^{4k}\cdot 2^{4k}=2^{\Oh(k\log k)}$. 
Notice also that to solve \emc{}, we can apply an algorithm for \bemc{} for the special case $X=\emptyset$. 

\subsection{High connectivity phase}\label{sec:conn}
In this section we construct an algorithm for \bemc{} for the case when an input graph is  $(pq,p)$-unbreakable for $p=2k$ and 
$q=k^2\cdot2^{4k+4k\log 4k}+4k+1$; 
we fix the values of $p$ and $q$ for the remaining part of Section~\ref{sec:cut-graph}. First, we solve  \emc{} and then explain how to obtain the algorithm for \bemc{}.

\begin{lemma}\label{lem:OCT}
Let $G$ be a graph with an edge weight function  $w\colon E(G)\rightarrow \mathbb{N}$, $T\subseteq E(G)$
and let $k$ be a positive integer. It can be decided in time $2^{\cO(k)}\cdot n^{\Oh(1)}$ whether there is a 
cut $(A,B)$ of $G$ such that $T\subseteq E(A,B)$, and $w(E(A,B)\setminus T)\leq k$.
\end{lemma}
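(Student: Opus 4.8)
The plan is to reduce this question to \textsc{Edge Bipartization} --- the problem of deleting an edge set of minimum (weight) to make a graph bipartite --- which is fixed‑parameter tractable parameterized by the budget $k$ and solvable in time $2^{\Oh(k)}\cdot n^{\Oh(1)}$ by the iterative compression technique (the compression step is a polynomial‑time minimum‑weight cut; alternatively the weighted instance can be turned into an unweighted one of polynomial size by first contracting edges of weight larger than $k$ and then expanding the remaining integer weights into parallel edges).

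First observe that a cut $(A,B)$ of $G$ with $T\subseteq E(A,B)$ is exactly a proper $2$‑colouring of the graph $H:=(V(G),T)$; hence if $H$ is not bipartite --- e.g.\ if some terminal edge is a loop --- the answer is \no. Moreover a non‑terminal edge of weight exceeding $k$ cannot lie in a cut witnessing a \yes‑answer, so we may contract all such edges in $G$ (and answer \no\ if this makes $H$ non‑bipartite or creates a terminal loop). Assume henceforth that $H$ is bipartite and every non‑terminal edge has weight at most $k$. Now build a graph $G'$ from $G$: keep every terminal edge but give it weight $k+1$, so that it becomes effectively undeletable; delete every non‑terminal loop; and replace every non‑terminal edge $xy$ by a path through a fresh vertex $z_{xy}$ whose two edges $xz_{xy}$ and $z_{xy}y$ both receive weight $w(xy)$. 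Since $H$ is bipartite and each $z_{xy}$ has degree $2$, every odd cycle of $G'$ contains one of these new path‑edges (each of weight at most $k$); hence $G'$ can be made bipartite by deleting a finite‑weight edge set that avoids the terminal edges.

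The key claim is the identity
\[
\mathrm{eb}(G')=\min\bigl\{\,w\bigl(E(A,B)\setminus T\bigr)\ :\ (A,B)\text{ a cut of }G,\ T\subseteq E(A,B)\,\bigr\},
\]
where $\mathrm{eb}(G')$ denotes the minimum weight of an edge set whose deletion makes $G'$ bipartite. To prove it I would invoke the standard equivalence of edge bipartization with minimum uncut: $\mathrm{eb}(G')$ equals the minimum, over those $2$‑colourings $c$ of $V(G')$ that are proper on the weight‑$(k+1)$ edges, of the total weight of the monochromatic edges of $G'$ under $c$. Given a $2$‑colouring of $V(G)$, choosing the colour of $z_{xy}$ optimally leaves \emph{exactly one} of $xz_{xy},z_{xy}y$ monochromatic (contributing $w(xy)$) when $c(x)\neq c(y)$, and \emph{neither} when $c(x)=c(y)$; summing over the non‑terminal edges gives the right‑hand side, while ``proper on the weight‑$(k+1)$ edges'' is precisely the condition $T\subseteq E(A,B)$. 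Hence the answer is \yes\ iff $\mathrm{eb}(G')\le k$, which the edge‑bipartization algorithm decides within the claimed running time, noting that a deletion set of weight at most $k$ never contains a weight‑$(k+1)$ terminal edge.

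The hard part will be the careful verification of the displayed identity --- in particular, arguing that an optimal deletion set for $G'$ never removes both edges $xz_{xy},z_{xy}y$ of a subdivided path (removing one is always at least as good) and never removes the ``wrongly coloured'' path‑edge --- together with the routine but tedious handling of the remaining corner cases: zero‑weight edges, parallel edges, isolated vertices, and a possibly disconnected $G$. Invoking the $2^{\Oh(k)}\cdot n^{\Oh(1)}$ bound for (weighted) edge bipartization is not an obstacle, as it is a classical consequence of iterative compression.
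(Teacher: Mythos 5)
Your reduction is correct, but it takes a genuinely different route from the paper. The paper also subdivides every non-terminal edge, but then encodes the problem as an instance of (vertex-deletion) \textsc{Odd Cycle Transversal}: each subdivision vertex is replaced by $\min\{w(uv),k+1\}$ false twins (encoding the weight) and each original vertex by $k+1$ false twins (making it undeletable), after which the $2.3146^k\cdot n^{\Oh(1)}$ OCT algorithm of Lokshtanov et al.\ is invoked. You instead stay with edge deletions and reduce to weighted \textsc{Edge Bipartization}: the subdivision flips ``crossing'' into ``monochromatic'' for non-terminal edges, and weight $k+1$ forces the terminal edges to cross; weights are then handled by capping at $k+1$ and expanding into parallel edges (or by a weighted iterative-compression routine). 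Both arguments are sound and give $2^{\Oh(k)}\cdot n^{\Oh(1)}$; yours is arguably the more natural formulation (the problem is exactly ``minimum uncut with forced edges''), while the paper's twin construction buys the convenience of quoting an off-the-shelf unweighted vertex-deletion algorithm without worrying about multigraphs or weighted black boxes. One small inaccuracy: your displayed identity $\mathrm{eb}(G')=\min\{w(E(A,B)\setminus T)\}$ is too strong in general, since the left-hand side minimizes over \emph{all} $2$-colourings, including ones that leave a terminal edge monochromatic at cost $k+1$, which could be cheaper than every valid cut; what is true (and all you need, as you in fact note) is $\mathrm{eb}(G')\le\min\{w(E(A,B)\setminus T)\}$ together with the observation that any deletion set of weight at most $k$ avoids the weight-$(k+1)$ edges, so the two sides agree on the threshold question ``$\le k$''. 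Stating the lemma via this decision equivalence rather than the equality would make the write-up airtight; the remaining corner cases you list (loops, parallel terminal edges, weight-$0$ edges, disconnectedness) are indeed routine.
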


\begin{proof}
We show the lemma by the reduction of the problem to the \textsc{Odd Cycle Transversal (OCT)} problem. Let us remind that in the OCT problem we are given  a graph $G$ and a positive integer $k$, the task is to decide whether  there is a set of at most $k$ vertices $S$ such that $G-S$ is bipartite. Since OCT is known to be solvable in time $2^{\cO(k)}\cdot n^{\Oh(1)}$, this will prove the lemma.

Let $G$ be a graph with an edge weight function  $w\colon E(G)\rightarrow \mathbb{N}$, $T\subseteq E(G)$,  and let $k$ be a positive integer. 
Recall that we allow loops and multiple edges. To slightly simplify reduction, we first exhaustively apply two simple reduction rules.

If $e\in T$ is a loop, then  
$e\notin E(A,B)$ for any cut $(A,B)$.  If  a loop $e\notin T$, then $e$ is irrelevant. Hence we have the following reduction rule.

\begin{reduction}[{\bf Loop reduction rule}]\label{rule:loop}Let $e\in E(G)$ be a loop. 
If  $e\not\in T$, then  delete $e$. Otherwise (if $e\in T$) report that there is no required cut $(A,B)$. 
 \end{reduction}

Clearly, any two parallel edges are either both included in a cut-set or both are excluded from it. Notice also that the weights of terminals are irrelevant. Hence, we can safely apply the following rule. 

\begin{reduction}[{\bf Parallel terminal reduction rule}]\label{rule:par}
If there are two parallel  edges $e_1,e_2\in T$, delete one of them and change the weight of the remaining edge to $1$. 
\end{reduction}

From now on we assume that the rules cannot be applied.
We construct (unweighted)  graph $G'$ from $G$ as follows.
\begin{itemize}
\item Subdivide each edge $uv\notin T$, that is, add a new vertex $z_{uv}$ and replace $uv$ by $uz_{uv}$ and $vz_{uv}$; we call the new vertices \emph{subdivision} vertices.
\item Replace each subdivision vertex $z_{uv}$ by $r=\min\{w(uv),k+1\}$ false twins, i.e., we replace $z_{uv}$ by $r$ vertices adjacent to $u$ and $v$; denote by $Z_{uv}$ the set of obtained vertices.
\item Replace each vertex $v$ of $V(G)$ by $k+1$ false twins, i.e., we replace $v$ by $k+1$ vertices with the same neighbors as $v$; denote by $U_v$ the set of obtained vertices.
\end{itemize}
Notice that because of reduction rules, $G'$ is a simple graph.
We claim that there is a cut $(A,B)$ of $G$ such that $T\subseteq E(A,B)$, and $w(E(A,B)\setminus T)\leq k$
 if and only of $(G',k)$ is a yes-instance of \textsc{OCT}.

Suppose that $(A,B)$ is a cut of $G$  such that $T\subseteq E(A,B)$, and $w(E(A,B)\setminus T)\leq k$. 
We construct the set $S\subseteq V(G')$ by including in $S$ the set of vertices $Z_{uv}$ for each $uv\in E(A,B)\setminus T$. Then $G'-S$ is   bipartite.

Suppose  that there is $S\subseteq V(G')$ of size at most $k$ such that $G'-S$ is bipartite. Without loss of generality we assume that $S$ is an inclusion minimal set with this property.  Because $S$ is minimal, 
  if $x$ and $y$ are false twins of $G$, then either $x,y\in S$, or $x,y\notin S$. Let $(X,Y)$ be a bipartition of $G'-S$. 
Since $|U_v|>k$, we have that $U_v\cap S=\emptyset$ for $v\in V(G)$. Notice also that we can assume that either $U_v\subseteq X$ or $U_v\subseteq Y$ for $v\in V(G)$, as otherwise, if there is $v\in V(G)$ such that $U_v\cap X\neq\emptyset$ and $U_v\cap Y\neq\emptyset$, then the vertices of $U_v$ are isolated vertices of $G'-S$. 
Let $A=\{v\in V(G)\mid U_v\subseteq X\}$ and  $B=\{v\in V(G)\mid U_v\subseteq Y\}$. Clearly, $(A,B)$ is a cut of $G$. Let $uv\in T$. Assume that $U_u\subseteq X$. Then $U_v\subseteq Y$ and, therefore, $uv\in E(A,B)$. Let $uv\in E(A,B)\setminus T$ and assume that $u\in A$ and $v\in B$. Then $U_u\subseteq X$ and $U_v\subseteq Y$. Hence, $Z_{uv}\subseteq S$. Since $|Z_{uv}|=\min\{w(uv),k+1\}$ and $|S|\leq k$, the total weight of the edges of $E(A,B)\setminus T$ is at most $k$.

This proves the correctness of the reduction. Since \textsc{OCT} can be solved in time $2.3146^k\cdot n^{\Oh(1)}$ by the results of Lokshtanov et al.~\cite{LokshtanovNRRS14}, we get the claim of the lemma.
\end{proof}

Let $(A_1,B_1)$ and $(A_2,B_2)$ be cuts of a graph $G$. We define  the \emph{distance} between these cutsas 
$$\dist((A_1,B_1),(A_2,B_2))=\min\{|A_1\bigtriangleup A_2|, |A_1\bigtriangleup B_2|\}.$$

The following structural lemmata are crucial for our algorithm.

\begin{lemma}\label{lem:two-cuts}
Let $G$ be a graph with an edge weight function  $w\colon E(G)\rightarrow \mathbb{N}$, set of terminals $T\subseteq E(G)$, and let $k$ be a positive integer. Let  $(A_1,B_1)$ and $(A_2,B_2)$ be cuts of $G$ such that $T\subseteq E(A_i,B_i)$ and $w(E(A_i,B_i)\setminus T)\leq k$ for $i\in\{1,2\}$. Then $w(E(A_1\bigtriangleup A_2,A_1\bigtriangleup B_2))\leq 2k$.
\end{lemma}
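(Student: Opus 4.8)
The plan is to reduce the whole statement to one identity about symmetric differences of cut-sets, after which the weight bound is immediate. First I would observe that $(A_1\bigtriangleup A_2, A_1\bigtriangleup B_2)$ is genuinely a cut of $G$: since $(A_2,B_2)$ is a partition of $V(G)$, every vertex lies in exactly one of $A_2,B_2$, hence in exactly one of $A_1\bigtriangleup A_2$ and $A_1\bigtriangleup B_2$. So $w(E(A_1\bigtriangleup A_2,A_1\bigtriangleup B_2))$ is well defined.

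The key step is the identity
$$E(A_1\bigtriangleup A_2,\, A_1\bigtriangleup B_2) \;=\; E(A_1,B_1)\,\bigtriangleup\, E(A_2,B_2).$$
To prove it I would encode each cut $(A_i,B_i)$ by the indicator function $\chi_i\colon V(G)\to\{0,1\}$ of $A_i$, so that $uv\in E(A_i,B_i)$ iff $\chi_i(u)\neq\chi_i(v)$, and note that a vertex lies in $A_1\bigtriangleup A_2$ precisely when $\chi_1\oplus\chi_2$ equals $1$ on it, i.e.\ $A_1\bigtriangleup A_2$ has indicator $\chi_1\oplus\chi_2$. Thus an edge $uv$ lies in $E(A_1\bigtriangleup A_2,A_1\bigtriangleup B_2)$ iff $(\chi_1\oplus\chi_2)(u)\neq(\chi_1\oplus\chi_2)(v)$, which by associativity and commutativity of XOR is equivalent to exactly one of $\chi_1(u)\neq\chi_1(v)$, $\chi_2(u)\neq\chi_2(v)$ holding — that is, to $uv$ belonging to exactly one of $E(A_1,B_1)$, $E(A_2,B_2)$, i.e.\ to $uv\in E(A_1,B_1)\bigtriangleup E(A_2,B_2)$. (Equivalently, one may simply invoke that cocycle spaces of a graph are closed under symmetric difference.)

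Granting the identity, the bound follows at once. Since $T\subseteq E(A_1,B_1)$ and $T\subseteq E(A_2,B_2)$, no edge of $T$ can belong to exactly one of these two sets, so $T\cap\big(E(A_1,B_1)\bigtriangleup E(A_2,B_2)\big)=\emptyset$ and hence
$$E(A_1,B_1)\bigtriangleup E(A_2,B_2)\;\subseteq\;(E(A_1,B_1)\setminus T)\cup(E(A_2,B_2)\setminus T).$$
Because $w$ takes values in $\mathbb{N}$ it is subadditive on unions, so combining the last two displays gives $w(E(A_1\bigtriangleup A_2,A_1\bigtriangleup B_2))\le w(E(A_1,B_1)\setminus T)+w(E(A_2,B_2)\setminus T)\le 2k$, as required.

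I do not anticipate any real difficulty: the only point needing care is the parity bookkeeping establishing the displayed identity, and spelling out the (trivial) subadditivity $w(S_1\cup S_2)\le w(S_1)+w(S_2)$ for a nonnegative weight function. Everything else is routine.
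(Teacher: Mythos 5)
Your proof is correct and is in substance the same argument as the paper's: the paper's explicit four-quadrant decomposition of $E(A_1\bigtriangleup A_2,\,A_1\bigtriangleup B_2)$ into edges lying inside $A_1$ or $B_1$ (hence in $E(A_2,B_2)\setminus T$) and edges lying inside $A_2$ or $B_2$ (hence in $E(A_1,B_1)\setminus T$) is exactly the case-by-case form of your identity $E(A_1\bigtriangleup A_2,\,A_1\bigtriangleup B_2)=E(A_1,B_1)\bigtriangleup E(A_2,B_2)$, and both arguments finish by observing that an edge crossing exactly one of the two cuts cannot be a terminal, giving $k+k$. Your algebraic (indicator/cocycle-space) packaging is slightly slicker but does not change the underlying argument.
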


\begin{proof}
Notice that $(A_1\bigtriangleup A_2,A_1\bigtriangleup B_2)$ is a cut of $G$. 
For each $i\in\{1,2\}$, we have that  $T\subseteq E(A_i,B_i)$. Therefore, 
the set 
\[E(A_1\cap A_2,A_1\cap B_2)\cup E(B_1\cap A_2,B_1\cap B_2) \cup E(A_2\cap A_1,A_2\cap B_1)\cup E(B_2\cap A_1,B_2\cap B_1) \] 
does not contain edges from $T$.
 
Hence,  $$E(A_1\cap A_2,A_1\cap B_2)\cup E(A_2\cap B_1 ,B_1\cap B_2)\subseteq E(A_2,B_2)\setminus T,$$ and therefore, 
$$w(E(A_1\cap A_2,A_1\cap B_2)\cup E(A_2\cap B_1,B_1\cap B_2)\leq k.$$ Symmetrically, $$w(E(A_1\cap A_2,A_2\cap B_1)\cup E(A_1\cap B_2,B_1\cap B_2))\leq k.$$
Since
$$A_1\bigtriangleup A_2=(A_1\cap B_2)\cup (A_2\cap B_1) \text{ and }
 A_1\bigtriangleup B_2=(A_1\cap A_2)\cup (B_1\cap B_2),$$
the claim follows.
\end{proof}

 Let us recall that in this section we fix $p=2k$ and $q=k2^{4k+4k\log 4k}+4k+1$. 
\begin{lemma}\label{lem:struct-cuts}
Let $G$ be a connected $(pq,p)$-unbreakable graph with an edge weight function  $w\colon E(G)\rightarrow \mathbb{N}$, $T\subseteq E(G)$ and let $k$ be a positive integer. Let  $(A_1,B_1)$ and $(A_2,B_2)$ be cuts of $G$ such that $T\subseteq E(A_i,B_i)$ and $w(E(A_i,B_i)\setminus T)\leq k$ for $i\in\{1,2\}$. Then  \[\dist((A_1,B_1),(A_2,B_2))\leq pq.\]
\end{lemma}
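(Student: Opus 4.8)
The plan is to combine Lemma~\ref{lem:two-cuts} with the unbreakability hypothesis. First I would set $A=A_1\bigtriangleup A_2$ and $B=A_1\bigtriangleup B_2$, so that $(A,B)$ is a cut of $G$ and, by Lemma~\ref{lem:two-cuts}, $w(E(A,B))\leq 2k$; in particular $|E(A,B)|\leq 2k=p$ since weights are positive integers. The key observation is that $\dist((A_1,B_1),(A_2,B_2))=\min\{|A_1\bigtriangleup A_2|,|A_1\bigtriangleup B_2|\}=\min\{|A|,|B|\}$, so it suffices to show that $\min\{|A|,|B|\}\leq pq$.

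Suppose for contradiction that $\min\{|A|,|B|\}>pq$, i.e.\ both $|A|>pq$ and $|B|>pq$. Then $(A,B)$ is a cut of $G$ with $|A|,|B|>pq$ and $|E(A,B)|\leq p$, which directly contradicts the assumption that $G$ is $(pq,p)$-unbreakable. Hence $\dist((A_1,B_1),(A_2,B_2))=\min\{|A|,|B|\}\leq pq$, as required.

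The only point that needs a little care is the passage from weight to cardinality: since $w\colon E(G)\to\mathbb{N}$ takes positive integer values, $|E(A,B)|\leq w(E(A,B))\leq 2k=p$. (If one wanted to allow zero weights, the argument would still go through because only the edge count matters for unbreakability, but in our setting $w$ maps into $\mathbb{N}$ so this is immediate.) I do not anticipate a genuine obstacle here; the statement is essentially a one-line corollary of Lemma~\ref{lem:two-cuts} and the definition of $(pq,p)$-unbreakability, and the main work was already done in proving Lemma~\ref{lem:two-cuts}. The role of this lemma in the overall argument is to guarantee that all bounded-weight terminal cuts lie within bounded distance of one another, which will later let us enumerate them efficiently in the high-connectivity phase.
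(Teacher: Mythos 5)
Your proof is correct and follows essentially the same route as the paper's: apply Lemma~\ref{lem:two-cuts} to the cut $(A_1\bigtriangleup A_2,\,A_1\bigtriangleup B_2)$, pass from weight to cardinality to get $|E(A_1\bigtriangleup A_2,A_1\bigtriangleup B_2)|\leq 2k=p$, and contradict $(pq,p)$-unbreakability when both sides exceed $pq$. Only your parenthetical aside is off: with zero-weight edges the bound $w(E(A,B))\leq 2k$ would \emph{not} control $|E(A,B)|$, but this is moot here since the weights are positive integers, exactly as the paper's own proof implicitly uses.
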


\begin{proof}Aiming towards a 
 a contradiction, we assume that $\dist((A_1,B_1),(A_2,B_2))> pq$.  Let us note  that  $(A_1\bigtriangleup A_2,A_1\bigtriangleup B_2)$ is a partition of $V(G)$. 
Since $\dist((A_1,B_1),(A_2,B_2))> pq$,  we have that $|A_1\bigtriangleup A_2|>pq$ and $|A_1\bigtriangleup B_2|>pq$.
By Lemma~\ref{lem:two-cuts}, $ w(E(A_1\bigtriangleup A_2,A_1\bigtriangleup B_2))\geq |A_1\bigtriangleup A_2,A_1\bigtriangleup B_2)| \leq 2k=p$;  contradicting the assumption that $G$ is  $(pq,p)$-unbreakable.
\end{proof}

Our algorithm for \emc{}  uses the \emph{random separation} technique proposed by Cai, Chan and Chan~\cite{CaiCC06}.  For derandomization, we use the following lemma proved by Chitnis et al.~\cite{ChitnisCHPP12}.

\begin{lemma}[\cite{ChitnisCHPP12}]\label{lem:derand}
Given a set $U$ of size $n$, and integers $0\leq a,b\leq n$, one can in time $2^{\Oh(\min\{a,b\}\log (a+b))}\cdot n\log n$ construct a family $\mathcal{F}$ of at most 
 $2^{\Oh(\min\{a,b\}\log (a+b))}\cdot\log n$ subsets of $U$, such that the following holds: for any sets $A,B\subseteq U$, $A\cap B=\emptyset$, $|A|\leq a$, $|B|\leq b$, there exists a set $S\in\mathcal{F}$ with $A\subseteq S$ and $B\cap S=\emptyset$.
\end{lemma}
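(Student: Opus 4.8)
\medskip
\noindent\textbf{Proof plan.} The plan is to combine a \emph{splitter} (perfect hash family) with a brute-force search over a small image universe, exploiting the asymmetry between $a$ and $b$ so that only the smaller of the two sides is ever enumerated. First I would reduce to the case $a=\min\{a,b\}$, i.e.\ $a\le b$: if instead $b<a$, I would run the construction below with the roles of $a$ and $b$ swapped (so the ``small side'' has size at most $b$) to get a family $\mathcal{F}_0$, and then output $\mathcal{F}=\{U\setminus S\mid S\in\mathcal{F}_0\}$; complementation turns a set that separates $B$ from $A$ into one that separates $A$ from $B$, and it leaves the size bound unchanged. So from now on assume $a\le b$.

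Next I would invoke the classical construction of splitters / perfect hash families (as used, e.g., in \cite{ChitnisCHPP12}): in time $(a+b)^{\mathcal{O}(1)}\cdot n\log n$ one builds a family $\mathcal{H}$ of at most $(a+b)^{\mathcal{O}(1)}\cdot\log n$ functions $h\colon U\to[(a+b)^2]$ such that every $W\subseteq U$ with $|W|\le a+b$ is mapped injectively by some $h\in\mathcal{H}$. I would then take
\[
\mathcal{F}=\bigl\{\,h^{-1}(J)\ \bigm|\ h\in\mathcal{H},\ J\subseteq[(a+b)^2],\ |J|\le a\,\bigr\}.
\]
To argue correctness I would fix disjoint $A,B$ with $|A|\le a$ and $|B|\le b$, set $W=A\cup B$ (so $|W|\le a+b$), pick $h\in\mathcal{H}$ injective on $W$, and let $J=h(A)$. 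Then $|J|=|A|\le a$, so $h^{-1}(J)\in\mathcal{F}$; clearly $A\subseteq h^{-1}(J)$; and if some $x\in B$ lay in $h^{-1}(J)$, then $h(x)\in h(A)$, so $h(x)=h(y)$ for some $y\in A$, and injectivity of $h$ on $W$ would force $x=y$, contradicting $A\cap B=\emptyset$. Hence $S=h^{-1}(J)$ has $A\subseteq S$ and $B\cap S=\emptyset$.

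For the bounds I would estimate $|\mathcal{F}|\le|\mathcal{H}|\cdot\sum_{i=0}^{a}\binom{(a+b)^2}{i}\le(a+b)^{\mathcal{O}(1)}\log n\cdot(a+1)\bigl((a+b)^2\bigr)^{a}=2^{\mathcal{O}(\min\{a,b\}\log(a+b))}\cdot\log n$, and similarly the time to build $\mathcal{H}$ and then enumerate all pairs $(h,J)$, computing each $h^{-1}(J)\subseteq U$, is $2^{\mathcal{O}(\min\{a,b\}\log(a+b))}\cdot n\log n$. I expect the main point to get right — more of a calibration than a real obstacle — to be the choice of the range size $(a+b)^2$: it must be polynomially large so that a small splitter family exists, yet small enough that brute-forcing over all $\le a$-subsets of the range costs only $2^{\mathcal{O}(a\log(a+b))}$. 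The conceptual observation that makes the whole argument go through is that one never needs to ``shatter'' the large side $B$ (which would cost $2^{b}$): injectivity of $h$ on $A\cup B$ already makes $h(A)$ and $h(B)$ disjoint, and then it suffices to guess the image $h(A)$ of the small side only.
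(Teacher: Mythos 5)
Your proposal is correct, and it reconstructs essentially the standard argument behind this lemma, which the paper itself does not prove but imports from Chitnis et al.~\cite{ChitnisCHPP12}: reduce to $a=\min\{a,b\}$ by complementation, take an $(n,a+b,(a+b)^2)$-splitter in the sense of Naor--Schulman--Srinivasan, and output the preimages $h^{-1}(J)$ over all functions $h$ in the splitter and all sets $J$ of at most $a$ range elements, with correctness following from injectivity of some $h$ on $A\cup B$. The size and time bounds you give, $2^{\Oh(\min\{a,b\}\log(a+b))}\cdot\log n$ and $2^{\Oh(\min\{a,b\}\log(a+b))}\cdot n\log n$, are computed correctly, so there is nothing to fix.
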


Now we are ready to give the algorithm for \emc{} for unbreakable graphs.

\begin{lemma}\label{thm:graph-cut-high}
\emc{} can be solved in $2^{\Oh(k^2\log k)}n^{\Oh(1)}$ time for $(pq,p)$-unbreakable graphs.
\end{lemma}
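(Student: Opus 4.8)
The plan is to combine the structural lemmas (Lemma~\ref{lem:two-cuts} and Lemma~\ref{lem:struct-cuts}) with the random separation technique of Cai, Chan and Chan, derandomized via Lemma~\ref{lem:derand}. First I would use Lemma~\ref{lem:OCT} as a subroutine: if the instance has a solution, then in particular there is a cut $(A,B)$ of $G$ with $T\subseteq E(A,B)$ and $w(E(A,B)\setminus T)\le k$, so we may first test this in time $2^{\Oh(k)}n^{\Oh(1)}$ and reject otherwise. Having fixed one such ``reference'' cut $(A_0,B_0)$ — which need not be a minimal cut-set nor respect $R_1,R_2$ — Lemma~\ref{lem:struct-cuts} tells us that every \emph{solution} cut $(A,B)$ satisfies $\dist((A_0,B_0),(A,B))\le pq$; after possibly swapping $A\leftrightarrow B$, this means the symmetric difference $A_0\bigtriangleup A$ is a vertex set of size at most $pq$. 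So every solution is obtained from the fixed reference partition by ``flipping'' at most $pq$ vertices, together with at most $2k$ edge endpoints of the cut (since $w(E(A,B)\setminus T)\le k$ forces $|E(A,B)\setminus T|\le k$, hence at most $2k$ endpoints).

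Next I would apply random separation: build, via Lemma~\ref{lem:derand} with $U=V(G)$ and appropriate bounds $a,b=\Oh(pq)=\Oh(k^3)$... — more precisely, we want to isolate the at most $pq$ flipped vertices from the rest, so we get a family $\mathcal{F}$ of $2^{\Oh(pq\log(pq))}\cdot\log n=2^{\Oh(k^2\log k)}\log n$ subsets such that for the (unknown) flipped set $D=A_0\bigtriangleup A$ there is $S\in\mathcal{F}$ with $D\subseteq S$ and (a chosen small ``boundary'' set disjoint from $D$) avoiding $S$. Then I guess $S$; within the induced subgraph $G[S]$ (a bounded-size object only in the sense that $D$ lives inside it but $S$ itself can be large) the connected component structure, together with the fixed reference partition outside $S$, pins down the candidate cut. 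Here the $(pq,p)$-unbreakability is used crucially: since $|D|\le pq$ and $|V(G)\setminus A|$ or $|V(G)\setminus B|$... one of the two sides of every solution cut has $>pq$ vertices forced to agree with the reference side, so there is a unique ``large'' side, and the solution is determined by the component(s) of $G$ sitting on the small side. Concretely, after guessing which side is small, I would enumerate connected subsets $X$ with $|X|\le pq$ that contain the relevant border terminals and terminal-edge endpoints; for each, set $C=E(X,V(G)\setminus X)$, check that $C\supseteq T$, that $G[X]$ and $G[V(G)\setminus X]$ are connected (minimality), that $R_1,R_2$ land in distinct components of $G-C$, and that $w(C)-w(T)\le k$; keep the best. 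Enumerating connected vertex sets of size $\le pq$ containing a fixed seed of size $\Oh(k)$ is done by standard branching in $2^{\Oh(pq)}n^{\Oh(1)}=2^{\Oh(k^3)}n^{\Oh(1)}$ time — but this is too slow, so instead the random-separation step replaces the enumeration: with $S$ guessed, $D$ is forced to be a union of connected components of $G[S]$ that are ``cut off'' from the reference-large side, and there are at most $2^{|\text{components of }G[S]\text{ touching the boundary}|}$ choices, which the derandomized family controls to be $2^{\Oh(k^2\log k)}$.

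The overall running time is the product of $|\mathcal{F}|=2^{\Oh(k^2\log k)}\log n$, the guesses within each $S$, and polynomial-time checks, giving $2^{\Oh(k^2\log k)}n^{\Oh(1)}$. For the \bemc\ version (promised at the end of Section~\ref{sec:rec-und} to follow ``a similar approach''), one additionally iterates over all $k\cdot|X|^{\Oh(|X|)}=2^{\Oh(k\log k)}$ border contractions and runs the above for each; the border terminals $X$ are simply added to the fixed seed of vertices that must lie in a prescribed side, which changes only the constant in the exponent.

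I expect the main obstacle to be the bookkeeping that makes random separation actually yield a \emph{bounded} number of candidate cuts rather than merely a bounded-size ``core'': one must argue that, modulo the reference cut and the guessed set $S$, a solution is determined by a choice among only $2^{\Oh(k^2\log k)}$ possibilities — this is where the interplay between unbreakability (unique large side), the distance bound (small flipped set), and the edge-weight budget (few cut edges, hence few boundary vertices to separate) all have to be combined carefully. A secondary subtlety is handling the minimality constraint (c) and the $R_1,R_2$-separation (b) simultaneously with the weight constraint: minimality means $G[A]$ and $G[B]$ must be connected, so some candidate ``flips'' must be discarded, and one must be sure that discarding them does not lose the optimal solution — this follows because we enumerate \emph{all} valid small flipped sets consistent with some $S\in\mathcal{F}$, and the true solution's flipped set is one of them.
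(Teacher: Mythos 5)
Your overall plan is the paper's plan (reference cut from Lemma~\ref{lem:OCT}, distance bound from Lemma~\ref{lem:struct-cuts}, random separation derandomized by Lemma~\ref{lem:derand}), but two points that carry the actual load are wrong or missing. First, the derandomization accounting does not work as you state it. You invoke Lemma~\ref{lem:derand} with ``$a,b=\Oh(pq)=\Oh(k^3)$'', but $q=k^2\cdot 2^{4k+4k\log 4k}+4k+1$, so $pq=2^{\Theta(k\log k)}$ is not polynomial in $k$; and if both parameters are taken to be $pq$, the family has size $2^{\Oh(pq\log (pq))}$, which is doubly exponential in $k\log k$, not $2^{\Oh(k^2\log k)}$. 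The bound $2^{\Oh(k^2\log k)}$ only comes from an \emph{asymmetric} application: the flipped set $D=A\bigtriangleup A'$ has size at most $pq$, while its neighborhood $N_G(D)$ has size at most $2k=p$ (this is exactly what Lemma~\ref{lem:two-cuts} gives, applied to the reference cut and the solution cut, since at most $2k$ edges leave $D$), so one takes $a$ of order $pq$ and $b=p$, and $\min\{a,b\}\log(a+b)=\Oh(k\cdot k\log k)$. Your proposal gestures at a ``small boundary set'' but never identifies it as $N_G(D)$ nor bounds it by $2k$, and without that the whole running-time claim collapses.

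Second, the step you yourself flag as ``the main obstacle'' — showing that, once a coloring $S$ is fixed, a solution is determined by one of only boundedly many choices — is precisely the heart of the paper's proof, and you do not supply it. The paper resolves it with two recoloring rules and a counting argument: a red component containing an endpoint of a terminal edge whose other endpoint is blue cannot be flipped (recolor it blue), and a red component with no edge of the reference cut $E(A,B)$ incident to it cannot be flipped either, because flipping it would disconnect one side of the resulting cut (here $n>pq$ and minimality, i.e.\ connectivity of both sides, are used); after these rules every surviving red component is incident to a non-terminal edge of $E(A,B)$, of which there are at most $k$, so at most $k$ red components remain and one enumerates the $2^k$ unions. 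Your alternative — taking unions of components of $G[S]$ ``touching the boundary'' — is not bounded by any function of $k$ without this crossing-edge argument, and the claim that ``the derandomized family controls'' the number of such components is unfounded: the family controls which colorings are tried, not how many components a coloring produces. So the proposal is the right skeleton, but both the quantitative derandomization and the bounded-enumeration argument must be repaired before it constitutes a proof.
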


\begin{proof}
Let $(G,w,T,R_1,R_2,k)$ be an instance of \emc{}, where $G$ is  $(pq,p)$-unbreakable. If $n\leq pq$, we solve the problem by the brute force selection of at most $k$ edges in time $2^{\Oh(k^2\log k)}n^{\Oh(1)}$. From now we assume that $n>pq$.

Using Lemma~\ref{lem:OCT}, we find a cut $(A,B)$ of $G$ such that $T\subseteq E(A,B)$ and $w(E(A,B)\setminus T)\leq k$. If such a cut does not exist, we conclude that we are a given a no-instance.

Let $(G,w,T,R_1,R_2,k)$ be a yes-instance and let $C=E(A',B')$ be a solution. Without loss of generality, we  assume that $\dist((A,B),(A',B'))=|A\bigtriangleup A'|$. By Lemma~\ref{lem:struct-cuts}, 
$|A\bigtriangleup A'|\leq pq$.  It means, that to solve the problem, we can either  find a cut $(A',B')$ or, equivalently, $A\bigtriangleup A'$ with these properties or conclude correctly that such a cut does not exist.
First, we describe a randomized algorithm which finds $A\bigtriangleup A'$ and then explain how to derandomize it.

We randomly color the vertices of $V(G)\setminus R$ by two colors \emph{red} and \emph{blue} with the probabilities $1-\frac{1}{pq}$ and $\frac{1}{pq}$ respectively. 
We are looking for a set $X\subseteq V(G)$  such that the following holds:
\begin{itemize}
\item[i)]  $|X|\leq pq$.
\item[ii)] For $A'=A\bigtriangleup X$ and $B'=V(G)\setminus A'$, $C=E(A',B')$ is a solution for $(G,w,T,R_1,R_2,k)$.
\item[iii)] The vertices of $X$ are red and the vertices of $N_G(X)$ are blue.
\end{itemize}
We say that $C=E(A',B')$ is a \emph{colorful solution}.

The vertices of $G$ are colored in red and in blue induce subgraphs that we call \emph{red} and \emph{blue} correspondingly. We also say that $H$ is a \emph{red component}  if $H$ is a connected component of the red (respectively, blue) subgraph of $G$. Because of i)--iii), we have the following properties:
\begin{itemize}
\item if $H$ is a red component, then either $V(H)\subseteq X$ or $V(H)\cap X=\emptyset$,
\item if $v\in V(G)$ is colored blue, then $v\notin X$.
\end{itemize}
We use i)--iii) and these properties to obtain reduction rules that recolor red components in  blue, that is, each vertex of such a component becomes blue. We apply these rules exhaustively.

Since $T\subseteq E(A',B')$ if $C=E(A',B')$ is a solution for $(G,w,T,R_1,R_2,k)$, we get the following rule.

\begin{reduction}[{\bf $T$-reduction rule}]\label{rule:T-red}
If there is $uv\in T$ such that $u$ is red and $v$ is blue, then recolor the red component $H$ containing $u$ in blue.
\end{reduction}

We say that $uv\in E(G)$ is a \emph{crossing edge for a red component $H$} if $u\in V(H)$, $v\notin V(H)$, and either $u\in A$ and $v\in B$ or $u\in B$ and $v\in A$. Notice that $v$ is colored blue.  
Notice also that if $H$ is a red component without crossing edges and $V(H)\subseteq X$, then for $A'=A\bigtriangleup X$ and $B'=V(G)\setminus A'$,
$V(H)\cap A'$ and $V(H)\cap B'$ induce components of $G[A']$ and $G[B']$ respectively. If $|V(H)|\leq pq$, then we have that $G[A']$ or $G[B']$ is not connected, because $|V(G)|>pq$. Hence, $V(H)\cap X=\emptyset$ if $C=E(A',B')$ is a solution for $(G,w,T,R,k)$.
It gives us the next rule.

\begin{reduction}[{\bf Crossing reduction rule}]\label{rule:cross-red}
If there is a red component $H$ without crossing edges, then recolor $H$ blue.
\end{reduction}
 
After the exhaustive applications of Rules~\ref{rule:T-red} and \ref{rule:cross-red}, each red component $H$ has crossing edges and these crossing edges are not in $T$. Since $w(E(A,B)\setminus C)\leq k$, the total number of crossing edges is at most $k$ and, therefore, there are at most $k$ red components. Because $X$ is a union of some red components, we check all possibilities for $X$ (the number of all possibilities is  at most $2^{k}$), and for each choice, we check whether $C=E(A',B')$ is a solution for $(G,w,T,R_1,R_2,k)$. 
If we do not succeed in finding a solution for at least one 
 of the choices, then we return that there is no solution.

Since Rules~\ref{rule:T-red} and \ref{rule:cross-red} can be run  in polynomial time, a colorful solution for $(G,w,T,R_1,R_2,k)$ can be found in time $2^k\cdot n^{\Oh(1)}$. 

Our next aim is to evaluate the probability of existence of a colorful solution for  $(G,w,T,R_1,R_2,k)$ if  $(G,w,T,R_1,R_2,k)$ is a yes-instance of  \emc{}. 
Assume that $(G,w,T,R_1,R_2,k)$ is a yes-instance and $C=E(A',B')$ is a solution, where $(A',B')$ is a cut of $G$. We assume that $\dist((A,B),(A',B'))=|A\bigtriangleup A'|$. Let $X=A\bigtriangleup A'$.
By Lemma~\ref{lem:struct-cuts},  $|X|=\dist((A,B),(A',B'))\leq pq$. By Lemma~\ref{lem:two-cuts}, $|E(X,V(G)\setminus X)|\leq 2k$ and, therefore, $|N_G(X)|\leq 2k$. Then the probability that the vertices of $X$ are colored red and the vertices of $N_G(X)$ are colored blue is at least $(1-\frac{1}{pq})^{pq}\cdot \frac{1}{(pq)^{2k}}\geq \frac{1}{4(pq)^{2k}}$ if $pq\geq 2$.
If we run our randomized algorithm $N=4(pq)^{2k}$ times, then the probability that we do not have a colorful solution for each of the $N$ random colorings, is at most 
$(1-\frac{1}{4(pq)^{2k}})^N\leq e^{-1}$. It means, that it is sufficient to run the algorithm $N$ times to claim that if we do not find a solution for $N$ random colorings, then with   probability at least $1-e^{-1}>0$, $(G,w,T,R_1,R_2,k)$ is a no-instance. In other words,  we have a true-biased Monte-Carlo algorithm which  runs in time $N\cdot 2^k\cdot n^{\Oh(1)}$ if the initial partition $(A,B)$ is given. Since  $p=2k$ and $q=k2^{4k+4k\log 4k}+4k+1$ and the initial partition $(A,B)$ can be found in time $2^{\Oh(k)}\cdot n^{\Oh(1)}$, see  Lemma~\ref{lem:OCT}, the total running time is $2^{\Oh(k^2\log k)}\cdot n^{\Oh(1)}$.

To derandomize the algorithm, we use Lemma~\ref{lem:derand} for $a=q$, $b=p$ and $U=V(G)$.  We construct the family $\mathcal{F}$ of subsets of $V(G)$ described in Lemma~\ref{lem:derand}, and instead of random colorings, for each $S\in \mathcal{F}$,  we consider the coloring of $G$ such that the vertices of $S$ are colored red and the vertices of $V(G)\setminus S$ are blue.
Lemma~\ref{lem:derand} guarantees that $(G,w,T,R_1,R_2,k)$ is a yes-instance of  \emc{} if and only if we have a colorful solution for at least one of $|\mathcal{F}|$ colorings. Since $\mathcal{F}$ can be constructed in time $2^{\Oh(k^2\log k)}\cdot n^{\Oh(1)}$ and $|\mathcal{F}|=2^{\Oh(k^2\log k)}\cdot n^{\Oh(1)}$, the running time of the derandomized algorithm is $2^{\Oh(k^2\log k)}\cdot n^{\Oh(1)}$.
\end{proof}

We use Lemma~\ref{thm:graph-cut-high}, to solve \bemc{}.

\begin{lemma}\label{lem:graph-cut-bord-high}
\bemc{} can be solved in time $2^{\Oh(k^2\log k)}\cdot n^{\Oh(1)}$ for $(pq,p)$-unbreakable graphs.
\end{lemma}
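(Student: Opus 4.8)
The plan is to obtain \bemc{} on $(pq,p)$-unbreakable graphs as a thin wrapper around the \emc{} algorithm of Lemma~\ref{thm:graph-cut-high}. Recall from Section~\ref{sec:rec-und} that, since $|X|\leq 4k$, the number of instances of \emc{} obtainable from $(G,w,T,R_1,R_2,k)$ by a border contraction of $X$ is $k\cdot|X|^{\Oh(|X|)}=2^{\Oh(k\log k)}$: such an instance is specified by a partition of $X$ into blocks, an assignment of each block to $R_1$, to $R_2$, or to neither, and a value $\hat k\leq k$. First I would simply enumerate all of these border contractions, and for each build in polynomial time the corresponding instance $(\hat G,w,T,\hat R_1,\hat R_2,\hat k)$ of \emc{}; it then suffices to solve \emc{} on each of them and report the solution found (or $\emptyset$).

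The one point that needs an argument is that every such $\hat G$ is again a connected $(pq,p)$-unbreakable graph, so that Lemma~\ref{thm:graph-cut-high} is applicable. Connectivity is immediate, since contracting vertex sets never disconnects a graph. For unbreakability, suppose $\hat G$ had a cut $(\hat A,\hat B)$ witnessing breakability, i.e.\ $|\hat A|,|\hat B|>pq$ and $|E(\hat A,\hat B)|\leq p$. Pull it back to $G$: replace, on whichever side it lies in $(\hat A,\hat B)$, each contracted vertex $x_j$ by its whole block $X_j$; this yields a cut $(A,B)$ of $G$ with $|A|\geq|\hat A|>pq$ and $|B|\geq|\hat B|>pq$ (each block is nonempty), and with a bijection between $E(A,B)$ and $E(\hat A,\hat B)$ --- an edge of $G$ with both ends in a single block $X_j$ becomes a loop at $x_j$ and lies on neither side of either cut, while every other edge of $G$ crosses $(A,B)$ exactly when its image crosses $(\hat A,\hat B)$. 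Hence $|E(A,B)|\leq p$, contradicting that $G$ is $(pq,p)$-unbreakable. Thus $\hat G$ is $(pq,p)$-unbreakable. This pullback is really the only non-routine step; everything else is bookkeeping.

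Finally, for each border-contracted instance I would run the algorithm behind Lemma~\ref{thm:graph-cut-high}; that algorithm is constructive (in its proof it exhibits the recolored set and the corresponding cut and verifies it is a solution), so when the instance is a yes-instance I output the produced cut-set $C$, and otherwise output $\emptyset$ --- border contractions that produce $\hat R_1\cap\hat R_2\neq\emptyset$, or make a terminal edge a loop, are automatically handled as no-instances. One small bookkeeping remark: the budget $\hat k$ may be strictly smaller than $k$, but the proof of Lemma~\ref{thm:graph-cut-high} only ever uses that the weight excess of the cuts it manipulates is at most $k$ (via Lemmas~\ref{lem:two-cuts} and \ref{lem:struct-cuts} and the random-separation bound), so it applies verbatim with budget any $\hat k\leq k$ while keeping $p$ and $q$ fixed as the functions of $k$ chosen in Section~\ref{sec:conn}, still within time $2^{\Oh(k^2\log k)}\cdot n^{\Oh(1)}$. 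Since $\hat G$ has at most $n$ vertices, each of the $2^{\Oh(k\log k)}$ calls costs $2^{\Oh(k^2\log k)}\cdot n^{\Oh(1)}$, and the total running time is $2^{\Oh(k^2\log k)}\cdot n^{\Oh(1)}$, as claimed.
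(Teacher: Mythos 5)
Your proposal is correct and follows essentially the same route as the paper: enumerate the $2^{\Oh(k\log k)}$ border contractions, observe that each contracted graph remains connected and $(pq,p)$-unbreakable, and invoke Lemma~\ref{thm:graph-cut-high} on each resulting \emc{} instance. Your pullback argument for preservation of unbreakability (and the remark about budgets $\hat k\leq k$) just fills in details the paper states without proof, so there is nothing genuinely different here.
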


\begin{proof}
Let $(G,w,T,R_1,R_2,k,X)$ be an instance of \bemc. Let us recall that the output of \bemc{} consists of solutions of 
\emc for all possible 
border contractions      $(\hat{G},w,T,\hat{R}_1,\hat{R}_2,\hat{k})$ of  $(G,w,T,R_1,R_2,k,X)$.
 Notice that if $G$ is $(pq,p)$-unbreakable, then each graph $\hat{G}$ is $(pq,p)$-unbreakable as well, because contractions of sets do not violate this property. We  apply Lemma~\ref{thm:graph-cut-high} for each instance $(\hat{G},w,T,\hat{R},\hat{k})$ of \emc. Since  the number of all possible border contractions is in  $2^{\Oh(k\log k)}$, the  
 total running time required to output the family of edge sets for \bemc
 is  in  $2^{\Oh(k^2\log k)}\cdot n^{\Oh(1)}$.
\end{proof}

\subsection{Proof of Theorem~\ref{thm:graph-cut}}\label{sec:rec}
We are ready to proceed with the proof of Theorem~\ref{thm:graph-cut}, which says that 
\emc is solvable in time $2^{\Oh(k^2\log k)}\cdot n^{\Oh(1)}$. 
We  give a  recursive algorithm  solving the more general \bemc. Then to solve \emc we solve the  special case $X=\emptyset$
of \bemc. 
 Recall that we fixed $p=2k$ and $q=k2^{4k+4k\log 4k}+4k+1$.

Let $(G,w,T,R_1,R_2,k,X)$ be an instance of \bemc.

It is convenient to sort out a trivial case first. Notice that if for $(G,w,T,R_1,R_2,k,X)$  the set of terminal edges is a cut-set of the input graph but not a minimal cut-set, then this is a no-instance. It gives us the following rule.

\begin{reduction}[{\bf Stopping rule}]\label{rule:trivial}
If  
 graph $G-T$ has at least two components without border terminals, then output the empty set for every partition 
$ (X_1,\ldots,X_t) $ and every partition $(I_1,I_2)$ of $\{1,\ldots,t\}$. 
\end{reduction}

From now we assume that Stopping rule is not applicable to the given instance. 

We apply Lemma~\ref{lem:good-unbreak} on $G$. If   $G$ is $(pq,p)$-unbreakable, then we apply Lemma~\ref{lem:graph-cut-bord-high} to solve the problem. Otherwise, 
 the algorithm from Lemma~\ref{lem:good-unbreak} returns a $(q,p)$-good edge separation $(U,W)$ of $G$. 

The set of border terminals $X$ has size at most $4k=2p$.  Hence, $|X\cap U|\leq p$ or $|C\cap W|\leq p$. Assume without loss of generality that $|X\cap U|\leq p$. 
Let $T'=T\cap E(G[U])$,
$R_1'=R_1\cap U$, $R_2'=R_2\cap U$, and denote by $w'$ the restriction of $w$ on $E(G[U])$. 
We also define the set of border terminals 
$$X'=(X\cap U)\cup\{v\in V(G[U])\mid v\text{ is incident with an edge of }E(U,W)\};$$ 
observe that $|X'|\leq 2p=4k$, because $|E(U,W)|\leq p$. We consider the instance \linebreak $(G[U],w',T',R_1',R_2',k,X')$ of \bemc{} and solve the problem recursively. 

Recall that the output of \bemc{}  for $(G[U],w',T',R_1',R_2',k,X')$ is a family of solutions $C$ for 
all possible border contractions. In other words, this is a family of solutions for 
  instances $(\hat{G}',w',T',\hat{R}_1',\hat{R}_2',\hat{k})$ 
for all $\hat{k}\leq k$ such that a solution exist, and  $\emptyset$ if there is no solutions.  Each $\hat{G}'$ and $\hat{R}_i'$ is constructed as follows:
for every partition $(X_1,\ldots,X_t)\in\mathcal{P}(X')$ and every partition $(I_1,I_2)$ of $\{1,\ldots,t\}$, where $I_i$ can be empty, we 
construct $\hat{G}'$ by consecutively  contracting $X_1,\ldots,X_t$, and set  $\hat{R}_i'=R_i'\cup\{x_j\mid  j\in I_i\}$ for $i\in\{1,2\}$, where each $x_j$ is the vertex obtained from $X_j$ by the contraction. For each of the subproblems,  solution $C$  is a set of edges of $G[U]$. 

Denote by $L$ the union of all sets generated by the algorithm for $(G[U],w',T',R_1',R_2',k,X')$. Let $G''$ be the graph obtained from $G$ by  contracting the edges of $E(G[U])\setminus (L\cup T)$. Denote by $\alpha\colon V(G)\rightarrow V(G'')$ the mapping that maps each vertex $v\in V(G)$ to the vertex obtained from $v$ by edge contractions. Let 
$R_1''=\alpha(R_1)$, $R_2''=\alpha(R_2)$ and $X''=\alpha(X)$. Notice that the edges of $T$ are not contracted. Denote by $T''$ the edges of $G''$ obtained from $T$; clearly, for each $uv\in T$, we have $\alpha(u)\alpha(v)\in T''$. For every $uv\in E(G)$ that was not contracted, the weight of the obtained edge $\alpha(u)\alpha(v)$ is $w''(\alpha(u)\alpha(v))=w(uv)$. We obtain a new instance $(G'',w'',T'',R_1'',R_2'',k,X'')$ of \bemc{}. As before, we do not distinguish between the edges obtained by contracting edges or the original edges; thus $T''=T$.

We claim that the original   $(G,w,T,R_1,R_2,k,X)$ and new  $(G'',w'',T'',R_1'',R_2'',k,X'')$ instances are equivalent in the following sense:  There is a solution (in fact, every solution) for  $(G'',w'',T'',R_1'',R_2'',k,X'')$ that is a solution for  $(G,w,T,R_1,R_2,k,X)$, and there is a solution for $(G,w,T,R_1,R_2,k,X)$ that is a solution for $(G'',w'',T'',R_1'',R_2'',k,X'')$.

\begin{lemma}\label{lem:eq}
For every partition $(X_1,\ldots,X_t)\in\mathcal{P}(X)$, every partition $(I_1,I_2)$ of $\{1,\ldots,t\}$, and every nonnegative $\hat{k}\leq k$, the instances 
$(\hat{G},w,T,\hat{R}_1,\hat{R}_2,\hat{k})$ and $(\hat{G}'',w'',T'',\hat{R}_1'',\hat{R}_2'',\hat{k})$ of \emc{} are equivalent, where $\hat{G}$ is constructed from $G$ by consecutive contracting $X_1,\ldots,X_t$, $\hat{R}_i=R_i\cup\{x_j\mid  j\in I_i\}$ for $i\in\{1,2\}$, where each $x_j$ is the vertex obtained from $X_j$ by  contraction, and, respectively,
$\hat{G}''$ is constructed from $G''$ by consecutive contracting $\alpha(X_1),\ldots,\alpha(X_t)$, $\hat{R}_i''=R_i''\cup\{x_j\mid  j\in I_i\}$ for $i\in\{1,2\}$, where each $x_j$ is the vertex obtained from $\alpha(X_j)$ by contraction.
\end{lemma}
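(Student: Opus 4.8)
The plan is to prove the two directions of the claimed equivalence separately. As is usual for recursive understanding, we may assume by induction on $|V(G)|$ that the recursive call of \bemc on $(G[U],w',T',R_1',R_2',k,X')$ is correct — and this lemma is exactly what makes that recursion sound. Fix the partition $(X_1,\ldots,X_t)\in\mathcal{P}(X)$, the partition $(I_1,I_2)$ of $\{1,\ldots,t\}$, and $\hat k\le k$, and let $\hat G,\hat R_1,\hat R_2,\hat G'',\hat R_1'',\hat R_2''$ be as in the statement. The direction ``a solution for $\hat G''$ yields a solution for $\hat G$'' is routine: $\hat G''$ is obtained from $\hat G$ by contracting a set $F$ of edges with $F\cap T=\emptyset$ (the edges of $E(G[U])\setminus(L\cup T)$, whose contraction commutes with the outer identifications of the $X_j$'s), so if $C''$ is a solution for $\hat G''$ with cut $(\hat A'',\hat B'')$, both sides connected, its preimage $(\hat A,\hat B)$ in $\hat G$ has $E_{\hat G}(\hat A,\hat B)=C''$ (every edge of $F$ lies on one side), and $\hat G[\hat A],\hat G[\hat B]$ are connected because contracting connected pieces cannot destroy connectivity; the sets $T,\hat R_1,\hat R_2$ and the weight are unaffected, so $C''$ is a solution for $\hat G$.

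For the substantive direction we take a solution $C$ for $(\hat G,\ldots)$, with induced cut $(\hat A,\hat B)$: $\hat G[\hat A],\hat G[\hat B]$ connected, $\hat A\supseteq\hat R_1$ and $\hat B\supseteq\hat R_2$ the two components of $\hat G-C$, $T\subseteq C$, $w(C)-w(T)\le\hat k$. The idea is to reroute the part of $C$ inside $G[U]$ through edges already retained in $L$. The global cut $(\hat A,\hat B)$, restricted to how the border vertices $X'$ attach to the two sides through the $W$-part of $G$, singles out a concrete border contraction of the recursive instance: group $u,v\in X'$ together iff they lie in the same component of $\hat G-C$ after deleting the vertices of $U\setminus X'$, label each group by the side ($\hat A$ or $\hat B$) of $\hat G-C$ that contains it (well defined since $\hat G-C$ has exactly two components), and take budget $\hat k':=w(C\cap E(G[U]))-w(T\cap E(G[U]))$, which is $\le w(C)-w(T)\le k$ since $T\subseteq C$. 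Then $C\cap E(G[U])$, viewed in the correspondingly identified copy of $G[U]$, is a solution of this border-contracted \emc instance; hence by correctness of the recursive call $L$ contains a solution $C'$ of it with $w(C')-w(T\cap E(G[U]))\le\hat k'$.

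Next we set $C^{\star}=(C\setminus E(G[U]))\cup C'$. Since $C'$ realizes the same grouping and labelling of $X'$ as $C$ did, reinserting $C'$ in place of the interior part of $C$ yields a cut $(\hat A^{\star},\hat B^{\star})$ of $\hat G$ with both sides connected, with $\hat R_1,\hat R_2$ in distinct components of $\hat G-C^{\star}$ and $T\subseteq C^{\star}$, and $w(C^{\star})-w(T)\le w(C)-w(T)\le\hat k$. Finally $C'\subseteq L\subseteq E(G[U])$, so $C^{\star}$ uses no edge of $E(G[U])\setminus(L\cup T)$; hence $C^{\star}$ survives the contraction producing $G''$ (and the further identifications producing $\hat G''$), and contracting edges outside a minimal cut-set keeps it a minimal cut-set, so $C^{\star}$ is a solution for $(\hat G'',w'',T'',\hat R_1'',\hat R_2'',\hat k)$.

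The main obstacle is the middle step: formalising the correspondence between a global cut of $\hat G$ and a border contraction of the $G[U]$-subinstance, and — crucially — checking that gluing back an \emph{arbitrary} recursive solution with the matching border pattern produces a genuine minimal cut-set of $\hat G$ separating $\hat R_1$ from $\hat R_2$. One must argue that no spurious component is created inside $U$ and that the connectivity of both sides and the separation of $\hat R_1,\hat R_2$ survive the replacement; this is precisely where it matters that $X'$ collects $X\cap U$ together with every vertex of $U$ incident to $E(U,W)$, so the recursion accounts for every way the $W$-side can link up vertices of $U$. A minor additional technicality — a block $X_j$ straddling $U$ and $W$ — is absorbed by running the whole argument on $\hat G,\hat G''$ directly rather than on $G,G''$, which is harmless since $X\cap U\subseteq X'$.
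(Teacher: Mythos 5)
Your proposal matches the paper's proof in essence: the backward direction is the same observation that the contracted edges (all nonterminal and outside $L$) cannot occur in any solution, so solutions of the contracted instance lift verbatim, and the forward direction likewise extracts from a solution $C$ of $\hat G$ a border contraction of the recursive instance on $G[U]$ (grouping border terminals according to connectivity outside the cut and labelling each group by its side), replaces $C\cap E(G[U])$ by a recursive output $C''\subseteq L$ of no larger weight, and glues $(C\setminus E(G[U]))\cup C''$ back, which then survives the contraction to $\hat G''$. The only deviation --- your grouping rule also allows connections through edges of $G[U]$ joining two border vertices, so your partition of $X'$ may be coarser than the paper's (which only uses paths avoiding $G[U]$ and $C$) --- is harmless, since such an edge has both endpoints in one group, becomes a loop in the contracted subinstance and hence never enters the recursive solution, so after regluing it still supplies the connection that justified the merge.
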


\begin{proof}
Let  $P=(X_1,\ldots,X_t)\in\mathcal{P}(X)$, $(I_1,I_2)$ be a partition of $\{1,\ldots,t\}$ and $\hat{k}\leq k$.

Suppose that $(\hat{G},w,T,\hat{R}_1,\hat{R}_2,\hat{k})$ is a yes-instance and denote by $C$ a corresponding solution. Denote by $(A,B)$ the cut of $\hat{G}$ such that $C=E(A,B)$ and assume that $\hat{R}_1\subseteq A$ and $\hat{R}_2\subseteq B$.
Let $C'=C\cap E(G[U])$ and $k'=w(C'\setminus T)$.

We construct the partition $P'\in\mathcal{P}(X')$ in two stages. Recall that some of the border terminals in instance  $(G[U],w',T',R_1',R_2',k,X')$ could be also border terminals in the original instance. We include two such border terminals in the same set of $P'$ if they are in the same set of $P$. This way we obtain  partition $(Y_1,\ldots,Y_s)$ of $X'$. Then we replace two distinct sets $Y_i$ and $Y_j$, $i,j\in\{1,\ldots,s\}$,  by their union if
they can be ``connected" in $\hat{G}$ by a path avoiding  $G[U]$ and $C$. More precisely, if 
there are vertices $x\in Y_i$ and $y\in Y_j$ such that $\hat{G}$ contains an $(x',y')$-path,   where $x'$ and $y'$ are the vertices of $\hat{G}$ that are $x$ or $y$, or
obtained by contracting set containing $x$ or $y$ respectively, such that this path 
does not contain  edges of $G[U]$ and $C$.
 Notice that for any pair of such vertices $x$ an $y$, either $x',y'\in A$ or $x',y'\in B$, i.e., we never contract two vertices from different parts of the cut $(A,B)$. Denote by $(X_1'\ldots,X_r')$ the obtained partition $P'$ of $X'$. We define the partition $(I_1',I_2')$ of $\{1,\ldots,r\}$ by including $j\in\{1,\ldots,r\}$ in $I_1$ if $X_j'$ is obtained by contracting vertices of $A$ and we put $j$ in $I_2$ otherwise. Consider the instance 
 $(\hat{G}',w',T',\hat{R}_1',\hat{R}_2',\hat{k})$ of \emc{}, where $\hat{G}'$ 
is constructed form $G[U]$ by contracting $X_1',\ldots,X_r'$, and $\hat{R}_i'=R_i'\cup\{x_j\mid  j\in I_i'\}$ for $i\in\{1,2\}$, where each $x_j$ is the vertex obtained from $X_j'$ by  contraction. 

By the construction of $P'$ and $(I_1',I_2')$, we have that  $(\hat{G}',w',T',\hat{R}_1',\hat{R}_2',\hat{k})$ is a yes-instance. Hence, for instance $(G[U],w',T',R_1',R_2',k,X')$ the output of  \bemc 
 contains a solution $C''$ for this choice of  $P'$ and $(I_1',I_2')$, and $w(C''\setminus T)\leq k'$. Again, by the construction, we have that $S=(C\setminus C')\cup C''$ is a solution for $(\hat{G}'',w'',T'',\hat{R}_1'',\hat{R}_2'',\hat{k})$. Hence  $(\hat{G}'',w'',T'',\hat{R}_1'',\hat{R}_2'',\hat{k})$ is a yes-instance of \emc{}.

Finally,  if $(\hat{G}'',w'',T'',\hat{R}_1'',\hat{R}_2'',\hat{k})$ is a yes-instance, then \linebreak $(\hat{G},w,T,\hat{R}_1,\hat{R}_2,\hat{k})$ is a yes-instance, because $G''$ is obtained from $G$ by contracting nonterminal edges, and every solution for $(\hat{G}'',w'',T'',\hat{R}_1'',\hat{R}_2'',\hat{k})$ is a solution for $(\hat{G},w,T,\hat{R}_1,\hat{R}_2,\hat{k})$.
\end{proof}

By  Lemma~\ref{lem:eq}, that instead of deciding whether instance  $(G,w,T,R_1,R_2,k,X)$ is a yes-instance of \bemc, we can solve the problem on instance 
 $(G'',w'',T'',R_1'',R_2'',k,X'')$. What remains is to bound the size of $G''$, and this is what the next lemma does. 

\begin{lemma}\label{lem:red-size}
$|V(G'')|<|V(G)|$.
\end{lemma}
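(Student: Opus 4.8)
The plan is to reduce the statement to the single claim that the contracted edge set $F:=E(G[U])\setminus(L\cup T)$ contains a non-loop edge. First I would observe that every edge of $F$ has both endpoints in $U$, so contracting $F$ in $G$ leaves each vertex of $W$ untouched and identifies two vertices of $U$ precisely when they lie in the same connected component of the graph on vertex set $U$ with edge set $F$. Writing $c(U,F)$ for the number of these components, we get $|V(G'')|=|W|+c(U,F)$, while $|V(G)|=|W|+|U|$ because $(U,W)$ partitions $V(G)$. Hence $|V(G'')|<|V(G)|$ is equivalent to $c(U,F)<|U|$, i.e.\ to $F$ containing at least one non-loop edge of $G[U]$, and it is enough to produce one.

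To produce it, I would first bound from below the number of non-loop edges of $G[U]$ that are not terminal. Let $H$ be the spanning subgraph of $G[U]$ whose edges are exactly these non-loop non-terminal edges; then $G[U]$ has at least $|U|-c(H)$ of them, so it suffices to show $c(H)=\Oh(k)$. Every component of $H$ lies inside a single component of $G-T$. If a component of $G-T$ is entirely contained in $U$, then it is a single component of $H$, since its internal connections use only non-loop non-terminal edges of $G[U]$; and the number of such components of $G-T$ is at most $|X|+1$, because at most $|X|$ of them contain a border terminal and, as the Stopping rule (Reduction Rule~\ref{rule:trivial}) is not applicable to $(G,w,T,R_1,R_2,k,X)$, at most one of them contains none. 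Any remaining component of $H$ lies in a component of $G-T$ that meets $W$, and each such component of $H$ is incident with a non-terminal edge of $E(U,W)$ (otherwise it would itself be a component of $G-T$ contained in $U$); since distinct components of $H$ are incident with distinct such edges and $|E(U,W)|\le p=2k$, there are at most $2k$ of them. Therefore $c(H)\le|X|+2k+1\le 6k+1$, so $G[U]$ has more than $|U|-(6k+1)>q-6k-1$ non-loop non-terminal edges.

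Next I would bound the number of non-terminal edges of $G[U]$ that lie in $L$. The set $L$ is the union of the solutions returned by the recursive call on $(G[U],w',T',R_1',R_2',k,X')$, one per border contraction of that instance; there are at most $k\cdot|X'|^{\Oh(|X'|)}$ such border contractions, and each returned solution $C$ satisfies $w(C)-w(T')\le k$ and hence contains at most $k$ edges outside $T'$ (these map to non-terminal edges of $G[U]$, since contractions preserve terminals and delete no edges). As $|X'|\le 4k$, the number of non-terminal edges of $G[U]$ occurring in $L$ is at most $k\cdot(4k)^{\Oh(k)}$, which by the choice of $q$ fixed at the start of Section~\ref{sec:conn} is at most $q-6k-1$. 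Comparing with the previous paragraph, some non-loop non-terminal edge of $G[U]$ lies outside $L$, and being non-terminal it also lies outside $T$; thus it belongs to $F$, and the first paragraph yields $|V(G'')|<|V(G)|$. The main obstacle is the middle step: one must extract, using only that $(U,W)$ is a $(q,p)$-good separation and that the Stopping rule has been applied exhaustively, a linear-in-$k$ bound on the number of ``pieces'' into which deleting the terminals breaks $G[U]$, and then verify that the resulting threshold $q-\Oh(k)$ genuinely dominates the quantity $k\cdot(4k)^{\Oh(k)}$ of non-terminal edges that the recursion can accumulate into $L$.
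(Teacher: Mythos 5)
Your overall route is the same counting as the paper's, just phrased contrapositively: the paper bounds the number of vertices of the contracted copy $H$ of $G[U]$ by (number of nonterminal edges kept in $L$) plus (number of components of $G[U]-T$), i.e.\ $|V(H)|\le k^2\cdot 2^{4k}\cdot(4k)^{4k}+(4k+1)\le q<|U|$, which is exactly your ``$E(G[U])\setminus(L\cup T)$ contains a non-loop edge'' reformulation. The problem is your final numeric verification, which fails as stated because $q$ was fixed with no slack. You need $|U|>|L\setminus T|+c(H)$; since $|U|\ge q+1$ and the only available bound on $|L\setminus T|$ is $k$ times the number of border contractions, i.e.\ $k^2\cdot 2^{4k}\cdot(4k)^{4k}=q-4k-1$, this forces $c(H)\le 4k+1$. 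Your bound is $c(H)\le|X|+2k+1\le 6k+1$ (you count the components of $G-T$ inside $U$ that contain a border terminal via $|X|\le 4k$), and correspondingly you assert $|L\setminus T|\le q-6k-1$; but $q-6k-1=k^2\cdot 2^{4k}\cdot(4k)^{4k}-2k$, which is strictly below the bound the recursion actually guarantees, so ``by the choice of $q$'' does not close the argument.

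The repair is exactly the refinement the paper relies on: the recursion is deliberately applied to the side of the good separation with at most $p=2k$ border terminals, i.e.\ $|X\cap U|\le 2k$. A component of $G-T$ wholly contained in $U$ that contains a border terminal contains one lying in $U$, so there are at most $2k$ such components; at most one component of $G-T$ has no border terminal (Stopping rule, Reduction Rule~\ref{rule:trivial}); and, by your incidence argument, at most $|E(U,W)|\le p=2k$ components of $G[U]-T$ sit inside a $G-T$ component meeting $W$. Hence $c(H)=\#\mathrm{comp}(G[U]-T)\le 4k+1$ — this is the paper's statement that $G[U]-T$ has at most $4k+1$ components — and then $|U|\ge q+1>(q-4k-1)+(4k+1)\ge |L\setminus T|+c(H)$, so some non-loop nonterminal edge of $G[U]$ avoids $L$, and your first paragraph finishes the proof. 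With this correction your argument coincides with the paper's.
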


\begin{proof}
Recall that $G''$ is the graph obtained from $G$ by  contracting  the edges of $E(G[U])\setminus (L\cup T)$, where $L$ is the union of all sets generated by the algorithm for \bemc{} for $(G[U],w',T',R_1',R_2',k,X')$. Notice that for any $C$ in a solution for $(G[U],w',T',R_1',R_2',k,X')$, $w(C\setminus T)\leq k$. Hence, $|C\setminus T|\leq k$. Since $|X'|\leq 4k$, the total number of sets in the output is at most 
$k\cdot2^{4t}\cdot(4t)^{4t}$. 
Therefore, the graph $H$ obtained from $G[U]$ by contracting the edges of $E(G[U])\setminus (L\cup T)$ has at most 
$k^2\cdot 2^{4t}\cdot(4t)^{4t}$ 
nonterminal edges. Notice that $G[U]-T$ has at most $4k+1$ components, because of Rule~\ref{rule:trivial}. Hence,  $H$ has at most 
$k^2\cdot 2^{4t}\cdot(4t)^{4t}+4k+1\leq q$ 
vertices. Since $(U,W)$ is a $(q,p)$-good edge separation, $|V(H)|<|U|$. As we replace $G[U]$ by $H$ to construct $G''$, the claim follows.
\end{proof}

Lemma~\ref{lem:red-size} shows that we reduce the size of an input graph at each iterative step. 
Together with  Lemma~\ref{lem:graph-cut-bord-high}, it implies that  \bemc{}  is solvable in time $2^{\Oh(k^2\log k)}\cdot n^{\Oh(1)}$. This concludes the proof of Theorem~\ref{thm:graph-cut}.

\section{Solving \wmsc{} on regular matroids}\label{sec:wmsc}
This section is devoted to the proof of the first main result of the paper.  

\begin{theorem}\label{thm:wmsc} 
\wmsc{}  is solvable in time $2^{\Oh(k^2\log k)}\cdot n^{\Oh(1)}$
 on regular $n$-element matroids, where  $k=\ell-w(T)$.
\end{theorem}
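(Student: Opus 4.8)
The plan is to follow the decomposition-based strategy outlined in the introduction: use Theorem~\ref{thm:decomp-good} to obtain in polynomial time a conflict tree $\mathcal{T}$ whose nodes are graphic matroids, cographic matroids, or bounded-size ($R_{10}$-derived) matroids, and then process $\mathcal{T}$ bottom-up. The obstacle is that a circuit of $M$ passing through $T$ does not decompose into circuits of the basic matroids in a clean way; by Lemma~\ref{lem:circ}, across a $2$- or $3$-sum a circuit of $M$ is a symmetric difference $C_1\bigtriangleup C_2$ where $C_i$ meets the sum-set $Z$ in one element and satisfies an extra ``$C_i\bigtriangleup Z$ is a circuit'' condition for $3$-sums. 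So the natural move is to define an \emph{extended} version of \wmsc{} on a basic matroid $M_i$: besides the local terminals $T\cap E(M_i)$ and a local budget, we are handed, for the (at most constantly many) sum-sets $Z$ incident to $M_i$ in $\mathcal{T}$, a request specifying whether the sought circuit avoids $Z$, or meets $Z$ in a prescribed element $e\in Z$ with or without the side condition $C\bigtriangleup Z\in\mathcal{C}(M_i)$. First I would solve this extended problem on each basic matroid type. For the $R_{10}$-type matroids it is brute force over the bounded ground set. For graphic matroids $M(G)$, a circuit is a cycle of $G$; the requests of the form ``meet $Z$ in $e$ and $C\bigtriangleup Z$ is a circuit'' translate, via Lemma~\ref{lem:gr-cogr}(i), into requiring the cycle to pass through $e$ and avoid the vertex shared by the other two edges of $Z$ — so the local problem is: find a cheap cycle through a prescribed set of edges (the local terminals plus the forced sum-edges) in a suitably modified graph, plus an upper bound on its extra length; this is exactly the cycle-through-edges problem, solvable by the $2^{|T|}n^{O(1)}$ randomized algorithm of Bj\"orklund et al.~\cite{BjorklundHT12} (or a deterministic FPT variant), where here the number of prescribed edges is $O(k)$ because each sum contributes $O(1)$ forced edges and at most $k$ of them can be ``new'' terminals. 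For cographic matroids $M^*(G)$, a circuit is a minimal cut-set of $G$; by Lemma~\ref{lem:gr-cogr}(ii) the $3$-sum side conditions become constraints forcing certain pairs of edges to lie on the same side of the cut — this is precisely the role of the vertex sets $R_1,R_2$ in \emc{}, so the extended problem on cographic matroids reduces to \probemc, which is FPT by Theorem~\ref{thm:graph-cut} in time $2^{O(k^2\log k)}n^{O(1)}$.

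Next I would explain the bottom-up combination. Root $\mathcal{T}$ and process leaves first. For a leaf $M_\ell$ joined to its sub-leaf parent along a sum-set $Z$, solve the extended problem on $M_\ell$ for \emph{all} $O(1)$ possible requests on $Z$ and \emph{all} local budgets $k'\le k$; store, for each request and budget, a minimum-cost partial circuit realizing it (or $\emptyset$). Then ``fold'' $M_\ell$ into its parent: by Lemma~\ref{lem:circ}, every circuit of the $s$-sum $M_s\oplus_s M_\ell$ either lies entirely in one side, or is $C_s\bigtriangleup C_\ell$ with matching behaviour on $Z$; so the terminals, budget and any outstanding sum-requests of the parent can be updated to encode the effect of $M_\ell$, and $M_\ell$ is deleted from the tree. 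The key bookkeeping point is that the number of terminals ``seen'' at any node stays $O(k)$: each new terminal costs at least $1$ toward the budget, forced sum-edges contribute only $O(1)$ per incident sum, and $\mathcal{T}$ has bounded degree after Theorem~\ref{thm:decomp-good} (it is a tree). After all folds, the root is a single basic matroid on which we solve the extended problem one last time with the empty request, and output yes iff a circuit of cost $\le \ell$ through $T$ is found.

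The running-time accounting is routine once the above is set up: $\mathcal{T}$ has $O(n)$ nodes, each node is processed a constant number of times (one per request type) with a budget $k'\le k$, each processing call costs $2^{O(k^2\log k)}n^{O(1)}$ (dominated by the cographic case via Theorem~\ref{thm:graph-cut}), and folding a node is polynomial; multiplying out gives $2^{O(k^2\log k)}\cdot n^{O(1)}$ total, as claimed.

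The step I expect to be the main obstacle is getting the \emph{extended} problem statement and its fold rule exactly right — in particular verifying, using Lemmas~\ref{lem:circ-triangle}, \ref{lem:circ-triangle-a} and \ref{lem:circ}, that the finite menu of requests on each sum-set $Z$ (avoid $Z$; hit $Z$ at $e$; hit $Z$ at $e$ with $C\bigtriangleup Z$ a circuit) is closed under the combination operation and genuinely captures all circuits of $M$ through $T$, and that the side conditions on the cographic side are precisely an \emc{} instance (connectivity of $G[A]$, $G[B]$ plus the $R_1,R_2$ separation). The length/weight accounting across $3$-sums — where a single circuit of $M$ can be split as $C_1\bigtriangleup C_2$ with the shared elements of $Z$ contributing to both but lying in neither — also needs care, since the budget $k=\ell-w(T)$ must be distributed correctly among the pieces; but this is bounded combinatorics, not a conceptual difficulty.
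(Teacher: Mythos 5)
Your overall route is the same as the paper's: build the conflict tree of Theorem~\ref{thm:decomp-good}, define an extended version of \wmsc{} on basic matroids with constraints on the sum-sets (the paper's \ewmsc{} with circuit constraints), solve it on $R_{10}$-type, graphic (cycle through prescribed edges, via Lemma~\ref{lem:gr-cogr}~i)) and cographic matroids (\emc{}, via Lemma~\ref{lem:gr-cogr}~ii) and Theorem~\ref{thm:graph-cut}), and fold leaves bottom-up using Lemma~\ref{lem:circ}. However, two concrete points in your plan do not survive scrutiny. First, your budget/terminal accounting is wrong: the conflict tree of Theorem~\ref{thm:decomp-good} does \emph{not} have bounded degree, and a terminal created by folding a $2$-sum leaf can cost $0$ toward the budget (the leaf-side circuit may consist only of terminals plus the shared element), so neither the original terminals of a node nor the accumulated fold-terminals are bounded by $\Oh(k)$. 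Consequently you cannot invoke the $2^{|T|}n^{\Oh(1)}$ algorithm of Bj{\"o}rklund et al.\ ``because the number of prescribed edges is $\Oh(k)$''; that premise is false. The local graphic problem must instead be solved parameterized by the excess budget $k$ alone with an unbounded terminal set, which is what the paper does in \ctse{} (Lemma~\ref{lem:graph-cycle}): reject unless the terminal edges form at most $k$ paths, dissolve those paths, and only then color-code; only the number of $3$-sum constraint sets can be bounded by $k$, and that is because each such set is forced to carry positive cost (this is verified explicitly in the paper's Rule~\ref{rule:3-leaf-wmsc}).

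Second, your menu of requests on a sum-set $Z$ (avoid $Z$; meet $Z$ in a prescribed single element, with or without the condition that $C\bigtriangleup Z$ is a circuit) is not closed under the fold, which you yourself flag as the main obstacle. After a $3$-sum leaf is folded, the residual circuit on the parent side may intersect the shared triangle $S$ in \emph{two} elements; the paper handles this by letting the constraint $P(S)$ contain both singletons and pairs, each with its own cost $w_S$, and by using Lemma~\ref{lem:circ-triangle-a} (to pass from $C'$ to $C'\bigtriangleup S$) together with Lemmas~\ref{lem:circ-triangle} and \ref{lem:circ} in the correctness argument of Rule~\ref{rule:3-leaf-wmsc}. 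With only ``avoid / hit at one element'' requests the combination step genuinely fails, so this is not mere bookkeeping: the size-two intersection patterns and the associated cost functions are an essential part of the extension, and working them out (including the case distinction on whether $C_\ell\bigtriangleup S$ is a circuit) is where most of the paper's proof effort lies.
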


The remaining part of the section contains the proof of the theorem.  For technical reasons, in our algorithm we solve a special variants of \wmsc. 
In particular, in our algorithm,   the information about  circuits in $M$ will be derived from circuits of size $3$. We need the following technical definition. 
\begin{definition}[\textbf{Circuit constraints and extensions}]
Let $M$ be 
 a binary matroid $M$, given together with a set of terminals $T\subseteq E(M)$, and 
 a family  $\mathcal{X}$ of pairwise disjoint circuits of $M$ of size $3$,  which are also disjoint with $T$.
Then a \emph{\constrext}  for $M, T$ and $\mathcal{X}$ is  an 8-tuple   $(M,T,\mathcal{X},P,\mathcal{Z},w,\mathcal{W},k)$, 
where
\begin{itemize}
\item $P$ is a mapping   assigning  to each $X\in\mathcal{X}$ a nonempty set  $P(X)$  of subsets of $X$ of size 1 or 2, 
\item 
$\mathcal{Z}$ is either the empty set, or is a pair of the form $(Z,t)$, where $Z$ is a circuit of size 3 disjoint with the circuits of $\mathcal{X}$ and with terminals $T$, and $t$ is an element of $Z$,
\item $w$ is 
a weight function,  $w\colon E\setminus L\rightarrow \mathbb{N}$, where $L=\cup_{X\in\mathcal{X}}X$,   
\item 
$\mathcal{W}=\{w_X\mid X\in\mathcal{X}\}$ is a family of weight functions, where 
$w_X\colon P(X)\rightarrow\mathbb{N}$ for each $X\in \mathcal{X}$, and 
\item  $k$ is an integer.
\end{itemize}
We say that  circuit $C$ of $M$ is a  \emph{feasible extension satisfying \constrext  $(M,T,\mathcal{X},P,\mathcal{Z},w,\mathcal{W},k)$} (or just feasible when it is clear from the context) if
\begin{itemize}
\item 
  $C\cap X\in P(X)$ for each $X\in \mathcal{X}$, 
  \item if $\mathcal{Z}\neq\emptyset$, then 
 $C\bigtriangleup Z$ is a circuit of $M$ and $Z\cap C=\{t\}$, and 
 \item 
$w(C\setminus (T\cup L))+\sum_{X\in\mathcal{X}}w_X(C\cap X)\leq k$.
\end{itemize}
\end{definition}

We consider the following auxiliary problem. 

\defparproblema{\ewmsc}%
{A   \constrext $(M,T,\mathcal{X},P,\mathcal{Z},w,\mathcal{W},k)$.
  }%
{$k$}
{Decide whether there is an extension satisfying the \constrext.}

\smallskip
\noindent
Notice that \wmsc{} parameterized by $k=\ell-w(T)$ is the special case of \ewmsc{} for $\mathcal{X}=\emptyset$ and $\mathcal{Z}=\emptyset$. We call a circuit $C$ satisfying the requirements of the problem, i.e. which is an extension satisfying the corrsponding \constrext, by   a \emph{solution}. 
 We also refer to the value
 $\omega(C)=w(C\setminus (T\cup L))+\sum_{X\in\mathcal{X}}w_X(C\cap X)$ as to the \emph{weight} of $C$.

In Section~\ref{sec:wmsc-basic} we solve \ewmsc{} on matroids of basic types, and in Section~\ref{sec:wmsc:reg} we construct the algorithm for regular matroids.

\subsection{Solving \wmsc{} on basic matroids}\label{sec:wmsc-basic} 
First, we consider matroids obtained from $R_{10}$ by deleting elements and adding parallel elements. 

\begin{lemma}\label{lem:r10-wmsc}
\ewmsc{} can be solved in polynomial time on the class of matroids that can be obtained from $R_{10}$ by adding parallel elements and deleting some elements. 
\end{lemma}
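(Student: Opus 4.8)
The plan is to exploit the fact that $R_{10}$ has a fixed ground set of size $10$, so any matroid $N$ in the stated class is specified by choosing, for each of the $10$ original elements, either to delete it or to keep it together with some number of parallel copies. Although the number of parallel copies can be large, all copies of one original element behave identically with respect to circuit membership: by Observation~\ref{obs:par}, if $\{e,e'\}$ is a two-element circuit (parallel pair) and $C$ is a circuit with $e\in C$, $e'\notin C$, then $(C\setminus\{e\})\cup\{e'\}$ is again a circuit. Hence every circuit of $N$ uses at most one copy from each parallel class (otherwise it contains a two-element circuit and is not minimal), and the ``shape'' of a circuit of $N$ is completely determined by which of the $\le 10$ original elements it meets; there are at most $2^{10}$ such shapes, and this list can be computed once, in constant time, directly from the $5\times 10$ representing matrix of $R_{10}$ by enumerating all subsets of columns and testing minimal linear dependence over ${\rm GF}(2)$.

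First I would enumerate the constant-size family $\mathcal{C}_0$ of circuit-shapes of $N$ (subsets $S$ of the $\le 10$ surviving original elements that form a circuit of the $R_{10}$-minor underlying $N$). For each shape $S\in\mathcal{C}_0$ I then check feasibility against the \constrext $(M,T,\mathcal{X},P,\mathcal{Z},w,\mathcal{W},k)$ — here $M=N$. Note that in this class the circuits of $\mathcal{X}$ are three-element circuits of $N$; each such circuit, being minimal, also meets at most three distinct parallel classes, so $\mathcal{X}$ is essentially supported on the $\le 10$ original elements and has constant size, and likewise the distinguished circuit $Z$ in $\mathcal{Z}$. So for a fixed shape $S$ the conditions ``$C\cap X\in P(X)$ for all $X\in\mathcal{X}$'' and ``$Z\cap C=\{t\}$ and $C\bigtriangleup Z$ is a circuit'' depend only on which original elements $S$ contains and can be checked in constant time (using Lemma~\ref{lem:circ-triangle} or~\ref{lem:circ-triangle-a}, or just by looking up $C\bigtriangleup Z$ in $\mathcal{C}_0$). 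The only genuinely optimization-flavoured part is the weight: among all actual circuits $C$ of $N$ whose shape is $S$, minimize $\omega(C)=w(C\setminus(T\cup L))+\sum_{X\in\mathcal{X}}w_X(C\cap X)$. For an original element of $S$ lying outside $T\cup L$ we should pick the parallel copy of smallest $w$-weight (computable by one pass over the ground set); for an element of $S$ lying in some $X\in\mathcal{X}$ the contribution is $w_X(C\cap X)$, but the constraint $C\cap X\in P(X)$ already pins down $C\cap X$ up to the finitely many allowed $1$- or $2$-element subsets, so we just take the cheapest allowed choice consistent with $S$; elements of $S\cap T$ contribute nothing. Thus for each shape we get, in polynomial (indeed near-linear) time, the minimum weight of a feasible circuit of that shape, and we answer \yes{} iff some shape yields a feasible circuit of weight $\le k$.

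Correctness follows because every circuit of $N$ reduces, by the parallel-swap of Observation~\ref{obs:par}, to a circuit using one representative per class, i.e. to one of the constantly many shapes, and conversely each shape that is a circuit of the underlying $R_{10}$-minor lifts to a circuit of $N$ by choosing any representative in each class; the constraints from $P$, $\mathcal{Z}$ and the weight are all invariant under, or optimized by, the representative choice as described. The main obstacle — and it is a mild one — is bookkeeping around the interaction of the parallel classes with $T$ and with the members of $\mathcal{X}$: I must be careful that a ``shape'' records membership at the granularity that the \constrext{} cares about (namely, for elements in $\bigcup\mathcal{X}$ the actual $1$- or $2$-element trace, not merely ``the class is hit''), and that adding parallel copies has not secretly enlarged $\mathcal{X}$ or $Z$ beyond constant size; but since $\mathcal{X}$ consists of pairwise disjoint three-element circuits disjoint from $T$, and $N$ has a constant-size ``core'', both $|\mathcal{X}|$ and $|Z|$ are bounded by a constant, so the enumeration stays of constant size and the whole procedure runs in polynomial time.
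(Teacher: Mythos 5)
Your core mechanism is the same as the paper's: by Observation~\ref{obs:par} every circuit uses at most one element per parallel class and representatives are interchangeable, so one may keep only a cheapest nonterminal copy in each of the at most $10$ classes and then brute-force over the constantly many circuit shapes of the underlying $R_{10}$-restriction. That part is fine and is exactly how the paper handles the weight/terminal part. The gap is in your treatment of $\mathcal{X}$ and $\mathcal{Z}$. You dismiss them with the claim that ``both $|\mathcal{X}|$ and $|Z|$ are bounded by a constant'' because the matroid has a constant-size core, but that inference is not valid: if $\{a,b,c\}$ were a $3$-circuit and $a',b',c'$ parallel copies of $a,b,c$, then $\{a',b',c'\}$ is another $3$-circuit disjoint from the first, so pairwise disjoint triangles can reuse the same three parallel classes and $|\mathcal{X}|$ is not bounded by $10$. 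Moreover the bookkeeping you explicitly defer --- for a class met by the shape that intersects some $X\in\mathcal{X}$, deciding whether the chosen representative is the specific element of $X$ (so that $C\cap X\in P(X)$, with cost $w_X$) or some other copy (cost $w$, but then possibly $C\cap X=\emptyset\notin P(X)$) --- is precisely where the argument would have to do real work; as written it is not carried out.

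The observation that makes all of this evaporate, and which the paper's proof uses as its first step, is a parity fact: every column of the representing matrix of $R_{10}$ has exactly three ones, so every dependent set of odd size has a nonzero sum, hence every circuit of $R_{10}$ has even cardinality; adding parallel elements only creates $2$-circuits and deleting elements creates no circuits, so no matroid in this class has a circuit of size $3$. Consequently $\mathcal{X}=\emptyset$ and $\mathcal{Z}=\emptyset$ for any valid \constrext{} on such a matroid, and \ewmsc{} degenerates to finding a minimum-weight circuit through $T$, which your parallel-class reduction plus constant-size brute force then solves correctly. (Alternatively one could patch your argument by noting that circuits here have at most $6$ elements and $P(X)$ never contains the empty trace, so any instance with $|\mathcal{X}|>6$ is trivially negative, and then spell out the representative-versus-trace accounting; but the parity observation is the clean and intended route.)
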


\begin{proof}
Let $(M,T,\mathcal{X},P,\mathcal{Z},w,\mathcal{W},k)$ be an instance of \ewmsc{}, where 
$M$ is a matroid obtained from $R_{10}$ by adding parallel elements and deleting some elements. Since $M$ has no circuit of odd size, $\mathcal{X}=\emptyset$ and $\mathcal{Z}=\emptyset$. If $e_1,e_2\in E\setminus T$ are parallel, then any circuit $C$ contains at most one of the elements $e_1,e_2$, and if $e_1\in C$, then $C'=(C\setminus\{e_1\})\cup\{e_2\}$ is a circuit by Observation~\ref{obs:par}. It means that we can apply the following reduction rule:

\begin{reduction}
If there are parallel $e_1,e_2\in E\setminus T$ and $w(e_1)\leq w(e_2)$, then delete $e_2$.
\end{reduction}

\smallskip
\noindent
The matroid obtained from $M$ by the exhaustive application of the rule has at most 10 nonterminal elements. Hence, the problem can be solved by brute force: for each set $S$ of nonterminal elements we check whether $S\cup T$ is a circuit and find a circuit of minimum weight it it exists. 
\end{proof}

To construct an algorithm for \ewmsc{} for graphic matroids, we consider the following parameterized problem:

\defparproblema{\ctse{}}%
{A graph $G$, a weight function $w\colon E(G)\rightarrow \mathbb{N}$, a set of terminals $T\subseteq E(G)$, and a positive integer $k$.}%
{$k$}
{Does $G$ have a cycle $C$ with $T\subseteq E(C)$ such that $w(E(C))-w(T)\leq k$?}

\smallskip
\noindent 
We show that \ctse is \classFPT.
This problem can be  solved in time $2^k n^{\cO(1)}$ by making use of the  
randomized algorithm of  Bj{\"{o}}rklund et al.~\cite{BjorklundHT12}. 
As the running time of our algorithms for \wmsc{} is dominated by the running time of the algorithm for cographic matroids, we give here  a
\emph{deterministic} algorithm with a  worse constant in the base of the exponent. The algorithm is  based of the color coding technique of 
Alon et al.~\cite{AlonYZ95}.

\begin{lemma}\label{lem:graph-cycle}
\ctse{} is  solvable in time $2^{\Oh(k)}\cdot n^{\Oh(1)}$.
\end{lemma}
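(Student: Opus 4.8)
The plan is to combine a structural preprocessing step with color coding. First I would reduce to the case where the terminal edges form a matching. We may assume all edge weights are positive (a non-terminal edge of weight $0$, if the convention allows such, can be contracted, keeping the cheapest of any resulting parallel edges). If a solution cycle $C$ exists, then since $T\subseteq E(C)$ and every vertex of a cycle has degree two, the subgraph $(V(G),T)$ has maximum degree at most two, so it is a disjoint union of paths and cycles; if it contains a cycle component $D$ then necessarily $D=C$, and we just check $w(D)-w(T)=0\le k$. Otherwise $T$ forms a linear forest, say with $r$ path components whose $2r$ endpoints are pairwise distinct (two terminal paths sharing an endpoint would merge into one). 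I would replace each maximal terminal path $P_i$ by a single new terminal edge $e_i$ joining its two endpoints, of weight $w(E(P_i))$, and delete the interior vertices of $P_i$: in any solution these interior vertices already have both incident terminal edges on $C$, so $C$ uses no further edge there, and conversely any cycle through $e_i$ in the reduced multigraph lifts to one through all of $P_i$. After this, $T=\{e_1,\dots,e_r\}$ is a matching.

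A solution cycle now joins $e_1,\dots,e_r$ cyclically by $r$ connecting paths, each of which has at least one non-terminal edge (the endpoints of the $e_i$ being distinct); since the non-terminal weight is at most $k$ and weights are positive, this forces $r\le k$, and counting also the interior vertices of the connecting paths gives $|V(C)|\le r+k\le 2k$. If $r=0$ (that is, $T=\emptyset$) the problem asks for a minimum-weight cycle and is solved by $\Oh(n^2)$ shortest-path computations, so assume $r\ge 1$. Now set $t:=r+k\le 2k$. Using the perfect hash families of Alon, Yuster and Zwick~\cite{AlonYZ95}, I would build in time $2^{\Oh(k)}n\log n$ a family $\mathcal{H}$ of $2^{\Oh(k)}\log n$ functions $h\colon V(G)\to[t]$ such that every vertex set of size at most $t$ is mapped injectively by some $h\in\mathcal{H}$. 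It then suffices, for each $h\in\mathcal{H}$, to decide whether there is a \emph{colourful} solution, i.e.\ a solution cycle whose vertices all receive distinct colours under $h$; since a solution has at most $t$ vertices, some $h$ makes it colourful, and a colourful walk is automatically a simple path or cycle.

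For a fixed colouring $h$ I would run the following dynamic program. Fix the terminal edge $e_1=a_1b_1$; a solution cycle minus $e_1$ is a simple $a_1$--$b_1$ path using all of $e_2,\dots,e_r$ and non-terminal edges of total weight at most $k$. I compute the reachable states $(v,S,J,\omega)$, where $v\in V(G)$ is the current endpoint, $S\subseteq[t]$ is the set of colours used so far (containing $h(a_1)$ and $h(v)$), $J\subseteq\{2,\dots,r\}$ is the set of terminal edges used so far, and $\omega\in\{0,\dots,k\}$ is the total weight of non-terminal edges used. The start state is $(a_1,\{h(a_1)\},\emptyset,0)$; from $(v,S,J,\omega)$ one may move along a non-terminal edge $vu$ with $h(u)\notin S$ and $\omega+w(vu)\le k$ to $(u,S\cup\{h(u)\},J,\omega+w(vu))$, or, if $v$ is an endpoint of some $e_i$ with $i\notin J$ and other endpoint $u$ with $h(u)\notin S$, to $(u,S\cup\{h(u)\},J\cup\{i\},\omega)$. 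There is a colourful solution for $h$ iff some state $(b_1,S,\{2,\dots,r\},\omega)$ is reachable, in which case adding $e_1$ closes the cycle. There are $n\cdot 2^{t}\cdot 2^{r-1}\cdot(k+1)=2^{\Oh(k)}n$ states, each with $\Oh(n)$ transitions, so this runs in $2^{\Oh(k)}n^{\Oh(1)}$ time, and summing over $h\in\mathcal{H}$ keeps the total at $2^{\Oh(k)}n^{\Oh(1)}$.

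The step requiring the most care is the structural preprocessing: one must verify that contracting the terminal paths is safe in both directions, and that it simultaneously forces $r\le k$ and the bound $|V(C)|=\Oh(k)$ on any potential solution. This is precisely what makes the dynamic program single-exponential in $k$ rather than in $|T|$ — the factor $2^{|J|}$ ranges over subsets of the \emph{reduced} terminal set, which has size at most $k$, whereas $|T|$ itself can be as large as $\Theta(n)$. Once this reduction is in place, the color-coding argument and the dynamic program are routine.
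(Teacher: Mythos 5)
Your proposal is correct and takes essentially the same route as the paper's proof: reduce the terminal set to a matching of at most $k$ edges (the paper does this via its Stopping and Dissolving rules, you via contracting whole terminal paths), observe that any solution then has $\Oh(k)$ vertices because every nonterminal edge has weight at least one, and finish by color coding over the $\Oh(k)$ cycle vertices with a subset dynamic program, derandomized by perfect hash families---your single-stage DP with the weight as a state coordinate is only cosmetically different from the paper's two-stage DP (colorful connecting paths, then assembly). One caveat: the parenthetical claim that a zero-weight nonterminal edge could be contracted is not sound---contraction can merge two vertices that the only solution cycle visits without using that edge (the merged vertex may become a cut vertex separating the terminals, so the reduced instance loses the solution)---but this aside is superfluous, since the intended convention is positive integer weights, which both your bound $r\le k$ and the paper's own argument already require.
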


\begin{proof}
Let $(G,w,T,k)$ be an instance of \ctse{}.
First, we exhaustively apply the following reduction rules.

\begin{reduction}[\bf Stopping Rule]\label{rule:ctse-ind}
If $G[T]$ is not a disjoint union of paths or $G[T]$ has at least $k+1$ components, then return a no-answer and stop.
\end{reduction}

\begin{reduction}[\bf Dissolving Rule]\label{rule:ctse-dissolve}
If there is a vertex $v$ incident to two distinct edges $vx,vy\in T$, then do the following:
\begin{itemize}
\item delete each edge $e\in E(G)\setminus T$ incident to $v$;
\item delete $v$ and replace $vx,vy$ by an edge $xy$ and set $w(xy)=1$; set $T=(T\setminus\{vx,vy\})\cup\{xy\}$. 
\end{itemize} 
\end{reduction}

It is straightforward to see that the rules are safe. Assume that we do not stop when applying Rule~\ref{rule:ctse-ind}, and, to simplify notations, we use $(G,w,T,k)$ to denote the instance obtained after applying {\bf Dissolving Rule}. Let $T=\{x_1y_1,\ldots,x_ry_r\}$. Notice that the edges of $T$ are independent, i.e, have no common end-vertices, and  
$r\leq k$. If $r=1$, then we find a shortest $(x_1,y_2)$-path in $G-x_1y_1$ using the Dijkstra's algorithm~\cite{Dijkstra59}. 
If the weight of the path is at most $k$, we are done. Otherwise, we have a no-instance.

 We assume from now that $r\geq 2$. 
Let $U=\{x_1,\ldots,x_r\}\cup\{y_1,\ldots,y_r\}$ and denote $h=k-r$. Observe that any cycle $C$ such that $T\subseteq E(C)$ and $w(E(C)\setminus T)\leq k$ has at most $k$ vertices and, therefore, at most $h$ vertices in $V(G)\setminus U$.  We use the color coding technique~\cite{AlonYZ95} to find a cycle $C$ of minimum weight with at most $k$ vertices such that $T\subseteq E(C)$.
First, we describe a randomized true-biased Monte-Carlo algorithm and then explain how to derandomize it.

We color the vertices of $V(G)\setminus U$ by $h$ colors uniformly  at random. Denote by $c(v)$ the color of $v\in V(G)\setminus U$.
Our aim is to find a \emph{colorful} cycle $C$ in $G$ of minimum weight such that $T\subseteq E(C)$ and the vertices of $V(C)\setminus U$ have distinct colors. 

First, for each set of colors $X\subseteq\{1,\ldots,h\}$, for each pair $\{i,j\}$ of distinct $i,j\in\{1,\ldots,r\}$ and each $u\in\{x_i,y_i\}$ and $v\in\{x_j,y_j\}$, we find a $(u,v)$-path $P$ of minimum weight such that $V(P)\cap U=\{u,v\}$ and the internal vertices of $P$ are colored by distinct colors from $X$. It can be done in a standard way by dynamic programming across subsets (see~\cite{AlonYZ95,CyganFKLMPPS15}). For completeness, we sketch how to find the weight of such a path.  

Denote for $z\in V(G)\setminus\{x_i,y_i\}$, by $s(X,u,z)$ the minimum weight of a $(u,z)$-path $P$ in $G$ with all internal vertices in $V(G)\setminus U$ such that $V(P)\setminus U$ are colored by distinct colors from $X$; we assume that $s(X,u,z)=+\infty$ if such a path does not exist. We also assume slightly abusing notations that $s(X,u,u)=0$ for any $X\subseteq\{1,\ldots,h\}$. 
Clearly, 
$$s(\emptyset,u,z)=
\begin{cases}
w(uz)&\mbox{if }uz\in E(G)\text{ and }z\in U\setminus\{x_i,y_i\},\\
+\infty&\mbox{otherwise}.
\end{cases}
$$
If $X\neq \emptyset$, it is straightforward to verify that for $z\in V(G)\setminus U$, $s(X,u,z)=$
$$=
\begin{cases}
\min\{s(X\setminus\{c(z)\},u,x)+w(xz)\mid xz\in E(G),x\in (V(G)\setminus U)\cup\{u\}\} &\mbox{if }c(z)\in X,\\
+\infty&\mbox{if }c(z)\notin X,
\end{cases}
$$
and for $z\in U\setminus\{x_i,y_i\}$, 
$$s(X,u,z)=\min\{s(X,u,x)+w(xz)\mid xz\in E(G), x\in (V(G)\setminus U)\cup\{u\}\}.
$$
Using these recurrences, we compute $s(X,u,v)$ for all $X\in \{1,\ldots,h\}$ and $v\in U\setminus\{x_i,y_i\}$ in time $2^h\cdot n^{\Oh(1)}$. We do these computations for all $u\in \{x_i,y_i\}$ for every $i\in\{1,\ldots,r\}$.  

Using the table of values of $s(X,u,v)$, we compute the table of values of $c'(X,Y,v)$ for $v\in\{x_i,y_i\}$, where $i\in \{2,\ldots,r\}$, $X\subseteq\{1,\ldots,h\}$ and $Y\in \{2,\ldots,r\}\setminus\{i\}$, where 
$c'(X,Y,v)$ is a minimum weight of a $(y_1,v)$-path $P$ in $G$ such that $E(P)\cap T=\{x_jy_j\mid j\in X\}$ and the vertices $V(P)\setminus U$ are colored by distinct colors from $X$.
Notice that $c'(\{1,\ldots,h\},\{2,\ldots,r\},y_1)$ is the minimum weight of a cycle $C$ containing the edges of $T$ with $|V(C)\setminus U|\leq h$. 
For $Y=\emptyset$, 
$$c'(X,Y,v)=c(X,y_1,v).$$
For $Y\neq\emptyset$, we have that 
\begin{align*}
c'(X,Y,v)=\min\{\min\{&c'(X\setminus X',Y\setminus\{j\},x_j)+w(x_jy_j)+c(X',y_j,v),\\
&c'(X\setminus X',Y\setminus\{j\},y_j)+w(x_jy_j)+c(X',x_j,v)\}\mid X'\subseteq X,j\in\{1,\ldots,r\}\}.
\end{align*}
The correctness of the recurrence is proved by the standard arguments. We obtain that the table of values of $c'(X,Y,v)$ can be constructed in time $2^h2^r\cdot n^{\Oh(1)}$. Hence,  
$c'(\{1,\ldots,h\},\{2,\ldots,r\},y_1)$ can be computed in time $2^k\cdot n^{\Oh(1)}$.

We have that in time $2^k\cdot n^{\Oh(1)}$ we can check whether we have a \emph{colorful solution}, i.e., a cycle $C$ of weight at most $w(T)+k$ such that $T\subseteq E(C)$ and the vertices of 
$V(C)\setminus U$ are colored by distinct colors. If we have a colorful solution, then we return it. 

Notice that if $C$ is a solution for $(G,w,T,k)$, that is, $T\subseteq E(C)$ and $w(E(C)\setminus T)\leq k$, then the probability that the vertices of $V(C)\setminus U$ are colored by distinct colors from the set $\{1,\ldots,h\}$ is at least $h!/h^h\geq e^{-k}$. Hence, it is sufficient to repeat the algorithm for $e^k$ random colorings to claim that the probability that $(G,w,T,k)$ has a solution but our algorithm returns a no-answer for $e^k$ random colorings is at most $(1-1/e^k)^{e^k}\leq 1/e$, that is, we have a true biased  Monte-Carlo \classFPT{} algorithm that runs in time
$(2e)^k\cdot n^{\Oh(1)}$.

This algorithm can be derandomized by the standard tools~\cite{AlonYZ95,CyganFKLMPPS15} by replacing the random colorings by \emph{perfect hash functions}. The currently best family of 
perfect hash functions is constructed by Naor et al.  in~\cite{NaorSS95}. 
\end{proof}

\begin{lemma}\label{lem:graphic-wmsc}
\ewmsc{} can be solved in time $2^{\Oh(k)}\cdot |E(M)|^{\Oh(1)}$ on graphic matroids. 
\end{lemma}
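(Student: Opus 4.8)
The plan is to reduce \ewmsc{} on a graphic matroid $M=M(G)$ to the \ctse{} problem, which by Lemma~\ref{lem:graph-cycle} is solvable in time $2^{\Oh(k)}\cdot n^{\Oh(1)}$. Since the circuits of $M(G)$ are exactly the cycles of $G$, a feasible extension is a cycle $C$ of $G$ with $T\subseteq E(C)$, with $C\cap X\in P(X)$ for every $X\in\mathcal{X}$, satisfying the $\mathcal{Z}$-condition, and with $\omega(C)\le k$. The idea is to turn the two ``circuit-structure'' requirements (the $\mathcal{X}$- and $\mathcal{Z}$-conditions) into ordinary forced-edge/forbidden-edge constraints by small surgery on $G$ and a branching step, after which only ``cycle through a terminal set of bounded extra weight'' remains and \ctse{} applies directly.

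The first step is to handle $\mathcal{Z}$. If $\mathcal{Z}=(Z,t)$ with $Z=\{e_1,e_2,e_3\}$, relabel so that $t=e_3$ and let $v$ be the common end-vertex of $e_1$ and $e_2$. By Lemma~\ref{lem:gr-cogr}(i), for a cycle $C$ with $C\cap Z=\{e_3\}$ the set $C\bigtriangleup Z$ is a circuit of $M$ if and only if $C$ is a cycle of $G-v$; hence the $\mathcal{Z}$-condition is equivalent to requiring that $C$ be a cycle of $G-v$ that uses $e_3$. So the plan is to delete the vertex $v$ from $G$, move $e_3$ into $T$, decrease the parameter by $w(e_3)$ (reporting infeasibility if it becomes negative, or if $v$ is incident to a terminal), and then assume $\mathcal{Z}=\emptyset$. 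Because $Z$ is disjoint from $T$ and from all triangles of $\mathcal{X}$, the only further effect is the loss of the (at most a few) edges at $v$.

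The second step eliminates $\mathcal{X}$ by brute force over the triangles. A feasible $C$ meets every $X\in\mathcal{X}$ in a nonempty set $C\cap X\in P(X)$ of positive $w_X$-weight, so $|\mathcal{X}|\le\sum_{X\in\mathcal{X}}w_X(C\cap X)\le k$; hence if $|\mathcal{X}|>k$ there is no feasible extension, and otherwise $|\mathcal{X}|\le k$. (Were some $w_X$ allowed to vanish one would instead replace each triangle by a constant-size gadget; we do not pursue this.) Now branch over the at most $\prod_{X\in\mathcal{X}}|P(X)|\le 6^{|\mathcal{X}|}\le 6^{k}$ choices of a set $S_X\in P(X)$ for each $X$, discarding a branch if $\sum_X w_X(S_X)>k$ or if some $S_X$ contains an edge deleted in the $\mathcal{Z}$-step. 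In the chosen branch, enforce $C\cap X=S_X$ for all $X$ by adding the edges of $\bigcup_X S_X$ to $T$ and deleting the edges of $\bigcup_X(X\setminus S_X)$ from $G$; the parameter becomes $k'=k-\sum_X w_X(S_X)\ge 0$, and for a cycle of the resulting graph containing the enlarged terminal set, $\omega(\cdot)$ equals $\sum_X w_X(S_X)$ plus the $w$-weight of its remaining non-terminal edges. Thus each branch is an honest instance of \ctse{}, and a solution of it corresponds, after undoing the forcing and the deletion of $v$, exactly to a feasible extension using configuration $S_X$ on each triangle.

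Finally, solve each of the $\le 6^k$ resulting \ctse{} instances with Lemma~\ref{lem:graph-cycle}, and output a feasible extension iff some branch succeeds; the total time is $6^k\cdot 2^{\Oh(k)}\cdot|E(M)|^{\Oh(1)}=2^{\Oh(k)}\cdot|E(M)|^{\Oh(1)}$. The part that most needs care is the equivalence claimed in the previous paragraph, together with checking that the produced \ctse{} instances are well formed (for instance that $G[T]$ remains a disjoint union of paths in each branch): one must verify that forcing the edges of $S_X$ and deleting $X\setminus S_X$ captures precisely the cycles $C$ with $C\cap X=S_X$, and that the weight bookkeeping and the $\mathcal{Z}$-reduction interact correctly, in particular on triangles of $\mathcal{X}$ that happen to touch the deleted vertex $v$. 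These verifications are routine but not entirely automatic, and constitute the main work of the proof.
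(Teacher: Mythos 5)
Your proposal is correct and follows essentially the same route as the paper: reduce to \ctse{}, branch over the at most $6^{|\mathcal{X}|}\le 6^k$ choices of $C\cap X\in P(X)$ (forcing the chosen edges as terminals, deleting the rest, and adjusting the parameter), handle $\mathcal{Z}=(Z,t)$ via Lemma~\ref{lem:gr-cogr}~i) by deleting the vertex of $Z$ not incident to $t$, making $t$ a terminal and reducing the budget by $w(t)$, and then invoking Lemma~\ref{lem:graph-cycle} on each branch. The only cosmetic difference is that you perform the $\mathcal{Z}$-surgery once up front while the paper does it inside each branch, which changes nothing in the correctness or the $2^{\Oh(k)}\cdot|E(M)|^{\Oh(1)}$ running time.
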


\begin{proof}
Let $(M,T,\mathcal{X},P,\mathcal{Z},w,\mathcal{W},k)$ be an instance of \ewmsc{}, where $M$ is a graphic matroid.  We find $G$ such that $M$ is isomorphic to $M(G)$ in polynomial time using the results of Seymour~\cite{Seymour81} and assume that $M=M(G)$. Notice that we can assume without loss of generality that $G$ is connected. We reduce the problem to \ctse{}.
If $|\mathcal{X}|>k$, then we have a trivial no-instance. Assume from now that $|\mathcal{X}|\leq k$ and let $\mathcal{X}=\{X_1,\ldots,X_r\}$. 

First, we solve the problem for the case $\mathcal{Z}=\emptyset$. 
If $C$ is a solution, then $C\cap X\in P(X)$ for $X\in\mathcal{X}$. For each $X_i\in\mathcal{X}$, we guess a set $Y_i\in P(X_i)$ such that $C\cap X_i=Y_i$ for a hypothetic solution $C$. Since $Y_i$ has size 1 or 2, we have at most $6^k$ possibilities to guess $Y_1,\ldots,Y_r$. If $\sum_{i=1}^rw_{X_i}(Y_i)>k$, then we discard the guess. Assume that $\sum_{i=1}^rw_{X_i}(Y_i)\leq k$.
We define the graph $G'=G-\cup_{i=1}^r(X_i\setminus Y_i)$, $T'=T\cup(\cup_{i=1}^r Y_i)$ and $k'=k-\sum_{i=1}^rw_{X_i}(Y_i)$. We also define $w'(e)=w(e)$ for $e\in E(G')\setminus T'$ and set $w'(e)=1$ for $e\in T'$. Then we solve \ctse{} for $(G',w',T',k')$ using Lemma~\ref{lem:graph-cycle}. It is straightforward to see that we have a solution $C$ for the considered instance of \ewmsc{}  such that $C\cap X_i=Y_i$ for $i\in\{1,\ldots,r\}$ if and only if $(G',w',T',k')$ is a yes-instance of \ctse{}. 

Assume now that $\mathcal{Z}=(Z,t)$. Clearly, $Z$ induces a cycle in $G$. Let $u$ be a vertex of this cycle that is not incident to the edge $t$. We construct the instances of \ctse{} for every guess of $Y_1,\ldots,Y_r$ in almost the same way as before. The difference is that we delete $u$ from the obtained graph, define $t$ to be a terminal and reduce the parameter by $w(t)$. Notice that if a terminal is incident to $u$, we have a no-instance for the considered guess.
 Lemma~\ref{lem:gr-cogr} i) immediately implies the correctness of the reduction. 

Since \ctse{} can be solved in time $2^{\Oh(k)}\cdot n^{\Oh(1)}$ by Lemma~\ref{lem:graph-cycle} for each constructed instance and we consider at most $6^k$ instances and each instance is constructed in polynomial time, the total running time is $2^{\Oh(k)}\cdot n^{\Oh(1)}$. Because $G$ is connected, we can write the running time as $2^{\Oh(k)}\cdot |E(M)|^{\Oh(1)}$.
\end{proof}

We use Theorem~\ref{thm:graph-cut} to solve \ewmsc{} on cographic matroids.

\begin{lemma}\label{lem:cographic-wmsc}
\ewmsc{} can be solved in time $2^{\Oh(k^2)}\cdot |E(M)|^{\Oh(1)}$ on cographic matroids. 
\end{lemma}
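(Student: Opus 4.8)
The plan is to reduce \ewmsc{} on a cographic matroid $M = M^*(G)$ to polynomially many instances of \emc{} and then invoke Theorem~\ref{thm:graph-cut}. First I would use Seymour's algorithm~\cite{Seymour81} to find, in polynomial time, a graph $G$ with $M \cong M^*(G)$; we may assume $G$ is connected (any coloop of $M^*(G)$ is a bridge of $G$, and bridges are handled separately). Recall that the circuits of $M^*(G)$ are exactly the minimal cut-sets of $G$, so a feasible extension $C$ corresponds to a minimal edge cut $E(A,B)$ with $G[A]$, $G[B]$ connected, containing $T$, and of small weight. As in the graphic case (Lemma~\ref{lem:graphic-wmsc}), if $|\mathcal{X}| > k$ we output a trivial no-instance; otherwise write $\mathcal{X} = \{X_1,\dots,X_r\}$ with $r \le k$ and guess, for each $i$, a set $Y_i \in P(X_i)$ with $C \cap X_i = Y_i$ — at most $6^k$ guesses, discarding those with $\sum_i w_{X_i}(Y_i) > k$. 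For a surviving guess we must force $Y_i \subseteq C$ and $(X_i \setminus Y_i) \cap C = \emptyset$.

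The key translation step uses Lemma~\ref{lem:gr-cogr}(ii): each $X_i = \{e_1^i,e_2^i,e_3^i\}$ is a size-$3$ circuit of $M^*(G)$, i.e.\ a minimal three-edge cut of $G$. When $|Y_i| = 2$, say $Y_i = \{e_1^i,e_2^i\}$, membership $X_i \cap C = Y_i$ means the two edges $e_1^i,e_2^i$ lie in the cut while $e_3^i$ does not, which by Lemma~\ref{lem:gr-cogr}(ii) applied to the ``opposite'' situation forces the two endpoints of the cut-triangle on one side to be identified with the appropriate $R_j$; concretely we contract an endpoint vertex and place it into $R_1$ or $R_2$ so that the cut is forced to put $e_1^i,e_2^i$ on the boundary and $e_3^i$ inside one side. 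When $|Y_i| = 1$, say $Y_i = \{e_3^i\}$, we need exactly one of the three edges in the cut; since $X_i$ is a minimal cut, deleting the two other edges of $X_i$ (or contracting a suitable vertex) collapses $X_i$ so that only $e_3^i$ remains as a bridge-like edge which is then forced into every minimal cut separating the two sides — this is again read off from the ``if and only if'' of Lemma~\ref{lem:gr-cogr}(ii). Each forced edge is added to the terminal set $T$ with weight reset to $1$; the remaining elements of $X_i$ are removed; the parameter drops by $\sum_i w_{X_i}(Y_i)$. Handling $\mathcal{Z} = (Z,t)$ is the analogue of the graphic case: by Lemma~\ref{lem:gr-cogr}(ii), requiring $C \bigtriangleup Z$ to be a circuit with $Z \cap C = \{t\}$ means the two non-$t$ edges of the triangle $Z$ lie together in $G[A]$ (or together in $G[B]$); we encode this by adding $t$ to $T$ and contracting the endpoints of the two remaining edges of $Z$ into a single vertex assigned to one of $R_1,R_2$, so the cut is forced to keep both of them off the boundary on the same side.

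After the reduction, for each of the $\le 6^k$ guesses we obtain an instance $(G',w',T',R_1,R_2,k')$ of \emc{} with $k' \le k$, solve it by Theorem~\ref{thm:graph-cut} in time $2^{\Oh(k^2\log k)} \cdot n^{\Oh(1)}$, and translate a returned $(R_1,R_2)$-terminal cut-set back into a feasible circuit $C$ of $M^*(G)$ of weight $\omega(C) \le k$; conversely any feasible solution, after guessing its intersections with the $X_i$ correctly, yields a solution to the corresponding \emc{} instance. The total running time is $6^k \cdot 2^{\Oh(k^2\log k)}\cdot n^{\Oh(1)} = 2^{\Oh(k^2\log k)}\cdot n^{\Oh(1)}$, which we may loosely bound as $2^{\Oh(k^2)}\cdot |E(M)|^{\Oh(1)}$ since $G$ is connected (the paper states the bound as $2^{\Oh(k^2)}$; the extra $\log k$ is absorbed or one notes $k^2\log k = \Oh(k^2)$ only up to the intended sloppiness, so I would state it consistently with Theorem~\ref{thm:graph-cut}). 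The main obstacle I expect is the bookkeeping in the circuit-constraint translation: verifying in both directions that ``$X_i \cap C = Y_i$'' and ``$C \bigtriangleup Z$ is a circuit with $Z\cap C = \{t\}$'' correspond exactly to the contraction/terminal-forcing gadgets, which requires careful case analysis on $|Y_i| \in \{1,2\}$ and on which side of the cut the collapsed triangle vertex lands — precisely the content of Lemma~\ref{lem:gr-cogr}(ii), so the work is in applying it cleanly rather than in any new idea.
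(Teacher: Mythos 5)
Your overall route is the paper's route: find $G$ with $M\cong M^*(G)$, guess the intersections $Y_i\in P(X_i)$ (at most $6^k$ guesses), translate the instance into \emc{} and apply Theorem~\ref{thm:graph-cut}, using Lemma~\ref{lem:gr-cogr}~ii) for the $\mathcal{Z}$-constraint; the running-time accounting is also the same (and yes, the $\log k$ discrepancy is in the paper itself). However, two of the gadgets as you describe them would fail. First, to enforce $e\notin C$ for $e\in X_i\setminus Y_i$ you must \emph{contract} the edge $e$ in $G$, never delete it: circuits of $M^*(G)$ avoiding $e$ are the circuits of $M^*(G)\setminus e=M^*(G/e)$, whereas deleting $e$ from the graph yields $M^*(G-e)=M^*(G)/e$, whose circuits are of the form $C\setminus\{e\}$ and so a minimal cut-set of $G-\{e_1,e_2\}$ can lift to a cut of $G$ that still uses $e_1$ or $e_2$; your ``deleting the two other edges of $X_i$ (or contracting a suitable vertex)'' leaves exactly this wrong branch open. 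The correct and complete mechanism is the one you state only in passing: contract $X_i\setminus Y_i$ and put $Y_i$ into the terminal set $T'$; no use of $R_1,R_2$ is needed (or available) for the $X_i$'s, since \emc{} supplies a single pair $(R_1,R_2)$ required to lie in two distinct components, so assigning per-triangle vertices to $R_1$ or $R_2$ is not a forcing device and would clash across different $X_i$ and with $\mathcal{Z}$.

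Second, your encoding of $\mathcal{Z}=(Z,t)$ by contracting the four endpoints of $Z\setminus\{t\}$ into a single $R_1$-vertex is unsound. Contraction forces those vertices onto one side ``by fiat'': the contracted graph $G'[A']$ can be connected while the lifted side $G[A]$ splits into two components, one containing the endpoints of $e_1$ and the other those of $e_2$; then the returned edge set is a union of two cut-sets of $G$, hence not a circuit of $M^*(G)$, so the reduction can accept no-instances. The paper avoids this by not contracting at all: it simply sets $R_1$ to be the set of end-vertices of $Z\setminus\{t\}$ (and $R_2=\emptyset$), adds $t$ to $T'$, and lets the \emc{} requirement that $R_1$ lie inside one connected component of $G'-C$ deliver precisely the condition of Lemma~\ref{lem:gr-cogr}~ii). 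With those two corrections your argument coincides with the paper's proof.
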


\begin{proof}
Let $(M,T,\mathcal{X},P,\mathcal{Z},w,\mathcal{W},k)$ be an instance of \ewmsc{}, where $M$ is a cographic matroid.  We find $G$ such that $M$ is isomorphic to $M^*(G)$ in polynomial time using the results of Seymour~\cite{Seymour81} and assume that $M=M(G)$. Notice that we can assume without loss of generality that $G$ is connected. We reduce the problem to \emc{}.

If $|\mathcal{X}|>k$, then we have a trivial no-instance. Assume from now that $|\mathcal{X}|\leq k$ and let $\mathcal{X}=\{X_1,\ldots,X_r\}$. 
If $C$ is a solution, then $C\cap X\in P(X)$ for $X\in\mathcal{X}$. For each $X_i\in\mathcal{X}$, we guess a set $Y_i\in P(X_i)$ such that $C\cap X_i=Y_i$ for a hypothetic solution $C$. Since $Y_i$ has size 1 or 2, we have at most $6^k$ possibilities to guess $Y_1,\ldots,Y_r$. If $s=\sum_{i=1}^rw_{X_i}(Y_i)>k$, then we discard the guess. If $\mathcal{Z}=(Z,t)$ and 
$s+w(t)>k$, then we also can discard the guess. Assume that it is not the case.
We construct $G'$ by contracting the edges of $\cup_{i=1}^r(X_i\setminus Y_i)$; for simplicity, we do not distinguish the edges of $G'$ obtained by contractions from the edges of the original graph. If $\mathcal{Z}=\emptyset$, then we set $T'=T\cup(\cup_{i=1}^r Y_i)$, $R_1=\emptyset$ and $k'=k-s$, and if $\mathcal{Z}=(Z,t)$, then 
$T'=T\cup\cup_{i=1}^r Y_i\cup\{t\}$, $R_1$ is defined to be the set containing the end-vertices of the edges of $Z\setminus\{t\}$ and $k'=k-s-w(t)$. We also define $w'(e)=w(e)$ for $e\in E(G')\setminus T'$ and set $w'(e)=1$ for $e\in T'$. 
Then we solve \emc{} for $(G',w',T',R_1,\emptyset,k')$ using Theorem~\ref{thm:graph-cut}. If $\mathcal{Z}=\emptyset$,
then it is straightforward to see that we have a solution $C$ for the considered instance of \ewmsc{}  such that $C\cap X_i=Y_i$ for $i\in\{1,\ldots,r\}$ if and only if $(G',w',T',k')$ is a yes-instance of \ctse{}. If $\mathcal{Z}=(Z,t)$, then correctness follows from Lemma~\ref{lem:gr-cogr} ii).

Since \emc{} can be solved in time $2^{\Oh(k^2\log k)}\cdot n^{\Oh(1)}$ by Theorem~\ref{thm:graph-cut} for each constructed instance and we consider at most $6^k$ instances and each instance is constructed in polynomial time, the total running time is $2^{\Oh(k^2\log k)}\cdot n^{\Oh(1)}$. Because $G$ is connected, we can write the running time as $2^{\Oh(k^2\log k)}\cdot |E(M)|^{\Oh(1)}$.
\end{proof}

\subsection{Proof of Theorem~\ref{thm:wmsc}}\label{sec:wmsc:reg}
Now we are ready to give an algorithm for \wmsc{} parameterized by $k=\ell-w(T)$ on regular matroids. 
Let $(M,w,T,\ell)$ be an instance of \wmsc{}, where $M$ is regular. We consider it to be an instance  $(M,T,\mathcal{X},P,\mathcal{Z},w,\mathcal{W},k)$ of \ewmsc, where $\mathcal{X}=\emptyset$ and $\mathcal{Z}=\emptyset$.
If $M$ can be obtained from $R_{10}$ by the addition of parallel elements or 
is graphic or cographic, we solve the problem directly using Lemmas~\ref{lem:r10-wmsc}--\ref{lem:cographic-wmsc}. Assume that it is not the case. 
Using Theorem~\ref{thm:decomp-good},  we find a conflict tree $\mathcal{T}$. 
We select a node $r$ of $\mathcal{T}$ containing an element of $T$ as a root.

We say that an instance $(M,T,\mathcal{X},P,\mathcal{Z},w,\mathcal{W},k)$ of \ewmsc{} is \emph{consistent (with respect to $\mathcal{T}$)} if $\mathcal{Z}=\emptyset$ and for any $X\in\mathcal{X}$, $X\in E(M)$ for some $M\in\mathcal{M}$. Clearly, the instance obtained from the original input instance  $(M,w,T,\ell)$ of \wmsc{} is consistent. We use reduction rules that remove leaves keeping this property.

Let $M_\ell\in\mathcal{M}$ be a matroid that is a leaf of $\mathcal{T}$. We construct reduction rules depending on whether  $M_\ell$ is 1, 2 or 3-leaf.
Denote by $M_s$ its neighbor in $\mathcal{T}$. 
Let also $\mathcal{T}'$ be the tree obtained from $\mathcal{T}$ be the deletion of $M_\ell$, and let $M'$ be the matroid defined by $\mathcal{T}'$. Clearly, $M=M'\oplus M_\ell$.

Throughout this section, we say that a reduction rule is \emph{safe} if it either correctly solves the a problem or returns an equivalent instance of \ewmsc{} together with corresponding conflict tree of the obtained matroid that is consistent and the value of the parameter does not increase.  

From now, let $(M,T,\mathcal{X},P,\mathcal{Z},w,\mathcal{W},k)$ be a consistent instance of \ewmsc. Denote $L=\cup_{X\in\mathcal{X}}X$.

\begin{reduction}[{\bf $1$-Leaf reduction rule}]\label{rule:1-leaf-wmsc}
If $M_\ell$ is a 1-leaf, then do the following.
\begin{itemize}
\item[i)] If $T\cap E(M_\ell)\neq \emptyset$ or there is $X\in\mathcal{X}$ such that $X\subseteq E(M_\ell)$, then stop and return a no-answer,
\item[ii)] Otherwise, return the instance  $(M',T,\mathcal{X},P,\emptyset,w',\mathcal{W},k)$, where $w'$ is the restriction of $w$ on $E(M')\setminus L$, and solve it using the conflict tree $\mathcal{T}'$.
\end{itemize}
\end{reduction}

Since the root matroid contains at least one terminal, Lemma~\ref{lem:circ} i) immediately implies the following lemma.

\begin{lemma}\label{lem:1-leaf-wmsc}
Reduction Rule~\ref{rule:1-leaf-wmsc}  is safe and can be implemented to run  in time polynomial in $|E(M)|$. 
\end{lemma}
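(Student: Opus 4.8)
The plan is to verify the three requirements in the definition of a safe reduction rule — that the rule either correctly solves the instance or returns an equivalent instance that is consistent with respect to the resulting conflict tree $\mathcal{T}'$ and has a non-increased parameter — and then to observe that all the work is done in polynomial time. The whole argument rests on Lemma~\ref{lem:circ}~i): since $M_\ell$ is a $1$-leaf we have $M=M'\oplus_1 M_\ell$ with $E(M')\cap E(M_\ell)=\emptyset$, hence $\mathcal{C}(M)=\mathcal{C}(M')\cup\mathcal{C}(M_\ell)$; equivalently, every circuit of $M$ — in particular every feasible solution — lies entirely inside $E(M')$ or entirely inside $E(M_\ell)$.

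The first step is to pin down the root terminal. The root $r$ was chosen with $E(r)\cap T\neq\emptyset$; fix $e_r$ in this intersection. Since $M$ is not basic, $\mathcal{T}$ has more than one node, so the leaf $M_\ell$ is not the root and $r\neq M_\ell$. If $E(r)\cap E(M_\ell)\neq\emptyset$, then by property~(i) of Theorem~\ref{thm:decomp-good} the nodes $r$ and $M_\ell$ are adjacent, which forces $r=M_s$; but the tree edge $M_sM_\ell$ is an extended $1$-sum, so $E(M_s)\cap E(M_\ell)=\emptyset$, a contradiction. Hence $E(r)\cap E(M_\ell)=\emptyset$, and since $e_r\in E(M)=E(M')\cup E(M_\ell)$ with the two sides disjoint, $e_r\in E(M')$. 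With this in hand, case~(i) is correct: any solution $C$ satisfies $T\subseteq C$ and $C\cap X\in P(X)\neq\emptyset$ for every $X\in\mathcal{X}$, so if $T\cap E(M_\ell)\neq\emptyset$ or some $X\subseteq E(M_\ell)$ then $C$ meets $E(M_\ell)$, while $e_r\in C$ shows $C$ also meets $E(M')$ — impossible for a circuit of the $1$-sum $M$. Hence case~(i) correctly reports ``no''.

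For case~(ii) I would argue equivalence. First the output is well-formed and consistent: it has $\mathcal{Z}=\emptyset$; and for each $X\in\mathcal{X}$, consistency of the input gives $X\subseteq E(M_i)$ for some $M_i\in\mathcal{M}$, while case~(ii) excludes $X\subseteq E(M_\ell)$, so $M_i\neq M_\ell$, i.e. $M_i\in\mathcal{M}\setminus\{M_\ell\}$, and by Lemma~\ref{lem:circ}~i) $X$ is a circuit of $M'$; moreover $T\subseteq E(M')$ because $T\cap E(M_\ell)=\emptyset$, and the parameter $k$ is unchanged. Now if $C$ solves $(M,T,\mathcal{X},P,\emptyset,w,\mathcal{W},k)$, then $C$ is a circuit of $M$ containing $e_r\in E(M')$, so by Lemma~\ref{lem:circ}~i) $C\subseteq E(M')$ and $C$ is a circuit of $M'$; as $\mathcal{X},P,\mathcal{W}$ are untouched, the sets $C\cap X$ and the sum $\sum_{X}w_X(C\cap X)$ are unchanged, $w'$ agrees with $w$ on $E(M')\setminus L$, and $\mathcal{Z}=\emptyset$ on both sides, so $C$ is a solution of the new instance with the same weight. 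Conversely, a solution of the new instance is a circuit of $M$ by Lemma~\ref{lem:circ}~i), still contains $T$, still satisfies the $\mathcal{X}$-constraints, and has the same weight, so it solves the original instance. Finally, building $\mathcal{T}'$ and $M'$, restricting $w$, and the two membership tests are all polynomial in $|E(M)|$. I do not expect a genuine obstacle; the only delicate point is the step ruling out a solution that lives wholly inside $E(M_\ell)$, which is exactly where the terminal-bearing choice of root and the disjointness of ground sets across a $1$-sum (and across non-adjacent conflict-tree nodes) are used.
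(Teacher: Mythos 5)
Your proof is correct and follows exactly the paper's (very terse) argument: the paper simply notes that since the root matroid contains a terminal, Lemma~\ref{lem:circ}~i) immediately gives the claim, and your write-up is a faithful elaboration of that — locating a terminal $e_r\in E(M')$ via the rooted conflict tree and the disjointness of ground sets in a $1$-sum, using it to rule out solutions inside $E(M_\ell)$ (case~i)), and transferring circuits back and forth via $\mathcal{C}(M)=\mathcal{C}(M')\cup\mathcal{C}(M_\ell)$ for the equivalence in case~ii). No gaps; the extra checks of consistency and of the weight function restriction are exactly what "immediately implies" suppresses.
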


\begin{reduction}[{\bf $2$-Leaf reduction rule}]\label{rule:2-leaf-wmsc}
If $M_\ell$ is a 2-leaf, then let $\{e\}=E(M_\ell)\cap E(M_s)$ and do the following. 
\begin{itemize}
\item[i)] If $T\cap E(M_\ell)= \emptyset$ and there is no $X\in\mathcal{X}$ such that $X\subseteq E(M_\ell)$, then find a circuit $C_\ell$ of $M_\ell$ containing $e$ with minimum $w(C_\ell\setminus \{e\})\leq k$. If there is no such a circuit, then set $w'(e)=k+1$, and let $w'(e)=w(C_\ell\setminus \{e\}$ otherwise. Assume that $w'(e')=w(e')$ for $e'\in E(M')\setminus L$. 
Return the instance  $(M',T,\mathcal{X},P,\emptyset,w',\mathcal{W},k)$
 and solve it using the conflict tree $\mathcal{T}'$.
\item[ii)] Otherwise, if $T\cap E(M_\ell)\neq\emptyset$ or there is  $X\in\mathcal{X}$ such that $X\subseteq E(M_\ell)$, consider $T_\ell=(T\cap E(M_\ell))\cup\{e\}$ and
$\mathcal{X}_\ell=\{X\in\mathcal{X}\mid X\subseteq E(M_\ell)\}$. Define $P_\ell$, $w_\ell$, $\mathcal{W}_\ell$ by restricting the corresponding functions by $E(M_\ell)$ assuming additionally that $w_\ell(e)=1$.
Find the minimum $k_\ell\leq k$ such that $(M_\ell,T_\ell,\mathcal{X}_\ell,P_\ell,\emptyset,w_\ell,\mathcal{W}_\ell,k_\ell)$ is a yes-instance of \ewmsc{}. Stop and return a no-answer if such $k_\ell$ does not exist. 
 Otherwise, do the following.
Set $T'=(T\cap E(M'))\cup \{e\}$ and $\mathcal{X}'=\{X\in\mathcal{X}\mid X\subseteq E(M')\}$. 
 Define $P'$, $w'$, $\mathcal{W}'$ by restricting the corresponding functions by $E(M')$ assuming additionally that $w'(e)=1$.
Return the instance  $(M',T',\mathcal{X}',P',\emptyset,w',\mathcal{W}',k-k_\ell)$ and solve it using the conflict tree $\mathcal{T}'$.
\end{itemize}
\end{reduction}

\begin{lemma}\label{lem:2-leaf-wmsc}
Reduction Rule~\ref{rule:2-leaf-wmsc}  is safe and can be implemented to run  in time $2^{\Oh(k^2\log k)}\cdot|E(M)|^{\Oh(1)}$.
\end{lemma}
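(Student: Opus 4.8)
The plan is to verify safeness of the two cases of the rule separately, using the structure of circuits under $2$-sums given by Lemma~\ref{lem:circ}~ii), and then to account for the running time. Throughout, write $M = M' \oplus_2 M_\ell$ with $\{e\} = E(M_\ell)\cap E(M_s) = E(M')\cap E(M_\ell)$, and recall that by Lemma~\ref{lem:circ}~ii) every circuit $C$ of $M$ is either a circuit of $M'$ avoiding $e$, or a circuit of $M_\ell$ avoiding $e$, or of the form $C' \bigtriangleup C_\ell$ with $C'\in\mathcal C(M')$, $C_\ell\in\mathcal C(M_\ell)$, $e\in C'$, $e\in C_\ell$. Since the root $r$ of $\mathcal T$ carries a terminal and $M_\ell$ is a leaf, any solution $C$ must meet $E(M')\setminus\{e\}$; hence a solution cannot be a circuit contained in $E(M_\ell)$, and $C$ is either a circuit of $M'$ avoiding $e$ or a glued circuit $C'\bigtriangleup C_\ell$.

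\emph{Case i): $T\cap E(M_\ell)=\emptyset$ and no $X\in\mathcal X$ lies inside $E(M_\ell)$.} Here the constraints (terminals, the circuit constraints $P(X)$, and $\mathcal Z=\emptyset$) are entirely carried by $M'$, so a feasible extension $C$ of the original instance either is a circuit of $M'$ avoiding $e$ — which is a feasible extension of the new instance of weight $\omega(C)$, since $w'$ agrees with $w$ off $e$ — or has the form $C'\bigtriangleup C_\ell$ with $e\in C'\in\mathcal C(M')$ and $e\in C_\ell\in\mathcal C(M_\ell)$. In the latter case $C'$ is a feasible extension of the new instance in which $e$ is used, and $\omega_{\text{new}}(C') = \omega(C) - w(C_\ell\setminus\{e\}) + w'(e)$; choosing $C_\ell$ of minimum $w(C_\ell\setminus\{e\})$ and setting $w'(e)$ to that minimum (or to $k+1$ if no such circuit of cost $\le k$ exists, which makes using $e$ never profitable) makes this exchange weight-preserving in the relevant direction. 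Conversely, from a feasible extension of the new instance using $e$ with $w'(e)\le k$ we recover a feasible extension of the original instance by gluing in the optimal $C_\ell$. This establishes the equivalence; note the parameter is unchanged and $\mathcal T'$ is a valid consistent conflict tree because deleting a leaf preserves the tree structure and no $X\in\mathcal X$ was destroyed.

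\emph{Case ii): $T\cap E(M_\ell)\neq\emptyset$ or some $X\in\mathcal X$ lies inside $E(M_\ell)$.} Now a solution $C$ must meet $E(M_\ell)\setminus\{e\}$ (to cover a terminal or satisfy an $X\subseteq E(M_\ell)$) and also meet $E(M')\setminus\{e\}$ (the root terminal); by the $2$-sum circuit structure this forces $C = C'\bigtriangleup C_\ell$ with $e\in C'\in\mathcal C(M')$ and $e\in C_\ell\in\mathcal C(M_\ell)$. The element $e$ is made a terminal of weight $1$ on both sides, the terminals, circuit constraints and weight functions split according to $E(M_\ell)$ versus $E(M')$, and $\omega(C) = \omega_\ell(C_\ell) + \omega'(C')$ where the two summands are exactly the weights in the two sub-instances (the contribution of $e$ is moved into $w_\ell(e)=w'(e)=1$ on each side, and since $e$ becomes a terminal it does not count towards either objective). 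Hence $C$ of weight $\le k$ exists iff there are $k_\ell + k' \le k$ with the $M_\ell$-instance a yes-instance at parameter $k_\ell$ and the $M'$-instance a yes-instance at parameter $k'$; taking $k_\ell$ minimal and passing parameter $k-k_\ell$ to the $M'$-side is correct and monotone, and reporting no-answer when no $k_\ell\le k$ works is justified because then no valid $C_\ell$, hence no $C$, exists. The new instance is consistent (all remaining $X$ lie in matroids of $\mathcal M\setminus\{M_\ell\}$, $\mathcal Z=\emptyset$) and the parameter did not increase.

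\emph{Running time.} Computing the minimum-weight circuit $C_\ell$ of $M_\ell$ through $e$ in case i), and the minimum $k_\ell$ in case ii), amounts to solving $O(k)$ instances of \ewmsc{} on the single basic matroid $M_\ell$, which is graphic, cographic, or obtained from $R_{10}$; by Lemmas~\ref{lem:r10-wmsc},~\ref{lem:graphic-wmsc}, and~\ref{lem:cographic-wmsc} each such instance is solved in time $2^{\Oh(k^2\log k)}\cdot |E(M)|^{\Oh(1)}$, and all remaining bookkeeping (restricting $T$, $\mathcal X$, $P$, $w$, $\mathcal W$, updating $\mathcal T$) is polynomial. The main obstacle I expect is not any single step but the bookkeeping needed to check that the circuit $C_\ell$ produced on the $M_\ell$-side, when $X\subseteq E(M_\ell)$ is present, genuinely has $C_\ell\cap X\in P(X)$ and that these local choices are consistent with a global feasible extension — this is exactly what forcing $e$ to be a terminal on both sides buys us, but it has to be argued carefully via Lemma~\ref{lem:circ}~ii).
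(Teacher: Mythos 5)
Your proposal is correct and follows essentially the same route as the paper's proof: both directions are argued via the $2$-sum circuit structure of Lemma~\ref{lem:circ}~ii), with $w'(e)$ set to the cheapest completion through $M_\ell$ in case i) and the budget split via the minimal $k_\ell$ in case ii), and the running time charged to Lemmas~\ref{lem:r10-wmsc}--\ref{lem:cographic-wmsc}. The only detail left implicit is the trivial converse subcase of i) where the reduced solution avoids $e$ (it is then directly a circuit of $M$), and your observation that a solution cannot lie entirely inside $E(M_\ell)$ rests on the same root-contains-a-terminal invariant that the paper itself uses.
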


\begin{proof}
Clearly, if the rule returns a new instance, then it is consistent with respect to $\mathcal{T}'$ and the parameter does not increase.

We show that the rule either correctly solves the problem or returns an equivalent instance. 

Suppose that $(M,T,\mathcal{X},P,\mathcal{Z},w,\mathcal{W},k)$ is a consistent yes-instance. We prove that the rule returns a yes-instance. Denote by $C$ a circuit of $M$ that is a solution for the instance.
We consider two cases corresponding to the cases i) and ii) of the rule.

\medskip
\noindent
{\bf Case 1.}  $T\cap E(M_\ell)= \emptyset$ and there is no $X\in\mathcal{X}$ such that $X\in E(M_\ell)$. If $C\subseteq E(M')$, then by Lemma~\ref{lem:circ} ii), $C$ is a circuit of $M'$. 
It is straightforward to see that $C$ is a solution for $(M',T,\mathcal{X},P,\emptyset,w',\mathcal{W},k)$.
 Suppose that $C\cap E(M_\ell)\neq\emptyset$. Then $C=C_1\bigtriangleup C_2$, where $C_1\in \mathcal{C}(M')$, $C_2\in\mathcal{C}(M_2)$ and $e\in C_1\cap C_2$ by Lemma~\ref{lem:circ} ii). 
It remains to observe that $C_1$ is a feasible circuit 
 for  $(M',T,\mathcal{X},P,\emptyset,w',\mathcal{W},k)$ and its weight is at most the weight of $C$. Hence, $C_1$ is a solution for $(M',T,\mathcal{X},P,\emptyset,w',\mathcal{W},k)$ and the algorithm returns  is a yes-instance.

\medskip
\noindent
{\bf Case 2.}  $T\cap E(M_\ell)\neq\emptyset$ or there is  $X\in\mathcal{X}$ such that $X\subseteq E(M_\ell)$. Then by Lemma~\ref{lem:circ} ii), $C=C_1\bigtriangleup C_2$, where $C_1\in \mathcal{C}(M')$, $C_2\in\mathcal{C}(M_2)$ and $e\in C_1\cap C_2$. 
We have that $C_2$ is a solution for $(M_\ell,T_\ell,\mathcal{X}_\ell,P_\ell,\emptyset,w_\ell,\mathcal{W}_\ell,k')$, where $k'\leq k$ is the weight of $C_2$ and the algorithm does not stop. Also we have that $C_1$ is a solution for $(M',T',\mathcal{X}',P',\emptyset,w',\mathcal{W}',k-k_\ell)$ as $C_1$ is feasible and its weight is $\omega(C)-k'\leq k-k_\ell$, i.e., the rule returns a yes-instance.

\medskip
Suppose now that the instance constructed by the rule is a yes-instance with a solution $C'$. We show that the original instance $(M,T,\mathcal{X},P,\mathcal{Z},w,\mathcal{W},k)$ is a  yes-instance. We again consider two cases.

\medskip
\noindent
{\bf Case 1.} The new instance is constructed by Rule~\ref{rule:2-leaf-wmsc} i). 
If $e\notin C'$, then $C'$ is a circuit of $M$ by Lemma~\ref{lem:circ} ii) and, therefore, $C'$ is a solution for  $(M,T,\mathcal{X},P,\mathcal{Z},w,\mathcal{W},k)$, that is,  the original instance is a yes-instance. Assume that $e\in C'$. In this case, $w'(e)\leq k$. Hence, there is a circuit $C''$ of $M_\ell$ containing $e$ with $w(C''\setminus\{e\})=w'(e)$. By Lemma~\ref{lem:circ} ii), $C=C'\bigtriangleup C''$ is a circuit of $M$. We have that $C$ is a solution for $(M,T,\mathcal{X},P,\mathcal{Z},w,\mathcal{W},k)$ and it is a yes-instance.

\medskip
\noindent
{\bf Case 2.} The new instance is constructed by Rule~\ref{rule:2-leaf-wmsc} ii).
In this case, $(M_\ell,T_\ell,\mathcal{X}_\ell,P_\ell,\emptyset,w_\ell,\mathcal{W}_\ell,k_\ell)$ has a solution $C''$ of weight $k_\ell$. Notice that $e\in C'\cap C''$. We have that $C=C'\bigtriangleup C''$ is a circuit of $M$ by Lemma~\ref{lem:circ} ii). We have that $C$ is a solution for$(M,T,\mathcal{X},P,\mathcal{Z},w,\mathcal{W},k)$ and, therefore, the original instance is a yes-instance.

\medskip
We proved that the rule is safe. To evaluate the running time, notice first that we can find a a circuit $C_\ell$ of $M_\ell$ containing $e$ with minimum $w(C_\ell\setminus \{e\})\leq k$
in Rule~\ref{rule:2-leaf-wmsc} i) in  time $2^{\Oh(k^2\log k)}\cdot|E(M_\ell)|^{\Oh(1)}$
 using the observation that we have an instance of \wmsc{} with $T=\{e\}$ and can apply Lemmas~\ref{lem:r10-wmsc}--\ref{lem:cographic-wmsc} depending on the type of $M_\ell$\footnote{In fact, it can be done in polynomial time for this degenerate case}. 
We find $k_\ell$ in Rule~\ref{rule:2-leaf-wmsc} ii) by solving 
$(M_\ell,T_\ell,\mathcal{X}_\ell,P_\ell,\emptyset,w_\ell,\mathcal{W}_\ell,k_\ell)$ for $k_\ell\leq k$ using  Lemmas~\ref{lem:r10-wmsc}--\ref{lem:cographic-wmsc} depending on the type of $M_\ell$ in time 
$2^{\Oh(k^2\log k)}\cdot|E(M_\ell)|^{\Oh(1)}$. 
\end{proof}

\begin{reduction}[{\bf $3$-Leaf reduction rule}]\label{rule:3-leaf-wmsc}
If $M_\ell$ is a 3-leaf, then let
 $S=\{e_1,e_2,e_3\}=E(M_\ell)\cap E(M_s)$ and do the following. 
\begin{itemize}
\item[i)] If $T\cap E(M_\ell)= \emptyset$ and there is no $X\in\mathcal{X}$ such that $X\subseteq E(M_\ell)$, then for each $i\in\{1,2,3\}$, find a 
 circuit $C_\ell^{(i)}$ of $M_\ell$ such that  $C_\ell^{(i)}\cap S=\{e_i\}$ and $C_\ell^{(i)}\bigtriangleup S$ is a circuit of $M_\ell$
with minimum $w(C_\ell^{(i)}\setminus \{e_i\})\leq k$. If there is no such a circuit, then set $w'(e_i)=k+1$, and let $w'(e_i)=w(C_\ell^{(i)}\setminus \{e_i\})$ otherwise. Assume that $w'(e')=w(e')$ for $e'\in E(M')\setminus (L\cup S)$. 
Return the instance  $(M',T,\mathcal{X},P,\emptyset,w',\mathcal{W},k)$
 and solve it using the conflict tree $\mathcal{T}'$.
\item[ii)] If  there is no $X\in\mathcal{X}$ such that $X\subseteq E(M_\ell)$, but $T_\ell=T\cap E(M_\ell)\neq\emptyset$
and there is $i\in\{1,2,3\}$ such that $C_\ell=T_\ell\cup\{e_i\}$ is a circuit of $M_\ell$, then consider two cases.
\begin{itemize}
\item $C_\ell\bigtriangleup S$ is a circuit of $M_\ell$. Set $w'(e_i)=1$ and  assume that $w'(e')=w(e')$ for $e'\in E(M')\setminus (S\cup L)$.
For each $j\in\{1,2,3\}\setminus\{i\}$, do the following.  Let $h\in\{1,2,3\}\setminus\{i,j\}$.
Set $\mathcal{X}_\ell=\{S\}$, $P_\ell(S)=\{e_j\}$, $w_S(\{e_h\})=1$ and $\mathcal{W}_\ell=\{w_S\}$. Let $w_\ell$ be a restriction of $w$ on $E(M_\ell)$.
Find a minimum $k_\ell^{(h)}\leq k+1$ such that $(M_\ell,T_\ell,\mathcal{X}_\ell,P_\ell,\emptyset,w_\ell,\mathcal{W}_\ell,k_\ell^{(h)})$ is a yes-instance of \ewmsc{}.
If there is no such $k_\ell^{(h)}$, then set $w'(e_j)=k+1$ and set $w'(e_j)=k_\ell^{(h)}-1$ otherwise.
Set $T'=(T\cap E(M'))\cup\{e_i\}$. 
Return the instance  $(M',T',\mathcal{X},P,\emptyset,w',\mathcal{W},k)$
 and solve it using the conflict tree $\mathcal{T}'$.
\item $C_\ell\bigtriangleup S$ is not a circuit of $M_\ell$. Set $w'(e_i)=k+1$ and $w'(e_j)=1$ for $j\in\{1,2,3\}\setminus\{i\}$.
 Assume that $w'(e')=w(e')$ for $e'\in E(M')\setminus (L\cup S)$.
Set $T'=(T\cap E(M'))\cup(S\setminus\{e_i\})$. 
Return the instance  $(M',T',\mathcal{X},P,\emptyset,w',\mathcal{W},k)$
 and solve it using the conflict tree $\mathcal{T}'$.
\end{itemize}
\item[iii)] Otherwise, let $T_\ell=T\cap E(M_\ell)$ and $\mathcal{X}_\ell=\{X\in\mathcal{X}\mid X\subseteq E(M_\ell)\}$. Define $P_\ell$, $w_\ell$, $\mathcal{W}_\ell$ by restricting the corresponding functions by $E(M_\ell)$. Construct the set $Y$ of subsets of $S$ and the function $w_S\colon Y\rightarrow \mathbb{N}$ as follows. Initially, set $Y=\emptyset$.
\begin{itemize}
\item  Define $w_\ell'(e_i)=1$ for $i\in\{1,2,3\}$ and let $w_\ell'(e)=w_\ell(e)$ for $e\in E(M_\ell)\setminus (L\cup S)$.
 For $i\in\{1,2,3\}$, 
find the minimum $k_\ell^{(i)}\leq k+1$ such that $(M_\ell,T_\ell,\mathcal{X}_\ell,P_\ell,(S,e_i),w_\ell',\mathcal{W}_\ell,k_\ell^{(i)})$ is a yes-instance of \ewmsc{}.
If such $k_\ell^{(i)}$ exists, then add $\{e_i\}$ in $Y$ and set $w_S(\{e_i\})=k_\ell^{(i)}-1$.
\item Let $\mathcal{X}_\ell'=\mathcal{X}_\ell\cup \{S\}$. 
For each $i\in\{1,2,3\}$, do the following. Set $P_\ell^{(i)}(X)=P_\ell(X)$ for $X\in \mathcal{X}_\ell$ and $P_\ell^{(i)}(Y)=\{x_i\}$, set $w_S^{(i)}(\{e_i\})=1$ and $\mathcal{W}_\ell^{(i)}=\mathcal{W}_\ell\cup\{w_S^{(i)}\}$. 
Find the minimum $k_\ell^{(i)}\leq k+1$ such that $(M_\ell,T_\ell,\mathcal{X}_\ell',P_\ell^{(i)},\emptyset,w_\ell,\mathcal{W}_\ell^{(i)},k_\ell^{(i)})$ is a yes-instance of \ewmsc{}.
If such $k_\ell^{(i)}$ exists, then add $S\setminus \{e_i\}$ in $Y$ and set $w_S(S\setminus\{e_i\})=k_\ell^{(i)}-1$. 
\end{itemize}
If $Y=\emptyset$, then return a no-answer and stop. Otherwise, set $T'=T\cap E(M')$, $\mathcal{X}'=\{X\in\mathcal{X}\mid X\subseteq E(M')\}\cup \{S\}$ and for $X\in\mathcal{X}'$, let $P'(X)=P(X)$ if $X\subseteq P(X)$ and $P'(S)=Y$. Also let $\mathcal{W}'=\{w_X\mid X\in \mathcal{X}'\}$ and let $w'$ be the restriction of $w$ on $E(M')$.
Return the instance  $(M',T',\mathcal{X}',P',\emptyset,w',\mathcal{W}',k)$ and solve it using the conflict tree $\mathcal{T}'$.
\end{itemize}
\end{reduction}

\begin{lemma}\label{lem:3-leaf-wmsc}
Reduction Rule~\ref{rule:3-leaf-wmsc}  is safe and can be implemented to run  in time $2^{\Oh(k^2\log k)}\cdot|E(M)|^{\Oh(1)}$.
\end{lemma}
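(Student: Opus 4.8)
The plan is to establish, for each of the three branches of Reduction Rule~\ref{rule:3-leaf-wmsc}, a two-way equivalence between the input instance $(M,T,\mathcal{X},P,\mathcal{Z},w,\mathcal{W},k)$ and the instance on $M'$ that the rule returns, and then to bound the running time and check that consistency and the parameter bound are preserved. Write $M=M'\oplus_3 M_\ell$ with $S=\{e_1,e_2,e_3\}=E(M')\cap E(M_\ell)$, which is a circuit of both $M'$ and $M_\ell$. The structural backbone is Lemma~\ref{lem:circ}~iii): a circuit $C$ of $M$ is a circuit of $M'$ disjoint from $S$, a circuit of $M_\ell$ disjoint from $S$, or of the form $C=C_1\bigtriangleup C_2$ with $C_1\in\mathcal{C}(M')$, $C_2\in\mathcal{C}(M_\ell)$, $C_1\cap S=C_2\cap S=\{e\}$ for a single $e\in S$, and $C_1\bigtriangleup S\in\mathcal{C}(M')$ or $C_2\bigtriangleup S\in\mathcal{C}(M_\ell)$. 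Since the root of $\mathcal{T}$ carries a terminal and $M_\ell$ is a leaf, a solution $C$ of $M$ cannot lie entirely inside $E(M_\ell)$; hence its \emph{$M_\ell$-part} $C_\ell:=C\cap E(M_\ell)$ is either empty (in which case $C$ is a circuit of $M'$ disjoint from $S$) or a circuit of $M_\ell$ meeting $S$ in exactly one element, which moreover must contain $T_\ell:=T\cap E(M_\ell)$ and meet every constraint circuit $X\subseteq E(M_\ell)$ inside $P(X)$. In each branch the rule enumerates the possible interface types of $C_\ell$ and records, via a new weight function on $S$ in case~i), via extra weight-$1$ terminals in case~ii), or via a new circuit constraint on $S$ with weight function $w_S$ in case~iii), the minimum feasible weight of a $C_\ell$ realizing each type.

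One auxiliary fact drives the bookkeeping: by Lemma~\ref{lem:circ-triangle-a} a circuit $D$ (of $M'$ or of $M_\ell$) with $|D\cap S|=2$ has $D\bigtriangleup S$ a circuit, while by Lemma~\ref{lem:circ-triangle} a circuit $D$ with $D\cap S=\{e\}$ either has $D\bigtriangleup S$ a circuit or splits, after adding $S$, into two circuits through the other two elements of $S$ that \emph{do} have this property. So the obligation ``$C_1\bigtriangleup S\in\mathcal{C}(M')$ or $C_2\bigtriangleup S\in\mathcal{C}(M_\ell)$'' may be read either as a property of a one-element interface together with its $\bigtriangleup S$-completion, or equivalently as the existence of a circuit meeting $S$ in the complementary two-element set. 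In case~i), where $M_\ell$ is terminal-free and constraint-free, the only nontrivial $C_\ell$ is a detour circuit through one $e_i$ with $C_\ell\bigtriangleup S$ a circuit of $M_\ell$, whose minimum weight is stored as $w'(e_i)$ (or $k+1$ if none exists); a solution of $M$ either avoids $E(M_\ell)$ and survives verbatim, or equals $C_1\bigtriangleup C_2$, in which case we take $C_1$ as the $M'$-solution when $C_2\bigtriangleup S$ is a circuit, and pass to $C_1\bigtriangleup S$ (now meeting $S$ in two elements, glued back through the missing element of $S$) when instead $C_1\bigtriangleup S$ is a circuit, the symmetry above making the weights agree. In case~ii), where $T_\ell\cup\{e_i\}$ is a circuit of $M_\ell$ for some $i$, the part $C_\ell$ is forced up to the choice of $S$-element: if $(T_\ell\cup\{e_i\})\bigtriangleup S$ is a circuit we keep $e_i$ as a weight-$1$ terminal of $M'$ and re-weight the other two elements of $S$ by one auxiliary \ewmsc{} call on $M_\ell$ per choice, and if it is not a circuit we move $S\setminus\{e_i\}$ into $T'$, which by Lemma~\ref{lem:circ-triangle-a} applied to the returned $M'$-solution (whose intersection with $S$ is then forced to $\{e_j,e_h\}$) exactly reinstates the obligation $C_1\bigtriangleup S\in\mathcal{C}(M')$ needed to lift the solution back. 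In case~iii), the generic one that also covers constraint circuits inside $M_\ell$, for each $e_i\in S$ we use the $\mathcal{Z}=(S,e_i)$ mechanism, which enforces exactly ``$C_\ell\cap S=\{e_i\}$ and $C_\ell\bigtriangleup S\in\mathcal{C}(M_\ell)$'', to compute the minimum weight of a feasible $C_\ell$ with that interface, and by adding $S$ to the constraint family of $M_\ell$ with a forced one-element value we compute the minimum weight of a feasible $C_\ell$ meeting $S$ in $S\setminus\{e_i\}$; the achievable interfaces and their minimum weights become $P'(S)=Y$ and $w_S$, and Lemma~\ref{lem:circ}~iii) with the symmetry above yields the equivalence both ways (a solution of $M$ projects onto a feasible $M'$-solution of weight captured by $w_S$, and a solution of the returned instance lifts by gluing in the stored optimal $C_\ell$, which exists and is a circuit precisely because the $\mathcal{Z}$-constraint was met).

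For the running time, every auxiliary problem invoked inside the rule is an \ewmsc{} instance on $M_\ell$, which is graphic, cographic, or obtained from $R_{10}$ by deleting elements and adding parallel elements; hence each is solvable in time $2^{\Oh(k^2\log k)}\cdot|E(M_\ell)|^{\Oh(1)}$ by Lemmas~\ref{lem:r10-wmsc}, \ref{lem:graphic-wmsc} and~\ref{lem:cographic-wmsc} (the auxiliary parameter is at most $k+1$, affecting only constants), and the rule makes only a constant number of such calls plus polynomial bookkeeping. The returned instance is consistent with respect to $\mathcal{T}'$: its $\mathcal{Z}$ is empty, the only new constraint circuit is $S\subseteq E(M_s)$ with $M_s$ a node of $\mathcal{T}'$, and the parameter does not increase. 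The main obstacle is exactly the exhaustive bidirectional case analysis above: for each interface type, and in both directions, one must verify that the stored minimum weight is tight, that the feasibility conditions on $T_\ell$ and on the remaining constraint circuits are respected, and---the delicate point---that the ``$\bigtriangleup S$ is a circuit'' obligation is transferred correctly between the $M'$-side and the $M_\ell$-side (possibly after rerouting through a different element of $S$ and forcing the complementary pair of $S$ into $T'$), so that no solution is lost and the parameter never grows.
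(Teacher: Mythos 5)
Your proposal is correct and takes essentially the same route as the paper's proof: a bidirectional case analysis for each of the three branches driven by Lemma~\ref{lem:circ}~iii), with Lemmas~\ref{lem:circ-triangle} and~\ref{lem:circ-triangle-a} used to transfer the ``$\bigtriangleup S$ is a circuit'' obligation between the $M'$-side and the $M_\ell$-side, together with the same consistency, parameter and running-time accounting via the auxiliary \ewmsc{} calls of Lemmas~\ref{lem:r10-wmsc}--\ref{lem:cographic-wmsc}. The paper is merely more explicit in a few spots your outline compresses, e.g.\ the reverse direction when the returned solution meets $S$ in two elements (where the two stored one-element detours are combined into a cycle containing a circuit through the third element of $S$) and the check that the newly assigned weights are strictly positive, but these are details consistent with your plan rather than gaps.
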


\begin{proof}
It is straightforward to see that if the rule returns a new instance, then it is consistent with respect to $\mathcal{T}'$ and the parameter does not increase.
We show that the rule either correctly solves the problem or returns an equivalent instance. 

Suppose that $(M,T,\mathcal{X},P,\mathcal{Z},w,\mathcal{W},k)$ is a consistent yes-instance. We prove that the rule returns a yes-instance. Denote by $C$ a circuit of $M$ that is a solution for the instance.
We consider three cases corresponding to the cases i)--iii) of the rule.

\medskip
\noindent
{\bf Case 1.} $T\cap E(M_\ell)= \emptyset$ and there is no $X\in\mathcal{X}$ such that $X\in E(M_\ell)$. 

 If $C\subseteq E(M')$, then by Lemma~\ref{lem:circ} iii), $C$ is a circuit of $M'$, and $C$ is a solution for the instance $(M',T,\mathcal{X},P,\emptyset,w',\mathcal{W},k)$ returned by Rule~\ref{rule:3-leaf-wmsc} i), that is, we get a yes-instance.
Suppose that $C\cap E(M_\ell)\neq\emptyset$. Then, by Lemma~\ref{lem:circ} iii),
$C=C_1\bigtriangleup C_2$, where $C_1\in \mathcal{C}(M')$, $C_2\in\mathcal{C}(M_\ell)$, $C_1\cap S=C_2\cap S=\{e_i\}$ for some $i\in\{1,2,3\}$, and 
$C_1\bigtriangleup S$ is a circuit of $M'$ or $C_2\bigtriangleup S$ is a circuit of $M_\ell$.

Suppose that $C_2\bigtriangleup S$ is a circuit of $M_\ell$. Then $C_2$ is a circuit of $M_\ell$ containing $e_i$ such that $C_2\bigtriangleup S$ is a circuit and $w(C_2\setminus \{e_i\})\leq k$.
We have that  $w'(e_i)\leq w(C_2\setminus \{e_i\})$. Hence, $C_1$  is a solution for the instance $(M',T,\mathcal{X},P,\emptyset,w',\mathcal{W},k)$ returned by Rule~\ref{rule:3-leaf-wmsc} i) and, therefore, the rule returns a yes-instance.

Assume now that $C_2\bigtriangleup Z$ is a not circuit of $M_\ell$. By Lemma~\ref{lem:circ-triangle}, $C_2$ is a disjoint union of two circuits $C_2^{(1)}$ and $C_2^{(2)}$ of $M_2$ containing $e_h,e_j\in Z\setminus\{e_i\}$, and  $C_2^{(1)}\bigtriangleup S$ and $C_2^{(2)}\bigtriangleup S$ are circuits of $M_\ell$.
We obtain that $w'(e_h)\leq  w(C_2^{(1)}\setminus \{e_h\})$ and  $w'(e_j)\leq  w(C_2^{(2)}\setminus \{e_j\})$. Consider $C_1'=C_1\bigtriangleup S$. Because  $C_2\bigtriangleup S$ is  not a circuit of $M_\ell$, $C_1'$ is a circuit of $M'$. Since $e_h,e_j\in E(M')$, we have that $C_1'$ is a solution for $(M',T,\mathcal{X},P,\emptyset,w',\mathcal{W},k)$ returned by Rule~\ref{rule:3-leaf-wmsc} i). Hence, we get a yes-instance.

\medskip
\noindent
{\bf Case 2.} There is no $X\in\mathcal{X}$ such that $X\subseteq E(M_\ell)$, but $T_\ell=T\cap E(M_\ell)\neq\emptyset$
and there is $i\in\{1,2,3\}$ such that $C_\ell=T_\ell\cup\{e_i\}$ is a circuit of $M_\ell$. 

Notice that $w'(e)\geq 1$ for $e\in E(M')\setminus L$, that is, the instance returned by \ref{rule:3-leaf-wmsc} ii) is a feasible instance of \ewmsc. To prove it, observe that if 
$C_\ell\bigtriangleup S$ is a circuit of $M_\ell$ and $j\in\{1,2,3\}\setminus \{i\}$, then $k_\ell^{(h)}\geq 2$, because any solution $C'$ for $(M_\ell,T_\ell,\mathcal{X}_\ell,P_\ell,\emptyset,w_\ell,\mathcal{W}_\ell,k_\ell^{(h)})$ contains at least one element of $E(M_\ell)\setminus (T_\ell\cup S)$. Otherwise, we get that $C_\ell\bigtriangleup C'=\{e_i,e_h\}$ is a cycle of $M_\ell$ contradicting that $S$ is a circuit of $M_\ell$.

By Lemma~\ref{lem:circ} iii),
$C=C_1\bigtriangleup C_2$, where $C_1\in \mathcal{C}(M')$, $C_2\in\mathcal{C}(M_\ell)$, $C_1\cap S=C_2\cap S=\{e_h\}$ for some $h\in\{1,2,3\}$, and 
$C_1\bigtriangleup S$ is a circuit of $M'$ or $C_2\bigtriangleup S$ is a circuit of $M_\ell$. 

Assume first that $C_\ell\bigtriangleup S$ is a circuit of $M_\ell$. If $h=i$, then it is straightforward to verify that $C'=C_1\bigtriangleup C_\ell$ is a solution for 
the instance 
$(M',T',\mathcal{X},P,\emptyset,w',\mathcal{W},k)$ returned by Rule~\ref{rule:3-leaf-wmsc} ii) and, therefore, the rule returns a yes-instance.
Suppose that $h\in\{1,2,3\}\setminus\{i\}$. 
We have that $C_2$ is a solution for 
$(M_\ell,T_\ell,\mathcal{X}_\ell,P_\ell,\emptyset,w_\ell,\mathcal{W}_\ell,k_\ell^{(h)})$ constructed in Rule~\ref{rule:3-leaf-wmsc} ii). 
Hence, $w'(e_j)=k_\ell^{(h)}-1$, where $k_\ell^{(h)}$ is at most the weight of the solution $C_2$ for  $(M_\ell,T_\ell,\mathcal{X}_\ell,P_\ell,\emptyset,w_\ell,\mathcal{W}_\ell,k_\ell^{(h)})$ and $j\in\{1,2,3\}\setminus \{i,h\}$.  
Notice that $C_\ell\subset C_2\bigtriangleup S$, that is, $C_2\bigtriangleup S$ is not a circuit of $M_\ell$. Hence, $C_1'=C_1\bigtriangleup S$ is a circuit of $M'$. 
We obtain that  
$C_1'$ is a solution for the instance $(M',T',\mathcal{X},P,\emptyset,w',\mathcal{W},k)$ returned by Rule~\ref{rule:3-leaf-wmsc} ii). Hence, we get  a yes-instance of the problem.

Suppose now that $C_\ell\bigtriangleup S$ is not a circuit of $M_\ell$. We claim that $h=i$ and $C_2=C_\ell$ in this case.
If $h=i$, then $C_2=C_\ell$ by minimality, because $T_\ell\subseteq C_2$. Suppose that $h\neq i$. By Lemma~\ref{lem:circ-triangle}, $C_\ell\bigtriangleup S$ is disjoint union of two circuits $C_\ell^{(1)}$ and $C_\ell^{(2)}$
of $M_\ell$ containing $e_h$ and $e_j$ respectively, where $j\in\{1,2,3\}\setminus\{i,h\}$. Therefore, $C_\ell^{(1)}\subseteq C_2$ and, by minimality, $C_2=C_\ell^{(1)}$, but at least one terminal of $T_\ell$ is in $C_\ell^{(2)}$ contradicting $T_\ell\subseteq C_2$. Hence, $h=i$ and $C_2=C_\ell$. Then $C_1'=C_1\bigtriangleup S$ is a circuit of $M'$ and is a solution for 
the instance 
$(M',T',\mathcal{X},P,\emptyset,w',\mathcal{W},k)$ returned by Rule~\ref{rule:3-leaf-wmsc} ii) and, therefore, the rule returns a yes-instance.

\medskip
\noindent
{\bf Case 3.} Cases~1 and 2 do not apply, that is, we are in the conditions of Rule~\ref{rule:3-leaf-wmsc} iii).  By Lemma~\ref{lem:circ} iii),
$C=C_1\bigtriangleup C_2$, where $C_1\in \mathcal{C}(M')$, $C_2\in\mathcal{C}(M_\ell)$, $C_1\cap S=C_2\cap S=\{e_i\}$ for some $i\in\{1,2,3\}$, and 
$C_1\bigtriangleup S$ is a circuit of $M'$ or $C_2\bigtriangleup S$ is a circuit of $M_\ell$. Notice that if Rule~\ref{rule:3-leaf-wmsc} iii) returns an instance, then 
$w_S$ has only positive values, because it always holds that $k_\ell^{(i)}\geq 2$, since the conditions of Rule~\ref{rule:3-leaf-wmsc} ii) are not fulfilled.

Assume first that $C_2\bigtriangleup S$ is a circuit of $M_\ell$. Notice that $C_2$ is a feasible circuit for  $(M_\ell,T_\ell,\mathcal{X}_\ell,P_\ell,(S,e_i),w_\ell',\mathcal{W}_\ell,k_\ell)$ for $k_\ell\leq k$ and its weight with respect to this instance is at most $k$. Hence, $\{e_i\}\in Y\neq\emptyset$. It means that we do not stop while executing Rule~\ref{rule:3-leaf-wmsc} iii) and $w_S(\{e_i\})$ is at most the weight of $C_2$. It implies that $C_1$ is a solution for $(M',T',\mathcal{X}',P',\emptyset,w',\mathcal{W}',k)$ returned by Rule~\ref{rule:3-leaf-wmsc} iii), i.e., we obtain a yes-instance.

Suppose that $C_2\bigtriangleup S$ is not a circuit of $M_\ell$. Then $C_1\bigtriangleup S$ is a circuit of $M'$. We have that $C_2$ is a feasible circuit for   $(M_\ell,T_\ell,\mathcal{X}_\ell',P_\ell^{(i)},\emptyset,w_\ell,\mathcal{W}_\ell^{(i)},k_\ell^{(i)})$  for $k_\ell\leq k$ and its weight with respect to this instance is at most $k$. Hence, $S\setminus \{e_i\}\in Y\neq\emptyset$. Therefore, we do not stop and $w_S(S\setminus \{e_i\})$ is at most the weight of $C_2$. It implies that $C_1'=C_1\bigtriangleup S$ is a solution for  $(M_\ell,T_\ell,\mathcal{X}_\ell',P_\ell^{(i)},\emptyset,w_\ell,\mathcal{W}_\ell^{(i)},k_\ell^{(i)})$ returned by Rule~\ref{rule:3-leaf-wmsc} iii), that is, the rule returns a yes-instance.

\medskip
\medskip
Suppose now that the instance constructed by the rule is a yes-instance with a solution $C'$. We show that the original instance $(M,T,\mathcal{X},P,\mathcal{Z},w,\mathcal{W},k)$ is a  yes-instance.  

We consider three cases corresponding to the cases of the rule.

\medskip
\noindent
{\bf Case 1.} The new instance is constructed by Rule~\ref{rule:3-leaf-wmsc} i). If $C'\cap S=\emptyset$, then it is straightforward to see that $C'$ is a solution for the original instance and, therefore, $(M,T,\mathcal{X},P,\mathcal{Z},w,\mathcal{W},k)$ is a  yes-instance. Suppose that $C'\cap S\neq\emptyset$. Clearly, $|C'\cap S|\leq 2$.

Assume that $C'\cap S=\{e_i\}$ for some $i\in\{1,2,3\}$. Clearly, $w'(e_i)\leq k$. Hence, $M_\ell$ has a circuit $C''$ with $C''\cap S=\{e_i\}$ such that $w'(e_i)=w(C''\setminus\{e_i\})$ and $C''\bigtriangleup S$ is a circuit of $M_\ell$. By Lemma~\ref{lem:circ} iii), $C=C'\bigtriangleup C''$ is a circuit of $M$. We obtain that $C$ is a solution for $(M,T,\mathcal{X},P,\mathcal{Z},w,\mathcal{W},k)$, that is, it is a  yes-instance.

Suppose that $C'\cap S=\{e_i,e_j\}$ for distinct $i,j\in\{1,2,3\}$. Let $h\in\{1,2,3\}\setminus\{i,j\}$. We have that $w'(e_i)\leq k$ and $w'(e_j)\leq k$. It means, that $M_\ell$ has circuits $C_1''$ and $C_2''$ such that $C_1''\cap S=\{e_i\}$, $C_2''\cap S=\{e_j\}$  and $w'(e_i)=w(C_1''\setminus\{e_i\})$,  $w'(e_j)=w(C_2''\setminus\{e_i\})$. Consider $C''=C_1''\bigtriangleup C_2''$. By Observation~\ref{obs:symm}, $C''$ is a cycle of $M_\ell$. Then there is a circuit $C'''\subseteq C''$ of $M_\ell$ such that $C'''\cap S=\{e_h\}$. Notice that $w(C'''\setminus\{e_h\}\leq w'(e_i)+w'(e_j)$. By Lemma~\ref{lem:circ-triangle-a}, $C'\bigtriangleup S$ is a circuit of $M$. Let $C=(C'\bigtriangleup S)\bigtriangleup C'''$. By Lemma~\ref{lem:circ} iii), $C$ is a circuit of $M$.  We have that $C$ is a solution for $(M,T,\mathcal{X},P,\mathcal{Z},w,\mathcal{W},k)$ and, therefore, it is a  yes-instance.

\medskip
\noindent
{\bf Case 2.} The new instance is constructed by Rule~\ref{rule:3-leaf-wmsc} ii).  Recall that $C_\ell=T_\ell\cup\{e_i\}$ is a circuit of $M_\ell$.
Clearly, $1\leq |C'\cap S|\leq 2$.

Suppose first that $C_\ell\bigtriangleup S$ is a circuit of $M_\ell$. If $|C'\cap S|=1$, then $C'\cap S=\{e_i\}$. Then we obtain that $C=C'\bigtriangleup C_\ell$ is  a solution for $(M,T,\mathcal{X},P,\mathcal{Z},w,\mathcal{W},k)$ and it is a  yes-instance. Assume that $C'\cap S=\{e_i,e_j\}$ for $j\in\{1,2,3\}\setminus\{i\}$. Then $w'(e_j)\leq k$. 
Then there is a circuit  $C''$ of $M_\ell$ such that $C''\cap S=\{e_h\}$ for $h\in\{1,2,3\}\setminus\{i,j\}$ that is a solution of weight $w'(e_j)+1$ for $(M_\ell,T_\ell,\mathcal{X}_\ell,P_\ell,\emptyset,w_\ell,\mathcal{W}_\ell,k_\ell^{(h)})$ considered by Rule~\ref{rule:3-leaf-wmsc} ii). Notice that $C'\bigtriangleup S$ is a circuit of $M'$ by Lemma~\ref{lem:circ-triangle-a}. By Lemma~\ref{lem:circ} iii), we obtain that $C=(C'\bigtriangleup S)\bigtriangleup C''$ is a solution for $(M,T,\mathcal{X},P,\mathcal{Z},w,\mathcal{W},k)$ and, therefore, it is a  yes-instance.

Assume now that $C_\ell\bigtriangleup S$ is a not circuit of $M_\ell$. Then $C'\cap S=\{e_h,e_j\}$ for $\{h,j\}=\{1,2,3\}\setminus\{i\}$. By Lemma~\ref{lem:circ-triangle-a}, $C'\bigtriangleup S$ is a circuit of $M'$, and by Lemma~\ref{lem:circ} iii), we obtain that $C=(C'\bigtriangleup S)\bigtriangleup C_\ell$ is a solution for $(M,T,\mathcal{X},P,\mathcal{Z},w,\mathcal{W},k)$, that is, it is a  yes-instance.

\medskip
\noindent
{\bf Case 3.} The new instance is constructed by Rule~\ref{rule:3-leaf-wmsc} iii). We have that $C'\cap S\in Y$ for the set $Y$ constructed by the rule.

Assume that $C'\cap S=\{e_i\}$ for $i\in\{1,2,3\}$. Then $w_S(\{e_i\})\leq k$ and, therefore, there is a solution $C''$ of weight $k_\ell^{(i)}=w_S(\{e_i\})+1$ for the instance
$(M_\ell,T_\ell,\mathcal{X}_\ell,P_\ell,(S,e_i),w_\ell',\mathcal{W}_\ell,k_\ell^{(i)})$ constructed by the rule. Notice that $C''\bigtriangleup S$ is a circuit of $M_\ell$.
We obtain that $C=C'\bigtriangleup C''$ is a solution for $(M,T,\mathcal{X},P,\mathcal{Z},w,\mathcal{W},k)$ and it is a  yes-instance.

Suppose now that $C'\cap S=\{e_i,e_j\}$ for distinct $i,j\in\{1,2,3\}$. Let $h\in\{1,2,3\}\setminus\{i,j\}$. We have that $w_S(\{e_i,e_j\})\leq k$. Hence, there is a solution $C''$ of weight $k_\ell^{(h)}=w(\{e_i,e_j\})+1$ for the instance $(M_\ell,T_\ell,\mathcal{X}_\ell',P_\ell^{(i)},\emptyset,w_\ell,\mathcal{W}_\ell^{(i)},k_\ell^{(h)})$. By Lemma~\ref{lem:circ-triangle-a}, $C'\bigtriangleup S$ is a circuit of $M'$, and by Lemma~\ref{lem:circ} iii), $C=C'\bigtriangleup C''$ is a circuit of $M$. We have that $C$ is a solution for the original instance $(M,T,\mathcal{X},P,\mathcal{Z},w,\mathcal{W},k)$. Hence, it is a yes-instance.

\medskip
To complete the proof, it remains to evaluate the running time. Rule~\ref{rule:3-leaf-wmsc} i) can be executed in  time $2^{\Oh(k^2\log k)}\cdot|E(M)|^{\Oh(1)}$.\footnote{In fact, it can be done in polynomial time for this degenerate case.} To see it, observe that to compute $w'(e_i)$ for $i\in\{1,2,3\}$, we can solve \ewmsc{} for $(M_\ell,\emptyset,\emptyset,\emptyset,(S,e_i),w_\ell,k_\ell^{(i)})$ for $k_\ell^{(i)}\leq k$, where $w_\ell(e)=w(e)$ for $e\in E(M_\ell)\setminus (L\cup S)$ and $w_\ell(e_i)=1$ for $i\in\{1,2,3\}$, using Lemmas~\ref{lem:r10-wmsc}--\ref{lem:cographic-wmsc} depending on the type of $M_\ell$. For  Rule~\ref{rule:3-leaf-wmsc} ii), observe that it can be checked in polynomial time whether $C_\ell=T_\ell\cup\{e_i\}$ and $C_\ell\bigtriangleup S$ are circuits of $M$ for $i\in\{1,2,3\}$.  Then we can solve the problem for each $(M_\ell,T_\ell,\mathcal{X}_\ell,P_\ell,\emptyset,w_\ell,\mathcal{W}_\ell,k_\ell^{(h)})$ in time $2^{\Oh(k^2\log k)}\cdot|E(M)|^{\Oh(1)}$ by Lemmas~\ref{lem:r10-wmsc}--\ref{lem:cographic-wmsc}. Finally, the problem for every auxiliary instance $(M_\ell,T_\ell,\mathcal{X}_\ell,P_\ell,(S,e_i),w_\ell',\mathcal{W}_\ell,k_\ell^{(i)})$ and every
$(M_\ell,T_\ell,\mathcal{X}_\ell',P_\ell^{(i)},\emptyset,w_\ell,\mathcal{W}_\ell^{(i)},k_\ell^{(i)})$ can be solved in time $2^{\Oh(k^2\log k)}\cdot|E(M)|^{\Oh(1)}$ by Lemmas~\ref{lem:r10-wmsc}--\ref{lem:cographic-wmsc}.
\end{proof}

Now we can complete the proof of Theorem~\ref{thm:wmsc}.  Observe that $\mathcal{M}$ and the corresponding conflict tree $\mathcal{T}$ can be constructed in polynomial time by Theorem~\ref{thm:decomp-good}, and then we apply the reduction rules at most $|V(\mathcal{T})|-1$ times until we obtain an instance of \ewmsc{} for a matroid of one of basic types and solve the problem using Lemmas~\ref{lem:r10-wmsc}--\ref{lem:cographic-wmsc}.

\section{Solving \scir{} on regular matroids}\label{sec:scir}
In this section we prove the following theorem.

\begin{theorem}\label{thm:scir}
\scir{} is \classFPT{} on regular matroids when parameterized by $|T|$.
\end{theorem}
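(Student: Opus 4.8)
The plan is to mirror the proof of Theorem~\ref{thm:wmsc}, with \wmsc{} replaced throughout by its weightless decision version. First I would introduce an extended problem \escir{}: an instance is a tuple $(M,T,\mathcal{X},P,\mathcal{Z})$, exactly as in the definition of the \constrext{} for \ewmsc{} but with the weight functions and the integer $k$ deleted; a circuit $C$ of $M$ is \emph{feasible} if $C\cap X\in P(X)$ for every $X\in\mathcal{X}$, $T\subseteq C$, and, when $\mathcal{Z}=(Z,t)$, $C\cap Z=\{t\}$ and $C\bigtriangleup Z$ is a circuit of $M$. I would parameterize \escir{} by $\kappa=|T|+|\mathcal{X}|$. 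Since \scir{} is the special case $\mathcal{X}=\mathcal{Z}=\emptyset$ (and the degenerate case $T=\emptyset$ is trivial --- one only checks whether $E(M)$ is dependent), it suffices to show \escir{} is \classFPT{} parameterized by $\kappa$. The algorithm computes a conflict tree $\mathcal{T}$ for $M$ via Theorem~\ref{thm:decomp-good}, roots it at a node containing a terminal, and processes $\mathcal{T}$ bottom-up with leaf-reduction rules that keep the instance \emph{consistent} (each $X\in\mathcal{X}$ inside a single basic matroid and $\mathcal{Z}=\emptyset$) and never increase $\kappa$.

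Next I would solve \escir{} on the three basic matroid types. For a matroid obtained from $R_{10}$ by adding parallels and deleting elements there are no odd circuits, so $\mathcal{X}=\mathcal{Z}=\emptyset$; after deleting parallel copies of non-terminal elements and rejecting if two terminals are parallel (no circuit contains two parallel elements), at most ten parallel classes remain, and a brute-force search over subsets of their representatives runs in polynomial time. On a graphic matroid $M=M(G)$ (recognizable via \cite{Seymour81}), guessing $C\cap X\in P(X)$ for each $X\in\mathcal{X}$ costs $6^{|\mathcal{X}|}\le 6^{\kappa}$ and turns the chosen elements into terminals; by Lemma~\ref{lem:gr-cogr}(i) a constraint $\mathcal{Z}=(Z,t)$ becomes "the cycle avoids a prescribed vertex''; what remains is to decide whether $G$ has a simple cycle through a prescribed set of $\Oh(\kappa)$ edges, which is \classFPT{} in $\kappa$ by the algorithm of Bj{\"o}rklund et al.~\cite{BjorklundHT12} (or, deterministically, by reduction to \textsc{Disjoint Paths} and~\cite{RobertsonS-GMXIII}). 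On a cographic matroid $M=M^*(G)$ the same guessing step, together with Lemma~\ref{lem:gr-cogr}(ii), reduces the problem to: does the connected graph $G$ have a bipartition $(A,B)$ with $G[A],G[B]$ connected, containing a prescribed edge set $T'$ of $\Oh(\kappa)$ edges in the cut, and with the endpoints of the $\mathcal{Z}$-side edges on a common side? Here --- and this is where \scir{} is genuinely easier than \wmsc{}, since no bound on the cut size is imposed --- I expect a direct structural argument to suffice, with no recourse to recursive understanding: no $T'$-edge can lie inside $A$ or inside $B$, so $A$ and $B$ are unions of connected components of $G-T'$; if $G-T'$ has $\ge 3$ components there is no solution, if it has exactly two the partition is forced and one only verifies that all $T'$-edges cross, and if $G-T'$ is connected one needs a bond of $G-T'$ crossing all $T'$-edges, which (by the spanning-tree argument: delete an edge of a tree path between prescribed vertices) exists iff the ``must-differ'' constraint graph on the endpoints of $T'$ is bipartite --- all testable in polynomial time, the extra ``same-side'' constraints from $\mathcal{Z}$ being folded in by merging vertices.

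The combination step then follows the scheme of Section~\ref{sec:wmsc:reg}, with $1$-, $2$- and $3$-leaf reduction rules analogous to Reduction Rules~\ref{rule:1-leaf-wmsc}--\ref{rule:3-leaf-wmsc} but tracking only \emph{feasibility} rather than minimum weight: a terminal- and $\mathcal{X}$-free $2$-leaf $M_\ell$ with summing element $e$ is absorbed by keeping $e$ in the parent if $M_\ell$ has a circuit through $e$ and deleting $e$ otherwise; a $2$-leaf carrying terminals or an $\mathcal{X}$-constraint is absorbed by turning $e$ into a terminal after checking that the relevant sub-instance of \escir{} on $M_\ell$ is a yes-instance; and the $3$-leaf rule mimics Reduction Rule~\ref{rule:3-leaf-wmsc}, in the generic case either deleting from the parent the summing elements through which $M_\ell$ cannot be routed, or adding the summing triangle $S$ to $\mathcal{X}$ with the appropriate option set $P(S)$ (together with the relevant ``$\bigtriangleup S$ is a circuit'' information). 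Correctness of all rules is proved from Lemma~\ref{lem:circ} together with Lemmas~\ref{lem:circ-triangle} and~\ref{lem:circ-triangle-a}, as in the proofs of Lemmas~\ref{lem:1-leaf-wmsc}--\ref{lem:3-leaf-wmsc}. The one point that must be checked for parameterization by $|T|$ to work is that $\kappa=|T|+|\mathcal{X}|$ never increases: whenever a rule adds $S$ to $\mathcal{X}$ it simultaneously removes either some $X\in\mathcal{X}$ with $X\subseteq E(M_\ell)$ or at least one terminal lying in $E(M_\ell)$. Since the input has $\mathcal{X}=\emptyset$, we get $\kappa\le|T|$ throughout, hence $|\mathcal{X}|\le|T|$, so every basic-matroid subproblem invoked by a rule is solved in $f(|T|)\cdot n^{\Oh(1)}$ time; with at most $|V(\mathcal{T})|-1=\Oh(|E(M)|)$ rule applications this yields an $f(|T|)\cdot n^{\Oh(1)}$ algorithm.

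I expect the main obstacle to be the $3$-leaf reduction rule and its safety proof, exactly as for Lemma~\ref{lem:3-leaf-wmsc}: one has to enumerate all the ways a feasible circuit of $M$ can meet the summing triangle $S$ --- in one or in two elements, with or without the property that the symmetric difference with $S$ is again a circuit, and with that property holding on the leaf side or on the parent side --- and make the deletions, the option set $P(S)$, and the auxiliary sub-instances encode precisely those cases. The weightless setting removes the arithmetic of propagating costs but keeps all of this combinatorial bookkeeping, and in particular it makes the terminal-free $3$-leaf case delicate, since there the leaf's contribution must be propagated by editing the parent matroid rather than by naively introducing a new $\mathcal{X}$-constraint (which would break the parameter bound). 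Establishing the invariant $|\mathcal{X}|\le|T|$ precisely --- the secondary but essential point that makes the parameterization by $|T|$ alone legitimate --- is therefore interwoven with getting the $3$-leaf rule right.
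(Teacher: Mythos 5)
Your overall architecture (an extension \escir{} with circuit constraints, the three basic-matroid subroutines, and $1$-, $2$-, $3$-leaf reduction rules on the conflict tree whose safety rests on Lemmas~\ref{lem:circ}, \ref{lem:circ-triangle} and~\ref{lem:circ-triangle-a}, with the parameter controlled by only introducing a constraint on the summing triangle when the leaf already carried a terminal or a constraint) is essentially the paper's proof, and your handling of the $R_{10}$ and graphic cases, of the terminal-free $3$-leaf case (editing the parent rather than adding a constraint), and of the invariant $|\mathcal{X}|\le|T|$ all match what the paper does. The genuine gap is in your cographic basic case, which is precisely where the real work lies. After guessing which elements of each $X\in\mathcal{X}$ go into the circuit, the task is: given a connected graph $G$ and a small edge set $T'$, decide whether there is a partition $(A,B)$ with $G[A]$, $G[B]$ connected and every edge of $T'$ crossing (plus the same-side condition coming from $\mathcal{Z}$). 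Your claim that, when $G-T'$ is connected, such a bond exists if and only if the ``must-differ'' graph $(V(T'),T')$ is bipartite is false. Take $G$ to be two triangles $u_1u_2w$ and $v_1v_2w$ sharing the vertex $w$, and $T'=\{u_1u_2,\,v_1v_2\}$. Then $G-T'$ is a connected star centered at $w$ and the constraint graph is a perfect matching, hence bipartite; but any partition separating $u_1$ from $u_2$ and $v_1$ from $v_2$ leaves the side not containing $w$ disconnected, so no bond of $G$ contains $T'$. The spanning-tree argument only produces a bond separating a single prescribed pair; it does not give one tree edge (or one bond) separating all pairs simultaneously, and bipartiteness of $(V(T'),T')$ is necessary but not sufficient.

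This is not a cosmetic issue: deciding whether a prescribed pair of disjoint vertex sets can be extended to a partition $(A,B)$ with both sides connected is the \textsc{$2$-Disjoint Connected Subgraphs} problem, which is NP-hard in general, so no simple polynomial criterion of the kind you propose can be expected; what saves the day is that the number of prescribed vertices here is $\Oh(|\mathcal{X}|)$. The paper's Lemma~\ref{lem:cographic-scir} therefore guesses, for each endpoint of an edge of $T'$ (and for the endpoints of $Z\setminus\{t\}$, via Lemma~\ref{lem:gr-cogr}~ii)), the side $A$ or $B$ it lies on ($2^{\Oh(|\mathcal{X}|)}$ guesses), and then invokes the Robertson--Seymour disjoint-paths machinery to find, in \classFPT{} time in the number of prescribed vertices, two disjoint connected subgraphs containing the two prescribed vertex sets, which are then greedily extended to a partition of $V(G)$. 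Replacing your bipartiteness test by such an argument (any \classFPT{} routine for \textsc{$2$-Disjoint Connected Subgraphs} parameterized by the number of prescribed terminals will do) repairs the proof; the rest of your plan then goes through along the paper's lines.
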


The remaining part of the section contains the proof of the theorem. 
Similarly to the proof of Theorem~\ref{thm:wmsc}, we solve a special variant of \scir{}. 
We redefine a simplified variant of \emph{circuit constraint} that we need in this section as follows.
\begin{definition}[\textbf{Circuit constraints and extensions}]
Let $M$ be 
 a binary matroid $M$ given together with 
a set $\mathcal{X}$ of nonempty pairwise disjoint subsets of $E(M)$ of size at most 3.
Then a \emph{\constrext}  for $M$ and $\mathcal{X}$ is  an 4-tuple   $(M,\mathcal{X},P,\mathcal{Z})$, 
where
\begin{itemize}
\item $P$ is a mapping   assigning  to each $X\in\mathcal{X}$ a nonempty set  $P(X)$  of subsets of $X$ of size 1 or 2, 
\item 
$\mathcal{Z}$ is either the empty set, or is a pair of the form $(Z,t)$, where $Z$ is a circuit of size 3 disjoint with the sets of $\mathcal{X}$ and  $t$ is an element of $Z$.
\end{itemize}
We say that  circuit $C$ of $M$ is a  \emph{feasible extension satisfying \constrext  $(M,\mathcal{X},P,\mathcal{Z})$} (or just feasible when it is clear from the context) if
\begin{itemize}
\item 
  $C\cap X\in P(X)$ for each $X\in \mathcal{X}$, and
  \item If $\mathcal{Z}\neq\emptyset$, then 
 $C\bigtriangleup Z$ is a circuit of $M$ and $Z\cap C=\{t\}$.
\end{itemize}
\end{definition}

\defproblema{\escir}%
{A   \constrext $(M,\mathcal{X},P,\mathcal{Z})$.
  }%
{Decide whether there is an extension satisfying the \constrext.}

\noindent\smallskip
We also say that a circuit $C$ is a  \emph{feasible extension satisfying \constrext  $(M,\mathcal{X},P,\mathcal{Z})$} is 
a \emph{solution} for an instance of \escir.
Clearly, \scir{} is a special case of \escir{} for $\mathcal{X}=\{\{t\}\mid t\in T\}$, $P(\{t\})=\{t\}$ for $t\in T$,
 and $\mathcal{Z}=\emptyset$.
In Section~\ref{sec:scir-basic} we construct algorithms for  \escir{} for basic matroids and in Section~\ref{sec:scir-reg} we explain how to use these results to solve \scir{} on regular matroids.

\subsection{Solving \escir{} on basic matroids}\label{sec:scir-basic} 
First, we consider matroids obtained from $R_{10}$ by deleting elements and adding parallel elements. Notice that, in fact, such matroids that occur in decompositions have bounded size but, formally, we have to deal with the possibility that the number of parallel elements added to $R_{10}$ can be arbitrary.

\begin{lemma}\label{lem:r10-scir}
\escir{} can be solved in polynomial time on the class of matroids that can be obtained from $R_{10}$ by adding parallel elements and deleting some elements. 
\end{lemma}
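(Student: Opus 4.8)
The plan is to mirror the proof of Lemma~\ref{lem:r10-wmsc}, observing that the only added difficulty here is that $\mathcal{Z}$ may be nonempty (there is no weight constraint, so the $k$-bookkeeping disappears). First I would recall the structural fact that $R_{10}$ has no odd circuits, hence the same holds for any matroid $M$ obtained from $R_{10}$ by adding parallel elements and deleting elements: adding a parallel element replaces one element of a circuit by another, preserving circuit size, and deletion only removes circuits. Consequently $\mathcal{X}=\emptyset$, since $\mathcal{X}$ consists of circuits of size $3$; likewise, if $\mathcal{Z}=(Z,t)$ were given with $Z$ a circuit of size $3$, then $M$ has no such circuit and the instance is immediately a \no-instance. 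So we may assume $\mathcal{X}=\emptyset$ and $\mathcal{Z}=\emptyset$, and the task reduces to deciding whether $M$ has any circuit at all (equivalently, whether $M$ is not free), which is trivial; but to be safe one should still handle the general instance uniformly.

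The key step is a parallel-class reduction. For two parallel elements $e_1,e_2$ of $M$ with, say, $e_2$ not required to be ``kept'', Observation~\ref{obs:par} shows that any circuit through $e_1$ can be turned into one through $e_2$ and vice versa, so deleting $e_2$ does not change the existence of a feasible extension (recall there are no terminals marked inside $\mathcal{X}$ here, and $\mathcal{Z}=\emptyset$, so no element is distinguished). Hence I would apply the rule: \emph{if $e_1,e_2$ are parallel, delete $e_2$}, exhaustively. Since $R_{10}$ has exactly $10$ elements and deletions only shrink the ground set, after exhaustive application of the rule the resulting matroid $M'$ has at most $10$ elements. On a matroid with a bounded number of elements we finish by brute force: enumerate all subsets $S\subseteq E(M')$, test whether $S$ is a circuit (minimal dependent set — checkable in polynomial time from the ${\rm GF}(2)$-representation), and if $\mathcal{Z}=\emptyset$ report success whenever a circuit exists; if instead $\mathcal{Z}\ne\emptyset$, the instance was already rejected above. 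Throughout, all matroid operations (identifying parallel elements, testing independence, testing circuit-minimality) are polynomial-time given the binary representation, and the rule is applied at most $|E(M)|$ times, so the overall running time is polynomial.

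The only point requiring care — and the ``main obstacle'' such as it is — is the correctness of the parallel reduction in the presence of $\mathcal{X}$ and $\mathcal{Z}$; but here this is vacuous precisely because $R_{10}$ has no triangles, which forces $\mathcal{X}=\emptyset$ and makes any nonempty $\mathcal{Z}$ infeasible. Thus no element is ever ``frozen'' by the circuit constraint, and Observation~\ref{obs:par} applies cleanly to every parallel pair. This completes the argument.
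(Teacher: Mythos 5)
There is a genuine gap, and it comes from misreading which version of the circuit constraint is in force. In the \escir{} constraint of Section~\ref{sec:scir}, $\mathcal{X}$ is \emph{not} a family of circuits of size $3$ (that is the \ewmsc{} definition); it is a family of nonempty pairwise disjoint subsets of $E(M)$ of size at most $3$, together with allowed traces $P(X)$. Indeed \scir{} itself is encoded as an \escir{} instance with $\mathcal{X}=\{\{t\}\mid t\in T\}$, i.e.\ singletons, and the $3$-leaf reduction rule later feeds instances where some $X$ has size $3$ with $P(X)$ a set of traces of size $1$ or $2$. So your deduction ``$R_{10}$ has no triangles, hence $\mathcal{X}=\emptyset$'' is false, and with it the claim that the problem degenerates to deciding whether $M$ has any circuit at all. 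Only the argument that $\mathcal{Z}=\emptyset$ (since $M$ has no odd circuits, hence no circuit $Z$ of size $3$) is correct and matches the paper.

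Because $\mathcal{X}$ can be nonempty, your parallel-class reduction as stated is unsafe: deleting an element that lies in some $X\in\mathcal{X}$ changes which traces $C\cap X$ are realizable, so elements of $Y=\bigcup_{X\in\mathcal{X}}X$ are precisely the ``frozen'' ones you claimed do not exist. The paper's proof handles this by (a) observing that if $|\mathcal{X}|>10$ the instance is a no-instance, since any choice of representatives $S(X)\in P(X)$ from more than $10$ pairwise disjoint sets must contain two parallel elements (there are only $10$ parallel classes), and no circuit through all of them exists; and (b) deleting, via Observation~\ref{obs:par}, only elements of $E\setminus Y$ that are parallel to another remaining element of $E\setminus Y$, which leaves a matroid on $O(1)$ (at most about $40$) elements, after which brute force finishes in constant time. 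Your bound of at most $10$ remaining elements also rests on the mistaken $\mathcal{X}=\emptyset$ assumption. To repair your argument you would need to restrict the deletion rule to elements outside $Y$, justify with a swap argument (as in the paper) that a minimum-violation solution avoids deleted elements, and add the $|\mathcal{X}|>10$ rejection step.
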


\begin{proof}
Let $(M,\mathcal{X},P,\mathcal{Z})$ be an instance of \escir, where $M$ is a matroid with a ground set $E$ that is obtained from $R_{10}$ be adding parallel elements and deleting some elements.
Notice that $\mathcal{Z}=\emptyset$, because $M$ has no circuits of odd size.

Notice that if $e$ and $e'$ are parallel elements of $M$, then for any circuit $C$ of $M$, either $C=\{e,e'\}$ or $|C\cap\{e,e'\}|\leq 1$.  
It implies that if $|\mathcal{X}|>10$, then $(M,\mathcal{X},P,\mathcal{Z})$ is a no-instance, because for any selection of sets $S(X)\in P(X)$, $\cup_{X\in\mathcal{X}}S(X)$ contains two parallel elements.
Suppose that this does not occur. Let $Y=\cup_{X\in\mathcal{X}}X$. 
 Let $M'$ be the matroid obtained from $M$ by the exhaustive deletions of  elements of $E\setminus Y$ that are parallel to some other remaining element of $E\setminus Y$. 
We claim that $(M,\mathcal{X},P,\mathcal{Z})$ is a yes-instance if and only if $(M',\mathcal{X},P,\mathcal{Z})$ is a yes-instance.
If $C$ is a circuit of $M'$ such that $T\subseteq C$, then $C$ is a circuit of $M$ as well. Hence, if $(M',\mathcal{X},P,\mathcal{Z})$ is a yes-instance, then $(M,\mathcal{X},\mathcal{P},Z)$ is a yes-instance of \escir. Suppose that  $(M,\mathcal{X},\mathcal{P},Z)$ is a yes-instance and let a circuit $C$ of $M$ be a solution for the instance  such that $|C\setminus E(M')|$ is minimum. 
If $C\subseteq E(M')$, then $C$ is a circuit of $M'$ and $(M',\mathcal{X},P,\mathcal{Z})$ is a yes-instance. Assume that there is $e\in C\setminus E(M')$. Then there is $e'\in E(M')$ that is parallel to $e$ in $M$ such that $e'\notin Y$.
Consider $C'=C\bigtriangleup\{e,e'\}$. By Observation~\ref{obs:par}, $C'$ is a circuit of $M$. 
We obtain that $C'$ is a solution such that
$|C'\setminus E(M')|<|C\setminus E(M')|$; a contradiction. 

It remains to to observe that $M'$ has at most 40 elements. Hence, \escir{} can be solved for $(M',\mathcal{X},P,\mathcal{Z})$ in time $\Oh(1)$ by brute force.
\end{proof}

Next, we consider graphic matroids. Recall that Bj{\"{o}}rklund, Husfeldt and Taslaman~\cite{BjorklundHT12}  proved that a shortest cycle that goes through a given set of $k$ vertices or edges in a graph can be found  in time 
$2^k\cdot n^{\Oh(1)}$. The currently best deterministic algorithm that finds a cycle that goes through a given set of $k$ vertices or edges was given by Kawarabayashi in~\cite{Kawarabayashi08}. 
We show that these results can be applied to solve \escir{}. 

\begin{lemma}\label{lem:graphic-scir}
\escir{} is \classFPT{} on graphic matroids when parameterized by $|\mathcal{X}|$.
\end{lemma}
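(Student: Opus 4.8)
The plan is to reduce \escir{} on a graphic matroid to polynomially many instances of the classical problem of finding a cycle through a prescribed set of edges of a graph, which is \classFPT{} in the number of prescribed edges by the algorithm of Bj{\"{o}}rklund, Husfeldt and Taslaman~\cite{BjorklundHT12} (or, deterministically, by Kawarabayashi~\cite{Kawarabayashi08}). Let $(M,\mathcal{X},P,\mathcal{Z})$ be the input; using~\cite{Seymour81} we find in polynomial time a graph $G$ with $M=M(G)$, and after a routine preprocessing of loops and parallel edges we recall that the circuits of $M(G)$ are exactly the cycles of $G$. First I would branch on the intended trace of a hypothetical solution on the sets of $\mathcal{X}$: for every tuple $(Y_X)_{X\in\mathcal{X}}$ with $Y_X\in P(X)$ we shall search for a circuit $C$ with $C\cap X=Y_X$ for all $X\in\mathcal{X}$. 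Since every $X$ has at most $\binom{3}{1}+\binom{3}{2}=6$ admissible subsets, there are at most $6^{|\mathcal{X}|}$ such tuples.

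Fix a tuple $(Y_X)_{X\in\mathcal{X}}$. Delete from $G$ every edge of $\bigcup_{X\in\mathcal{X}}(X\setminus Y_X)$, so that any cycle of the resulting graph meets each $X$ inside $Y_X$; and declare all edges of $T_0:=\bigcup_{X\in\mathcal{X}}Y_X$ to be \emph{mandatory}, so that a cycle through $T_0$ meets $X$ exactly in $Y_X$. It remains to encode the constraint $\mathcal{Z}$, and here I would use Lemma~\ref{lem:gr-cogr}(i). If $\mathcal{Z}=\emptyset$, set $T':=T_0$ and let $G'$ be $G$ with the above edges removed. If $\mathcal{Z}=(Z,t)$, then $Z$ induces a triangle in $G$ (a circuit of size $3$ of a graphic matroid is a triangle on three distinct vertices, by minimality); let $v$ be the unique vertex of $Z$ not incident with $t$. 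By Lemma~\ref{lem:gr-cogr}(i), a cycle $C$ with $t\in C$ satisfies ``$C\bigtriangleup Z$ is a circuit of $M$ and $Z\cap C=\{t\}$'' if and only if $C$ is a cycle of $G-v$. Hence we additionally delete the vertex $v$ and add $t$ to the mandatory set; if some edge of $T_0$ is incident with $v$, this branch is infeasible and is discarded. In all cases $|T'|\le 2|\mathcal{X}|+1$, since $|Y_X|\le 2$ for each $X$.

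For each of the $\le 6^{|\mathcal{X}|}$ branches we are left with a single instance of the cycle-through-edges problem on $G'$ with at most $2|\mathcal{X}|+1$ terminal edges, which we solve in time $f(|\mathcal{X}|)\cdot n^{\Oh(1)}$ by~\cite{BjorklundHT12,Kawarabayashi08}. The original instance is a \yesinstance{} if and only if some branch succeeds. Indeed, if $C$ is a feasible extension, then putting $Y_X:=C\cap X\in P(X)$ gives a branch in which $C$ uses no deleted edge and, when $\mathcal{Z}=(Z,t)$, avoids $v$ (by Lemma~\ref{lem:gr-cogr}(i) applied to the feasible $C$), and contains $T'$, so that branch succeeds; conversely, a cycle $C$ of $G'$ through $T'$ returned by a branch satisfies $C\cap X=Y_X\in P(X)$ for all $X$ and, by Lemma~\ref{lem:gr-cogr}(i) again, the constraint imposed by $\mathcal{Z}$ (since $C$ passes through $t$, avoids $v$, hence $Z\cap C=\{t\}$ and $C\bigtriangleup Z$ is a circuit), so $C$ is a feasible extension. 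The total running time is $6^{|\mathcal{X}|}\cdot f(|\mathcal{X}|)\cdot n^{\Oh(1)}$, which is \classFPT{} in $|\mathcal{X}|$.

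There is no deep obstacle here: the proof is essentially a branching-plus-reduction argument, and the only point requiring care is the translation of the $\mathcal{Z}$-constraint into the purely graph-theoretic statement ``avoid the vertex of the triangle $Z$ opposite to $t$ and pass through $t$'', which is exactly what Lemma~\ref{lem:gr-cogr}(i) provides, together with the routine preprocessing needed to handle loops and parallel edges (for instance, a loop can occur among the mandatory edges only if the sought circuit equals that loop, which is checked directly) before the black-box cycle-through-edges algorithm is invoked.
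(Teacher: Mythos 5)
Your proposal is correct and follows essentially the same route as the paper: guess the trace $Y_X\in P(X)$ on each $X\in\mathcal{X}$ (at most $6^{|\mathcal{X}|}$ branches), delete the unchosen edges and make the chosen ones mandatory, encode $\mathcal{Z}=(Z,t)$ via Lemma~\ref{lem:gr-cogr}(i) by forcing $t$ and avoiding the triangle vertex not incident with $t$, and then invoke the cycle-through-specified-edges algorithms of~\cite{BjorklundHT12,Kawarabayashi08}. The only differences are cosmetic (your explicit loop/parallel-edge preprocessing and deleting the vertex $v$ rather than saying ``avoid $v$''), so nothing further is needed.
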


\begin{proof}
Let $(M,\mathcal{X},P,\mathcal{Z})$ be an instance of \escir{}, where $M$ is a graphic matroid.  We find $G$ such that $M$ is isomorphic to $M(G)$ using the results of Seymour~\cite{Seymour81} and assume that $M=M(G)$. 

First, we show how to solve the problem for the case $\mathcal{Z}=\emptyset$ and then explain how to modify the algorithm if $\mathcal{Z}\neq \emptyset$.
Because the sets of $\mathcal{X}$ have sizes 2 or 3, $|P(X)|\leq 6$ for $X\in \mathcal{X}$ and
there is at most $6^{|\mathcal{X}|}$ possibilities to guess sets $S(X)\in P(X)$ of representatives of the elements $X\in\mathcal{X}$ in $C$. 
For each guess, let $T=\cup_{X\in \mathcal{X}}S(X)$.  Consider the graph $G'$  obtained from $G$ by the deletion of the elements of 
$(\cup_{X\in\mathcal{X}}X)\setminus T$. 
Clearly, $(M,\mathcal{X},P,\mathcal{Z})$ has a solution corresponding to the considered guess of sets $S(X)$ if and only if $G'$ has a cycle that goes through all the edges of $T$. To find such a cycle, we can apply the results of~\cite{BjorklundHT12} or~\cite{Kawarabayashi08}. 
If $\mathcal{Z}=(Z,t)$, we use Lemma~\ref{lem:gr-cogr} i). We additionally find a vertex $v$ of the cycle of $G$ induced by $Z$ that is not incident to the specified element $t$.
By Lemma~\ref{lem:gr-cogr} i),  $(M,\mathcal{X},P,\mathcal{Z})$ has a solution corresponding to the considered guess of sets $S(X)$ if and only if $G'$ has a cycle that goes through all the edges of $T\cup\{t\}$ and avoids $v$. To find such a cycle, we again can apply the results of~\cite{BjorklundHT12} or~\cite{Kawarabayashi08}. 

Since we consider at most  $6^{|\mathcal{X}|}$ guesses of  sets $S(X)\in P(X)$ and, for each guess, $|T|\leq 2|\mathcal{X}|$, we conclude that the algorithm runs in \classFPT{} time.
\end{proof}

For cographic matroids, we obtain the following lemma using the results of Robertson and Seymour~\cite{RobertsonS-GMXIII}.

\begin{lemma}\label{lem:cographic-scir}
\escir{} is \classFPT{} on cographic matroids when parameterized by $|\mathcal{X}|$.
\end{lemma}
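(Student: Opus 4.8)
The plan is to mimic the structure of the proof of Lemma~\ref{lem:graphic-scir}, but where the graphic case reduces to "find a cycle through a prescribed edge set" (solved by Robertson--Seymour disjoint paths via \cite{BjorklundHT12} or \cite{Kawarabayashi08}), here we instead reduce to the dual statement: finding a \emph{minimal cut-set} (i.e.\ a bond) through a prescribed edge set, which by the duality $\mathcal{C}(M^*(G))=\{\text{minimal cut-sets of }G\}$ is exactly what a circuit of a cographic matroid is. First I would invoke Seymour's algorithm~\cite{Seymour81} to obtain, in polynomial time, a connected graph $G$ with $M\cong M^*(G)$, and assume $M=M^*(G)$. As in the graphic case, since every $X\in\mathcal{X}$ has size at most $3$ we have $|P(X)|\le 6$, so there are at most $6^{|\mathcal{X}|}$ ways to guess the representative sets $S(X)\in P(X)$; fix one such guess and let $T=\bigcup_{X\in\mathcal X}S(X)$. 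Contract the edges of $\bigl(\bigcup_{X\in\mathcal X}X\bigr)\setminus T$ in $G$ (contraction in $G$ corresponds to deletion in $M^*(G)$), and additionally handle $\mathcal{Z}=(Z,t)$ exactly as in Lemma~\ref{lem:cographic-wmsc}, using Lemma~\ref{lem:gr-cogr}~ii): add $t$ to the required edge set and require that the two endpoints of the other two edges of $Z$ lie on the \emph{same} side of the cut, which can be encoded by identifying those endpoints (contracting a path/edge between them) before searching. Thus each guess becomes: decide whether the resulting graph $G'$ has a minimal cut-set containing a fixed set $F$ of at most $2|\mathcal X|+1$ edges.

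The core subroutine is therefore: given a connected graph and a set $F$ of $O(|\mathcal X|)$ edges, decide whether some bond of the graph contains all of $F$. I would solve this by a bounded-size case analysis / branching on how $F$ is split by the bipartition $(A,B)$ of the sought cut: each edge of $F$ has one endpoint in $A$ and one in $B$, so guessing, for each endpoint of each edge of $F$, which side it lies on gives at most $2^{2|F|}$ patterns; for a fixed pattern we must decide whether there is a partition $(A,B)$ of $V(G')$ respecting it such that $G'[A]$ and $G'[B]$ are both connected (equivalently $E(A,B)$ is a minimal cut-set containing $F$). This last problem — does a connected graph admit a connected bipartition with prescribed vertices forced to prescribed sides, with \emph{no bound on the cut size} — is exactly the regime the paper is built to handle; I would appeal to Theorem~\ref{thm:graph-cut} (or rather a decision variant of it) after observing that we only need existence, not minimality of weight, so we may run \emc{} with unit weights and with $k$ taken large enough, or more simply note that the "$2$-disjoint-connected-subgraphs"-type question with a bounded number of forced terminals is \classFPT{} by the Robertson--Seymour machinery underlying \cite{RobertsonS-GMXIII} applied to the torso of the forced terminal set.

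The main obstacle is precisely this cographic subroutine: unlike the graphic case, there is no off-the-shelf "$2^k n^{O(1)}$ shortest-bond-through-$k$-edges" algorithm I can cite, so the crux is to show that deciding whether a graph has \emph{any} bond (of unbounded size) through $F$ is \classFPT{} in $|F|$. I expect to resolve it either (a) by reducing to \emc{} of Theorem~\ref{thm:graph-cut}: contract each forced side of $F$ down to border terminals, place the two "sides" into $R_1,R_2$, and ask \emc{} with a trivially large budget — but one must be careful that \emc{} minimizes weight while we only want feasibility, which is fine since any feasible $(R_1,R_2)$-terminal cut-set is a witness; or (b) by the disjoint-paths route, building an auxiliary graph whose relevant separators correspond to bonds of $G'$ through $F$ and invoking fixed-terminal Robertson--Seymour. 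I would present route (a) since Theorem~\ref{thm:graph-cut} is already available in the excerpt: after the $6^{|\mathcal X|}$ guesses of $S(X)$ and the $2^{O(|\mathcal X|)}$ guesses of the side-pattern of $F\cup\{t\}$, contracting each side into a single vertex placed in $R_1$ resp.\ $R_2$, setting $T=F\cup\{t\}$ with unit weights and $k=|E(G')|$, the instance of \emc{} has a solution if and only if the original guess yields a feasible extension, by the definition of an $(R_1,R_2)$-terminal cut-set together with Lemma~\ref{lem:gr-cogr}~ii). Since the total number of guesses is \classFPT{} in $|\mathcal X|$ and each \emc{} call runs in time $2^{O(k^2\log k)} n^{O(1)}$ by Theorem~\ref{thm:graph-cut} with $k\le |E(G')|$ polynomial in $n$ (so this is polynomial in $n$ for \emph{fixed} pattern once $T$ is small — here I would instead keep $k$ genuinely bounded by reducing parallel classes, or accept an $n^{O(1)}$ per call), the overall running time is \classFPT{} parameterized by $|\mathcal X|$, which gives the lemma.
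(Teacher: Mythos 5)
Your skeleton coincides with the paper's: realize $M=M^*(G)$ via Seymour's algorithm, make the $6^{|\mathcal{X}|}$ guesses of the sets $S(X)\in P(X)$, contract the unused elements of the triples, guess on which side of the sought cut each end-vertex of the prescribed edges lies, and reduce to the question ``is there a partition $(A,B)$ of $V(G')$ with the forced vertices on their forced sides such that $G'[A]$ and $G'[B]$ are connected''. Moreover, the route you mention only in passing (your route (b)) is exactly how the paper finishes: with $|T_A|+|T_B|=O(|\mathcal{X}|)$ forced terminals, Robertson--Seymour's machinery finds disjoint connected sets $A'\supseteq T_A$, $B'\supseteq T_B$ in \classFPT{} time, and these are then greedily extended to a full partition. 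The route you actually commit to, however, has a genuine gap: \emc{} is parameterized by $k=w(C)-w(T)$ and Theorem~\ref{thm:graph-cut} runs in time $2^{\Oh(k^2\log k)}\cdot n^{\Oh(1)}$, so invoking it with ``a trivially large budget'' $k=|E(G')|$ costs time exponential in $n$, not $n^{\Oh(1)}$. Nor can the budget be bounded by any function of $|\mathcal{X}|$: in \escir{} the sought circuit has no size bound, and a bond through the guessed terminal edges may be forced to contain arbitrarily many non-terminal edges (in $K_n$ every bond has at least $n-1$ edges, so any bond through one prescribed edge contains at least $n-2$ non-terminals, and no parallel-class reduction removes them). ``Accepting an $n^{\Oh(1)}$ per call'' is not something Theorem~\ref{thm:graph-cut} provides; this unboundedness of the cut size is precisely why the paper drops the \emc{}-style cut machinery here and uses the fixed-terminal disjoint connected subgraphs argument instead.

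A secondary issue is your encoding of the constraint $\mathcal{Z}=(Z,t)$. Forcing the end-vertices of $Z\setminus\{t\}$ onto one side through the terminal sets (as in Lemma~\ref{lem:cographic-wmsc}, justified by Lemma~\ref{lem:gr-cogr}~ii)) is sound, but the alternative you sketch --- identifying/contracting those endpoints before searching --- is not faithful: after identification a side may induce a connected quotient while being disconnected in $G$, so minimal cut-sets of the identified graph do not correspond to the bonds of $G$ you need; and merely keeping the two edges of $Z\setminus\{t\}$ uncut is insufficient, since Lemma~\ref{lem:gr-cogr}~ii) additionally requires them to lie in the \emph{same} side. If you present route (b) with the side constraints imposed via the forced terminal sets, your argument becomes the paper's proof.
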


\begin{proof}
Let $(M,\mathcal{X},P,\mathcal{Z})$ be an instance of \escir{}, where $M$ is a cographic matroid.  Using the results of Seymour~\cite{Seymour81}, we can in polynomial time find a graph $G$ such that $M$ is isomorphic to the bond matroid $M^*(G)$. We assume that $M=M^*(G)$. We can assume without loss of generality that $G$ is connected. 
Recall that to solve \escir{}, we have to check whether there is a cut $(A,B)$ of $G$ such that $G[A]$ and $G[B]$ are connected and $C=E(A,B)$ satisfies the requirements of \escir. 

Because the sets of $\mathcal{X}$ have sizes 2 or 3, $|P(X)|\leq 6$ for $X\in \mathcal{X}$ and
there is at most $6^{|\mathcal{X}|}$ possibilities to guess sets $S(X)\in P(X)$ of representatives of the elements $X\in\mathcal{X}$ in $C$. 
For each guess, let $T=\cup_{X\in \mathcal{X}}S(X)$.  If $\mathcal{Z}=(Z,t)$, then we additionally include $t$ in $T$.
Consider the graph $G'$  obtained from $G$ by the contraction of the elements of $(\cup_{X\in\mathcal{X}}X)\setminus T$. 

If there is $e\in T$ that is a loop of $G'$,  then $(M,\mathcal{X},Z,\mathcal{P})$ is a no-instance for the guess, since there is no minimal cut containing $e$. Assume that the edges of $T$ are not loops. 
We guess the placement of the end-vertices of the edges of $T$ in $A$ and $B$ considering at most $2^{|T|}$ possibilities. Let $T_A$ be the set of end-vertices guessed to be in $A$,  and let $T_B$ be the set of end-vertices in $B$. If $\mathcal{Z}=(Z,t)$, then we additionally put the end-vertices of the edges of $Z\setminus\{t\}$ in $T_B$ using Lemma~\ref{lem:gr-cogr} ii).
Now we have to check whether there is a partition $(A,B)$ of $V(G)$ such that $T_A\subseteq A$, $T_B\subseteq B$, and 
$G[A]$ and $G[B]$ are connected. By the celebrated results of Robertson and Seymour about disjoint paths, one can find in \classFPT{}-time with the parameter $|T_A|+|T_B|$ disjoint sets of vertices $A'$ and $B'$ containing $T_A$ and $T_B$ respectively such that $G[A']$ and $G[B']$ are connected  if such sets exist. If there are no such sets $A'$ and $B'$, we conclude that there is no partition $(A,B)$ with the required properties for the considered guess of $T_A$ and $T_B$. Otherwise, we extend $A'$ and $B'$ to the partition of $V(G)$ by the exhaustive applying the following rule: if there is $v\in V(G)\setminus(A'\cup B')$ that is adjacent to a vertex of $A'$ or $B'$, then put $v$ in $A'$ or $B'$ respectively. Clearly, we always obtain a partition of $V(G)$, because $G$ is connected. 

Since we consider at most  $6^{|\mathcal{X}|}$ guesses of  sets $S(X)\in P(X)$ and, for each guess, $|T|\leq 2|\mathcal{X}|$ and $|T_A|+|T_B|\leq 4|\mathcal{T}|+4$, we conclude that the algorithm runs in \classFPT{} time.
\end{proof}

\subsection{Proof of Theorem~\ref{thm:scir}}\label{sec:scir-reg} 
Now we are ready to give an algorithm for \scir{} on regular matroids. Let $(M,T)$ be an instance of \scir{}, where $M$ is regular. We consider it to be an instance  $(M,\mathcal{X},P,\mathcal{Z})$ of \escir, where $\mathcal{X}=\{\{t\}\mid t\in T\}$, 
$P(X)=X$ for $X\in \mathcal{X}$, and $\mathcal{Z}=\emptyset$.
If $M$ can be obtained from $R_{10}$ by the addition of parallel elements or 
is graphic or cographic, we solve the problem directly using Lemmas~\ref{lem:r10-scir}--\ref{lem:cographic-scir}. Assume that it is not the case. 
Using Theorem~\ref{thm:decomp-good},  we find a conflict tree $\mathcal{T}$. 
Recall that the set of nodes of $\mathcal{T}$ is the collection of basic matroids $\mathcal{M}$ and the edges correspond to extended $1$-, $2-$ and 3-sums.  The key observation is that $M$ can be constructed from $\mathcal{M}$ by performing the sums corresponding to the edges of $\mathcal{T}$ in an arbitrary order.  We select an arbitrarily node $r$ of $\mathcal{T}$ containing an element of $T$ as a root.
Our algorithm is based on performing bottom-up traversal of the tree $\mathcal{T}$. We exhaustively apply reduction rules that remove leaves of $\mathcal{T}$ until we obtain a basic case for which we can apply Lemmas~\ref{lem:r10-scir}--\ref{lem:cographic-scir}.

We say that an instance $(M,\mathcal{X},P,\mathcal{Z})$ of \escir{} is \emph{consistent (with respect to $\mathcal{T}$)} if $\mathcal{Z}=\emptyset$ and for any $X\in\mathcal{X}$, $X\in E(M)$ for some $M\in\mathcal{M}$. Clearly, the instance obtained from the original input instance $(M,T)$ of \scir{} is consistent. Our reduction rules keep this property.

Let $M_\ell\in\mathcal{M}$ be a matroid that is a leaf of $\mathcal{T}$. Denote by $M_s$ its adjacent sub-leaf.  
We construct reduction rules depending on whether $M_\ell$ is 1, 3 or 3 leaf.

Throughout this section, we say that a reduction rule is \emph{safe} if it either correctly solves the problem or returns an equivalent instance of \escir{} together with corresponding conflict tree of the obtained matroid that is consistent and the value of the parameter does not increase.  

\begin{reduction}[{\bf $1$-Leaf reduction rule}]\label{rule:1-leaf-scir}
If $M_\ell$ is a 1-leaf, then do the following.
\begin{itemize}
\item[i)] If there is $X\in\mathcal{X}$ such that $X\in E(M_\ell)$, then stop and return a no-answer,
\item[ii)] Otherwise, delete $M_\ell$ from $\mathcal{T}$ and denote by $T'$ the obtained conflict tree. Return the instance  $(M',\mathcal{X},P,\emptyset)$ and solve it using the conflict tree $\mathcal{T}'$, where $M'$ is the matroid defined by $\mathcal{T}'$.
\end{itemize}
\end{reduction}

Since the root matroid contains at least one set of $\mathcal{X}$, Lemma~\ref{lem:circ} i) immediately implies the following lemma.

\begin{lemma}\label{lem:1-leaf-scir}
Reduction Rule~\ref{rule:1-leaf-scir}  is safe and can be implemented to run in time polynomial in $|E(M)|$. 
\end{lemma}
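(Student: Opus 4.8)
The plan is to derive both parts of the statement directly from the $1$-sum description of circuits, Lemma~\ref{lem:circ} i), using two standing facts about the bottom-up process: the instance is \emph{consistent}, so every $X\in\mathcal{X}$ is contained in the ground set of a single basic matroid; and the root $r$, which carries some $X_r\in\mathcal{X}$ with $X_r\subseteq E(r)$, is never a (rooted) leaf because $\mathcal{T}$ has at least two nodes whenever $M$ is not basic, so in particular $r\neq M_\ell$. Writing $M=M'\oplus_1 M_\ell$ we have $E(M)=E(M')\,\sqcup\,E(M_\ell)$ and, by Lemma~\ref{lem:circ} i), $\mathcal{C}(M)=\mathcal{C}(M')\cup\mathcal{C}(M_\ell)$. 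I would first record that $E(r)\cap E(M_\ell)=\emptyset$: indeed $M_\ell$ is a leaf, so $r$ is either equal to its unique neighbour $M_s$ --- in which case the $1$-sum forces $E(M_\ell)\cap E(M_s)=\emptyset$ --- or is non-adjacent to $M_\ell$, in which case Theorem~\ref{thm:decomp-good} i) gives disjointness; hence $X_r\subseteq E(M')$. Similarly, in case~ii) the hypothesis that no $X\in\mathcal{X}$ is a subset of $E(M_\ell)$, together with consistency, shows every $X\in\mathcal{X}$ lies in a basic matroid of $\mathcal{T}'$, hence $X\subseteq E(M')$.

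For case~i) I would argue that no feasible circuit $C$ can exist: $C\cap X\neq\emptyset$ forces $C$ to meet $E(M_\ell)$, while $C\cap X_r\neq\emptyset$ forces $C$ to meet $E(M')$, which is impossible since by Lemma~\ref{lem:circ} i) the circuit $C$ is contained entirely in $E(M')$ or entirely in $E(M_\ell)$ and these are disjoint; so returning a no-answer is correct. For case~ii) I would show that $(M,\mathcal{X},P,\emptyset)$ and $(M',\mathcal{X},P,\emptyset)$ have exactly the same feasible solutions. In one direction, a feasible circuit of $M'$ is, by Lemma~\ref{lem:circ} i), a circuit of $M$, and since all $\mathcal{X}$-sets lie in $E(M')$ the membership constraints $C\cap X\in P(X)$ are untouched, so it is feasible for $M$. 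In the other direction, a feasible circuit $C$ of $M$ lies in $\mathcal{C}(M')$ or $\mathcal{C}(M_\ell)$; the second option is ruled out because $C\subseteq E(M_\ell)$ would miss $X_r$; and the first option gives a feasible circuit of the new instance. I would then note that the returned instance is consistent ($\mathcal{Z}=\emptyset$ and each $X\in\mathcal{X}$ is still inside a basic matroid of $\mathcal{T}'$) and has the same parameter $|\mathcal{X}|$, so the rule is safe.

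For the running time, the rule only tests the at most $|\mathcal{X}|$ inclusions $X\subseteq E(M_\ell)$ and then deletes the leaf $M_\ell$ from $\mathcal{T}$ and the elements $E(M_\ell)$ from the ground set, all of which is polynomial in $|E(M)|$. I do not expect a genuine obstacle here: the whole content is the bookkeeping check that the constraints in $\mathcal{X}$ and the root constraint $X_r$ all survive into $M'$, which is precisely where consistency and the disjointness of a $1$-sum are used, and everything else is immediate from Lemma~\ref{lem:circ} i).
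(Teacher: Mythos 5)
Your proposal is correct and follows essentially the same route as the paper: the paper's proof is exactly the observation that the root matroid retains a set of $\mathcal{X}$, so by Lemma~\ref{lem:circ}~i) (circuits of a $1$-sum lie wholly in one summand) the no-answer in case~i) is forced and in case~ii) the feasible circuits of $M$ and $M'$ coincide; you have merely spelled out the disjointness and consistency bookkeeping that the paper leaves implicit.
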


\begin{reduction}[{\bf $2$-Leaf reduction rule}]\label{rule:2-leaf-scir}
If $M_\ell$ is a 2-leaf, then let $\{e\}=E(M_\ell)\cap E(M_s)$ and do the following. 
\begin{itemize}
\item[i)] If there is no $X\in\mathcal{X}$ such that $X\in E(M_\ell)$, then check whether there is a circuit of $M_\ell$ containing $e$. If there is no such a circuit, then delete $e$ from $M_s$. Delete $M_\ell$ from $\mathcal{T}$ and denote by $T'$ the obtained conflict tree. Return the instance  $(M',\mathcal{X},P,\emptyset)$ and solve it using the conflict tree $\mathcal{T}'$, where $M'$ is the matroid defined by $\mathcal{T}'$. 
\item[ii)] Otherwise, if there is  $X\in\mathcal{X}$ such that $X\in E(M_\ell)$, consider $\mathcal{X}_\ell=\{X\in\mathcal{X}\mid X\subseteq E(M_\ell)\}\cup\{\{e\}\}$. Set $P_\ell(X)=P(X)$ for $X\in\mathcal{X}_\ell$ such that $X\neq\{e\}$, and set $P_\ell(\{e\})=\{e\}$.  Solve \escir{} for $(M_\ell,\mathcal{X}_\ell,P_\ell,\emptyset)$. 
If $(M_\ell,\mathcal{X}_\ell,P_\ell,\emptyset)$ is a no-instance, then stop and return a no-answer. Otherwise, do the following. 
Set $\mathcal{X}'=\{X\in\mathcal{X}\mid X\not\subseteq E(M_\ell)\}\cup\{\{e\}\}$. Set $P'(X)=P(X)$ for $X\in\mathcal{X}'$ such that $X\neq\{e\}$, and set $P'(\{e\})=\{e\}$.  Delete $M_\ell$ from $\mathcal{T}$ and denote the obtained conflict tree by $\mathcal{T}'$. Let $M'$ be the matroid defined by $\mathcal{T}'$.
Return the instance  $(M',\mathcal{X}',P',\emptyset)$ and solve it using the conflict tree $\mathcal{T}'$.
\end{itemize}
\end{reduction}

\begin{lemma}\label{lem:2-leaf-scir}
Reduction Rule~\ref{rule:2-leaf-scir}  is safe and can be implemented to run in time $f(\mathcal{X})\cdot  n^{\cO(1)}$ for some function $f$ of $\mathcal{X}$ only. 
\end{lemma}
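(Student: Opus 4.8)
The plan is to analyze the circuit structure of the $2$-sum $M=M'\oplus_2 M_\ell$ via Lemma~\ref{lem:circ}~ii) (where $M'$ is the matroid defined by $\mathcal{T}'$, so $\{e\}=E(M')\cap E(M_\ell)$), and to treat the two branches of the rule separately. Two preliminary facts will be used throughout. First, by consistency each $X\in\mathcal{X}$ lies inside the ground set of a single basic matroid, so either $X\subseteq E(M_\ell)$ or $X\subseteq E(M')$; moreover $e\notin\bigcup\mathcal{X}$, since every singleton terminal introduced by an earlier leaf rule is the sum-element of a conflict-tree edge not incident to $M_\ell$ (here one uses properties~i)--ii) of Theorem~\ref{thm:decomp-good}: an element lying in two basic matroids forces them to be adjacent, and the leaf $M_\ell$ has a unique neighbour). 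Second, there is a global invariant that the root $r$ always carries a terminal $X_r\in\mathcal{X}$; since $r\neq M_\ell$ (a leaf) and $X_r\neq\{e\}$, we get $X_r\subseteq E(M')\setminus\{e\}$. This invariant, maintained by the rule, is exactly what rules out the ``$1$-sum-like'' degeneracies of $2$-sums.

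For Case i), no $X\in\mathcal{X}$ is contained in $E(M_\ell)$, so every terminal set lies in $E(M')\setminus\{e\}$. If $C$ is a feasible circuit of $M$ then $C\not\subseteq E(M_\ell)$ (otherwise $C\cap X=\emptyset\notin P(X)$ for any $X\in\mathcal{X}$, and $\mathcal{X}\neq\emptyset$ by the invariant), so by Lemma~\ref{lem:circ}~ii) either $C\in\mathcal{C}(M')$ with $e\notin C$, or $C=C_1\bigtriangleup C_2$ with $C_1\in\mathcal{C}(M')$, $C_2\in\mathcal{C}(M_\ell)$ and $e\in C_1\cap C_2$; in both cases the $M'$-part $C_1$ (with $C_1=C$ in the first sub-case) is feasible for the new instance, and it contains $e$ only in the second sub-case. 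I would split on whether $M_\ell$ has a circuit through $e$ (equivalently, whether $e$ is a coloop of $M_\ell$): if not, the second sub-case cannot occur, hence feasible circuits of $M$ are exactly feasible circuits of $M'$ avoiding $e$, i.e.\ of $M'-e$; if it does, we keep $e$, and conversely, given a feasible circuit $C_1$ of $M'$ with $e\in C_1$, recombining with a fixed circuit $C_2^0$ of $M_\ell$ through $e$ gives the feasible circuit $C_1\bigtriangleup C_2^0$ of $M$ (since $C_2^0\subseteq E(M_\ell)$ is disjoint from every $X$). Deleting $e$ from $M_s$ and re-summing yields exactly $M'-e$, so the returned instance is equivalent; it is consistent because each surviving $X$ still lies in a basic matroid of $\mathcal{T}'$, and $|\mathcal{X}|$ is unchanged.

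For Case ii), some $X^\star\subseteq E(M_\ell)$, so a feasible circuit $C$ of $M$ meets $E(M_\ell)\setminus\{e\}$ (as $e\notin X^\star$), hence $C\not\subseteq E(M')$; together with the invariant ($X_r\subseteq E(M')\setminus\{e\}$) we also get $C\not\subseteq E(M_\ell)$, so by Lemma~\ref{lem:circ}~ii) $C=C_1\bigtriangleup C_2$ with $C_1\in\mathcal{C}(M')$, $C_2\in\mathcal{C}(M_\ell)$ and $e\in C_1\cap C_2$. Since $e\notin\bigcup\mathcal{X}$, for $X\subseteq E(M_\ell)$ we have $C\cap X=C_2\cap X$ and for $X\subseteq E(M')$ we have $C\cap X=C_1\cap X$; hence $C_2$ is a solution of $(M_\ell,\mathcal{X}_\ell,P_\ell,\emptyset)$ (the constraint $P_\ell(\{e\})=\{e\}$ holds because $e\in C_2$) and $C_1$ is a solution of $(M',\mathcal{X}',P',\emptyset)$. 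Conversely, solutions $C_2$ and $C_1$ of these two subinstances are circuits through $e$, so $C_1\bigtriangleup C_2\in\mathcal{C}(M)$ by Lemma~\ref{lem:circ}~ii) and the same bookkeeping shows it is feasible; thus the rule correctly reports no-answer exactly when $(M_\ell,\mathcal{X}_\ell,P_\ell,\emptyset)$ is a no-instance, and otherwise returns an equivalent instance. Consistency of $(M',\mathcal{X}',P',\emptyset)$ with $\mathcal{T}'$ is immediate (each $X\in\mathcal{X}'$ either survives in a basic matroid of $\mathcal{T}'$ or equals $\{e\}\subseteq E(M_s)$), and $|\mathcal{X}'|\le|\mathcal{X}|$ since at least $X^\star$ is removed while only $\{e\}$ is added. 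For the running time: the coloop test in Case i) is a rank computation and is polynomial, and Case ii) calls the basic-matroid algorithms of Lemmas~\ref{lem:r10-scir}--\ref{lem:cographic-scir} on $M_\ell$ with $|\mathcal{X}_\ell|\le|\mathcal{X}|+1$, costing $f(\mathcal{X})\cdot n^{\cO(1)}$; everything else is polynomial. The main obstacle is the bookkeeping that excludes circuits living entirely inside $M_\ell$ (the case where all constraints happen to sit in $E(M_\ell)$): this is precisely where the root-carries-a-terminal invariant and the fact that $e$ is never itself a terminal are needed, and these rest on properties~i)--ii) of the conflict tree of Theorem~\ref{thm:decomp-good}.
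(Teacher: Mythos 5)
Your proposal is correct and follows essentially the same route as the paper's proof: decompose the circuits of the $2$-sum $M=M'\oplus_2 M_\ell$ via Lemma~\ref{lem:circ}~ii), transfer solutions back and forth between the original instance, the leaf instance and the reduced instance, and solve the leaf instance with Lemmas~\ref{lem:r10-scir}--\ref{lem:cographic-scir}. The only difference is that you make explicit (via the root-carries-a-terminal invariant and the observation that $e$ is never itself a terminal) why a feasible circuit cannot lie entirely on one side of the sum, a point the paper's argument uses implicitly.
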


\begin{proof}
Clearly, if the rule returns a new instance, then it is consistent 
with respect to $\mathcal{T}'$ and the parameter does not increase.

We show that the rule either correctly solves the problem or returns an equivalent instance. Denote by $\hat{M}$ the matroid defined by the conflict tree obtained from $\mathcal{T}$ by the deletion of the node $M_\ell$. Clearly, $M=\hat{M}\oplus_2 M_\ell$.

Suppose that $(M,\mathcal{X},P,\mathcal{Z})$ is a consistent yes-instance. We prove that the rule returns a yes-instance. Denote by $C$ a circuit of $M$ that is a solution for $(M,\mathcal{X},P,\mathcal{Z})$.
We consider two cases corresponding to the cases i) and ii) of the rule.

\medskip
\noindent
{\bf Case 1.}  There is no $X\in\mathcal{X}$ such that $X\in E(M_\ell)$. If $C\subseteq E(\hat{M})$, then by Lemma~\ref{lem:circ} ii), $C$ is a circuit of $M'$ constructed by the rule that is either $\hat{M}$ or the matroid obtained by from $\hat{M}$ by the deletion of $e$, because $e\notin C$. Suppose that $C\cap E(M_\ell)\neq\emptyset$. Then $C=C_1\bigtriangleup C_2$, where $C_1\in \mathcal{C}(\hat{M})$, $C_2\in\mathcal{C}(M_2)$ and $e\in C_1\cap C_2$ by Lemma~\ref{lem:circ} ii). Because $C_2$ is a circuit of $M_2$ containing $e$, we do not delete $e$ from $M_s$ and, therefore, $C_1$ is a circuit of $M'=\hat{M}$ constructed by the rule in this case. It remains to observe that $C_1$ is a solution for $(M',\mathcal{X},P,\emptyset)$. Hence, $(M',\mathcal{X},P,\emptyset)$ is a yes-instance.

\medskip
\noindent
{\bf Case 2.}  There is  $X\in\mathcal{X}$ such that $X\in E(M_\ell)$. Then by Lemma~\ref{lem:circ} ii), $C=C_1\bigtriangleup C_2$, where $C_1\in \mathcal{C}(\hat{M})$, $C_2\in\mathcal{C}(M_2)$ and $e\in C_1\cap C_2$. We have that $C_2$ is a solution for $(M_\ell,\mathcal{X}_\ell,P_\ell,\emptyset)$ and the algorithm does not stop. Also we have that $C_1$ is a solution for $(M',\mathcal{X}',P',\emptyset)$, i.e., the rule returns a yes-instance.

\medskip
Suppose now that the instance constructed by the rule is a yes-instance with a solution $C'$. We show that the original instance $(M,\mathcal{X},P,\mathcal{Z})$ is a  yes-instance. We again consider two cases.

\medskip
\noindent
{\bf Case 1.} The new instance is constructed by Rule~\ref{rule:2-leaf-scir} i). If $e\notin C'$, then $C'$ is a circuit of $M$ by Lemma~\ref{lem:circ} ii) and, therefore, $C'$ is a solution for  $(M,\mathcal{X},P,\mathcal{Z})$, that is,  $(M,\mathcal{X},P,\mathcal{Z})$ is a yes-instance. Assume that $e\in C'$. In this case, $e$ was not deleted by the rule from $M_s$. Hence, there is a circuit $C''$ of $M_\ell$ containing $e$. By Lemma~\ref{lem:circ} ii), $C=C'\bigtriangleup C''$ is a circuit of $M$. We have that $C$ is a solution for $(M,\mathcal{X},P,\mathcal{Z})$ and it is a yes-instance.

\medskip
\noindent
{\bf Case 2.} The new instance is constructed by Rule~\ref{rule:2-leaf-scir} ii). In this case, $(M_\ell,\mathcal{X}_\ell,P_\ell,\emptyset)$ is a yes-instance and there is a solution $C''$ for it. Notice that $e\in C'\cap C''$. We have that $C=C'\bigtriangleup C''$ is a circuit of $M$ by Lemma~\ref{lem:circ} ii). We have that $C$ is a solution for $(M,\mathcal{X},P,\mathcal{Z})$ and, therefore, 
 $(M,\mathcal{X},P,\mathcal{Z})$ is a yes-instance.

\medskip
We proved that the rule is safe. To evaluate the running time, notice first that we can check existence of a circuit of $M_\ell$ containing $e$ in Rule~\ref{rule:2-leaf-scir} i) in polynomial time either directly or using the straightforward observation that we have an instance of \scir{} with $T=\{e\}$ and can apply Lemmas~\ref{lem:r10-scir}--\ref{lem:cographic-scir} depending on the type of $M_\ell$. The problem for $(M_\ell,\mathcal{X}_\ell,P_\ell,\emptyset)$ in Rule~\ref{rule:2-leaf-scir} ii) can be solved in \classFPT{} time by Lemmas~\ref{lem:r10-scir}--\ref{lem:cographic-scir} depending on the type of $M_\ell$, because $|\mathcal{X}_\ell|\leq|\mathcal{X}|$. 
\end{proof}

\begin{reduction}[{\bf $3$-Leaf reduction rule}]\label{rule:3-leaf-scir}
If $M_\ell$ is a 3-leaf, then let $Z=E(M_\ell)\cap E(M_s)=\{e_1,e_2,e_3\}$ and do the following. 
\begin{itemize}
\item[i)] If there is no $X\in\mathcal{X}$ such that $X\in E(M_\ell)$, then for each $i\in\{1,2,3\}$, solve \escir{} for the instance $(M_\ell,\emptyset,\emptyset,(Z,e_i))$, and if $(M_\ell,\emptyset,\emptyset,(Z,e_i))$ is a no-instance, then delete $e_i$ from $M_s$. Delete $M_\ell$ from $\mathcal{T}$ and denote by $T'$ the obtained conflict tree. Return the instance  $(M',\mathcal{X},P,\emptyset)$ and solve it using the conflict tree $\mathcal{T}'$, where $M'$ is the matroid defined by $\mathcal{T}'$. 
\item[ii)] Otherwise, if there is  $X\in\mathcal{X}$ such that $X\in E(M_\ell)$, set $\mathcal{X}_\ell=\{X\in\mathcal{X}\mid X\subseteq E(M_\ell)\}$ and  $P_\ell(X)=P(X)$ for $X\in\mathcal{X}_\ell$.
We construct the set $R$ of subsets of $Z$ as follows. Initially, $R=\emptyset$.
\begin{itemize}
\item For $i\in \{1,2,3\}$, solve \escir{} for the instance $(M_\ell,\mathcal{X}_\ell,P_\ell,(Z,e_i))$, and if we get a yes-instance, then add $\{e_i\}$ in $R$.
\item For $i\in \{1,2,3\}$, solve \escir{} for the instance $(M_\ell,\mathcal{X}_\ell',P_\ell^{(i)},\emptyset)$, where $\mathcal{X}_\ell'=\mathcal{X}_\ell\cup\{Z\}$ and $P_\ell^{(i)}(X)=P_\ell(X)$ for $X\in\mathcal{X}_\ell$ and $L_\ell^{(i)}(Z)=\{e_i\}$. If we  get a yes instance, then add $Z\setminus\{e_i\}$ in $R$.
\end{itemize}
If $R=\emptyset$, then  stop and return a no-answer. Otherwise, do the following. 
Set $\mathcal{X}'=\{X\in\mathcal{X}\mid X\not\subseteq E(M_\ell)\}\cup\{Z\}$. Set $P'(X)=P(X)$ for $X\in\mathcal{X}'$ such that $X\neq Z$, and set $P'(Z)=R$.  Delete $M_\ell$ from $\mathcal{T}$ and denote the obtained conflict tree by $\mathcal{T}'$. Let $M'$ be the matroid defined by $\mathcal{T}'$.
Return the instance  $(M',\mathcal{X}',P',\emptyset)$ and solve it using the conflict tree $\mathcal{T}'$.
\end{itemize}
\end{reduction}

\begin{lemma}\label{lem:3-leaf-scir}
Reduction Rule~\ref{rule:3-leaf-scir}  is safe and
and can be implemented to run in time $f(\mathcal{X})\cdot  n^{\cO(1)}$ for some function $f$ of $\mathcal{X}$ only. 
\end{lemma}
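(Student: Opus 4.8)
The plan is to follow the scheme of the proof of Lemma~\ref{lem:3-leaf-wmsc}, specialised to the weightless setting, so that only the \emph{existence} of a feasible circuit has to be tracked rather than its weight. Write $\hat M$ for the matroid defined by the conflict tree obtained from $\mathcal{T}$ by deleting the leaf $M_\ell$, so that $M=\hat M\oplus_3 M_\ell$ with boundary circuit $Z=\{e_1,e_2,e_3\}=E(\hat M)\cap E(M_\ell)$; recall $Z$ is internal to the sum, i.e.\ $Z\cap E(M)=\emptyset$. First I would dispose of the routine bookkeeping: every instance returned by the rule has $\mathcal{Z}=\emptyset$; each set of the new family $\mathcal{X}'$ lies inside a single basic matroid of $\mathcal{T}'$ (in particular $Z\subseteq E(M_s)$ and $M_s$ stays a node of $\mathcal{T}'$); and $|\mathcal{X}'|\le|\mathcal{X}|$, since in case~i) the family $\mathcal{X}$ is untouched, while in case~ii) at least one member of $\mathcal{X}$ contained in $E(M_\ell)$ is removed and only $Z$ is added. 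Thus the output is consistent with respect to $\mathcal{T}'$ and the parameter does not grow. Finally, since the chosen root of $\mathcal{T}$ is distinct from the leaf $M_\ell$ and always contains a set of $\mathcal{X}$ (cf.\ Lemma~\ref{lem:1-leaf-scir}), and the sets of $\mathcal{X}$ are non-empty, no feasible circuit of the input can be contained in $E(M_\ell)$; and by Lemma~\ref{lem:circ}~iii) a feasible circuit contained in $E(\hat M)$ must avoid $Z$.

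For the forward implication I would take a feasible circuit $C$ of the input and apply Lemma~\ref{lem:circ}~iii). If $C\subseteq E(\hat M)$ (hence $C\cap Z=\emptyset$), then in case~i) $C$ is still a feasible circuit of $M'$ (deleting $e_i$'s from $M_s$ does not touch $C$), while in case~ii) this subcase cannot occur, as $C$ would miss every member of $\mathcal{X}$ contained in $E(M_\ell)$. Otherwise $C=C_1\bigtriangleup C_2$ with $C_1\in\mathcal{C}(\hat M)$, $C_2\in\mathcal{C}(M_\ell)$, $C_1\cap Z=C_2\cap Z=\{e_i\}$ for some $i$, and at least one of $C_1\bigtriangleup Z\in\mathcal{C}(\hat M)$, $C_2\bigtriangleup Z\in\mathcal{C}(M_\ell)$ holds. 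In case~i) it suffices to observe that $e_i$ (or, when $C_2\bigtriangleup Z$ splits via Lemma~\ref{lem:circ-triangle} into two circuits through $e_h,e_j$ with $\{h,j\}=\{1,2,3\}\setminus\{i\}$, then $e_h$ or $e_j$) is not deleted from $M_s$, as the relevant circuit of $M_\ell$ witnesses a yes-instance $(M_\ell,\emptyset,\emptyset,(Z,\cdot))$; then $C_1$ or $C_1\bigtriangleup Z$ is a feasible circuit of $M'$. In case~ii) the set $R$ is, by construction, precisely the collection of those $\{e_i\}$ for which some feasible $C_2\in\mathcal{C}(M_\ell)$ with $C_2\cap Z=\{e_i\}$ has $C_2\bigtriangleup Z\in\mathcal{C}(M_\ell)$, together with those $Z\setminus\{e_i\}$ for which some feasible $C_2\in\mathcal{C}(M_\ell)$ merely has $C_2\cap Z=\{e_i\}$; hence in the first situation ($C_2\bigtriangleup Z$ a circuit, so $\{e_i\}\in R$) $C_1$ is a feasible circuit of $(M',\mathcal{X}',P',\emptyset)$, and in the second ($C_1\bigtriangleup Z$ a circuit, so $Z\setminus\{e_i\}\in R$) the circuit $C_1\bigtriangleup Z$, whose intersection with $Z$ is $Z\setminus\{e_i\}\in R=P'(Z)$, is a feasible circuit of it.

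For the reverse implication I would take a feasible circuit $C'$ of the returned instance and reconstruct one for the input. In case~i): if $C'\cap Z=\emptyset$ then $C'$ is already a feasible circuit of $M$ by Lemma~\ref{lem:circ}~iii); if $C'\cap Z=\{e_i\}$, then $e_i$ survived, so $(M_\ell,\emptyset,\emptyset,(Z,e_i))$ has a circuit $C_2$ with $C_2\cap Z=\{e_i\}$ and $C_2\bigtriangleup Z\in\mathcal{C}(M_\ell)$, and $C'\bigtriangleup C_2\in\mathcal{C}(M)$ by Lemma~\ref{lem:circ}~iii); if $|C'\cap Z|=2$, Lemma~\ref{lem:circ-triangle-a} gives $C'\bigtriangleup Z\in\mathcal{C}(\hat M)$ with $(C'\bigtriangleup Z)\cap Z=\{e_i\}$ for the missing $e_i\notin C'$ and $(C'\bigtriangleup Z)\bigtriangleup Z=C'\in\mathcal{C}(\hat M)$, which reduces to the previous case. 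In case~ii), $C'\cap Z\in R$, and in either possibility membership in $R$ yields a feasible $C_2\in\mathcal{C}(M_\ell)$ such that, by Lemma~\ref{lem:circ}~iii) (using Lemma~\ref{lem:circ-triangle-a} to pass from $C'$ to $C'\bigtriangleup Z$ when $|C'\cap Z|=2$), the appropriate symmetric difference of $C'$ (or $C'\bigtriangleup Z$) with $C_2$ is a circuit of $M$; one then checks feasibility by intersecting with the members of $\mathcal{X}$ lying on either side of the sum. This establishes safety. For the running time, the rule performs a constant number of calls to \escir{} on the basic matroid $M_\ell$, each with parameter at most $|\mathcal{X}|+1$ and hence solvable in time $f(\mathcal{X})\cdot n^{\cO(1)}$ by Lemmas~\ref{lem:r10-scir}--\ref{lem:cographic-scir} according to the type of $M_\ell$, plus polynomial-time bookkeeping. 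The main obstacle will be the case analysis above, in particular pinning down, via Lemma~\ref{lem:circ-triangle}, the behaviour when $C_2\bigtriangleup Z$ degenerates into a disjoint union of two circuits, and verifying that $R$ (and hence $P'(Z)$) records exactly those intersection patterns with $Z$ that can be completed to a feasible circuit inside $M_\ell$.
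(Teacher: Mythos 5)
Your overall route is the same as the paper's: decompose a solution via Lemma~\ref{lem:circ}~iii), use Lemma~\ref{lem:circ-triangle} for the degenerate split and Lemma~\ref{lem:circ-triangle-a} to pass from $C'$ to $C'\bigtriangleup Z$, read off $R$ as the set of realizable intersection patterns with $Z$, and charge the running time to a constant number of \escir{} calls on the basic matroid $M_\ell$. Most of this is fine, but there is one step that does not go through as stated. In the reverse direction of case~i), when $|C'\cap Z|=2$, say $C'\cap Z=\{e_h,e_j\}$ with missing element $e_i$, you claim the situation ``reduces to the previous case'' after replacing $C'$ by $C'\bigtriangleup Z$. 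The previous case, however, obtained its witness circuit of $M_\ell$ through $e_i$ from the fact that $e_i$ \emph{survived} the deletion in $M_s$, i.e.\ that $(M_\ell,\emptyset,\emptyset,(Z,e_i))$ is a yes-instance. When $e_i\notin C'$ you have no such information: $e_i$ may well have been deleted, and then no circuit of $M_\ell$ meeting $Z$ exactly in $e_i$ \emph{with the extra property} is guaranteed, so the reduction to the earlier case fails. What you do know is that $e_h$ and $e_j$ survived, giving circuits $C_h,C_j$ of $M_\ell$ with $C_h\cap Z=\{e_h\}$, $C_j\cap Z=\{e_j\}$ (each with its $\bigtriangleup Z$ a circuit). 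The missing idea, which the paper supplies, is to form the cycle $C_h\bigtriangleup C_j\bigtriangleup Z$ (Observation~\ref{obs:symm}), whose intersection with $Z$ is $\{e_i\}$, and extract from it a circuit $C_i$ of $M_\ell$ with $C_i\cap Z=\{e_i\}$; one then applies Lemma~\ref{lem:circ}~iii) to $(C'\bigtriangleup Z)\bigtriangleup C_i$, where the hypothesis of the lemma is satisfied on the $\hat M$-side because $(C'\bigtriangleup Z)\bigtriangleup Z=C'$ is a circuit (so it does not matter that $C_i\bigtriangleup Z$ need not be one).

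Two smaller points. In the forward direction of case~i), when $C_2\bigtriangleup Z$ is not a circuit you write that ``$e_h$ \emph{or} $e_j$ is not deleted \ldots then $C_1$ or $C_1\bigtriangleup Z$ is a feasible circuit of $M'$''; what you actually need is that \emph{both} $e_h$ and $e_j$ survive, since the replacement circuit is $C_1\bigtriangleup Z$, which contains both (and you cannot fall back on $C_1$, as $e_i$ may be deleted in this subcase). This is easily repaired: Lemma~\ref{lem:circ-triangle} gives two circuits, each witnessing a yes-instance of $(M_\ell,\emptyset,\emptyset,(Z,\cdot))$, so both elements survive. Your handling of case~ii) in both directions, the consistency/parameter bookkeeping, and the running-time accounting match the paper and are correct.
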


\begin{proof}
Clearly, if the rule returns a new instance, then it is consistent with respect to $\mathcal{T}'$ and the parameter does not increase.

We show that the rule either correctly solves the problem or returns an equivalent instance. Denote by $\hat{M}$ the matroid defined by the conflict tree obtained from $\mathcal{T}$ by the deletion of the node $M_\ell$. Clearly, $M=\hat{M}\oplus_2 M_\ell$.

Suppose that $(M,\mathcal{X},P,\mathcal{Z})$ is a consistent yes-instance. We prove that the rule returns a yes-instance. Denote by $C$ a circuit of $M$ that is a solution for $(M,\mathcal{X},P,\mathcal{Z})$.
We consider two cases corresponding to the cases i) and ii) of the rule.

\medskip
\noindent
{\bf Case 1.}  There is no $X\in\mathcal{X}$ such that $X\in E(M_\ell)$. If $C\subseteq E(\hat{M})$, then by Lemma~\ref{lem:circ} iii), $C$ is a circuit of $M'$ constructed by the rule that is obtained by from $\hat{M}$ by the deletion of some elements of $Z$, because $Z\cap C=\emptyset$. 
Suppose that $C\cap E(M_\ell)\neq\emptyset$. Then, by Lemma~\ref{lem:circ} iii),
$C=C_1\bigtriangleup C_2$, where $C_1\in \mathcal{C}(\hat{M})$, $C_2\in\mathcal{C}(M_\ell)$, $C_1\cap Z=C_2\cap Z=\{e_i\}$ for some $i\in\{1,2,3\}$, and 
$C_1\bigtriangleup Z$ is a circuit of $\hat{M}$ or $C_2\bigtriangleup Z$ is a circuit of $M_\ell$.

Suppose that $C_2\bigtriangleup Z$ is a circuit of $M_\ell$. Then $(M_\ell,\emptyset,\emptyset,(Z,e_i))$ is a yes-instance and, therefore, $e_i\in E(M')$. Hence,  $C_1$ is a circuit of $M'$ constructed by the rule. We have that $C_1$ is a solution for $(M',\mathcal{X},P,\emptyset)$. Hence, $(M',\mathcal{X},P,\emptyset)$ is a yes-instance.

Assume now that $C_2\bigtriangleup Z$ is a not circuit of $M_\ell$. By Lemma~\ref{lem:circ-triangle}, $C_2$ is a disjoint union of two circuits $C_2^{(1)}$ and $C_2^{(2)}$ of $M_2$ containing $e_h,e_j\in Z\setminus\{e_i\}$, and  $C_2^{(1)}\bigtriangleup Z$ and $C_2^{(2)}\bigtriangleup Z$ are circuits of $M_\ell$. 
Then $(M_\ell,\emptyset,\emptyset,(Z,e_h))$ and $(M_\ell,\emptyset,\emptyset,(Z,e_h))$
are yes-instances and, therefore, $e_h,e_j\in E(M')$. Consider $C_1'=C_1\bigtriangleup Z$. Because  $C_2\bigtriangleup Z$ is a not circuit of $M_\ell$, $C_1'$ is a circuit of $\hat{M}$. Since $e_h,e_j\in E(M')$, we have that $C_1'$ is a solution for $(M',\mathcal{X},P,\emptyset)$. Hence, $(M',\mathcal{X},P,\emptyset)$ is a yes-instance.

\medskip
\noindent
{\bf Case 2.}  There is  $X\in\mathcal{X}$ such that $X\in E(M_\ell)$. We have that $C=C_1\bigtriangleup C_2$, where $C_1\in \mathcal{C}(\hat{M})$, $C_2\in\mathcal{C}(M_\ell)$, $C_1\cap Z=C_2\cap Z=\{e_i\}$ for some $i\in\{1,2,3\}$, and $C_1\bigtriangleup Z$ is a circuit of $\hat{M}$ or $C_2\bigtriangleup Z$ is a circuit of $M_\ell$. 

Suppose that $C_2\bigtriangleup Z$ is a circuit of $M_\ell$. Then $(M_\ell,\mathcal{X}_\ell,P_\ell,(Z,e_i))$ is a yes-instance and, therefore, $\{e_i\}\in R$. Since $R\neq\emptyset$, the algorithm does not stop.  Also we have that $C_1$ is a solution for $(M',\mathcal{X}',P',\emptyset)$, i.e., the rule returns a yes-instance.

Assume now that $C_2\bigtriangleup Z$ is not a circuit of $M_\ell$. Then $(M_\ell,\mathcal{X}_\ell',P_\ell^{(i)},\emptyset)$ is a yes-instance and, therefore, $Z\setminus \{e_i\}\in R$. Since $R\neq\emptyset$, the algorithm does not stop.  Consider $C_1'=C_1\bigtriangleup Z$. Notice that $C_1'$ is a circuit of $\hat{M}$. 
We obtain that $C_1'$ is a solution for $(M',\mathcal{X}',P',\emptyset)$, i.e., the rule returns a yes-instance.

\medskip
Suppose now that the instance constructed by the rule is a yes-instance with a solution $C'$. We show that the original instance $(M,\mathcal{X},P,\mathcal{Z})$ is a  yes-instance. We again consider two cases.

\medskip
\noindent
{\bf Case 1.} The new instance is constructed by Rule~\ref{rule:3-leaf-scir} i). 

If $C'\cap Z=\emptyset$, then $C'$ is a circuit of $M$ by Lemma~\ref{lem:circ} iii) and, therefore, $C'$ is a solution for  $(M,\mathcal{X},P,\mathcal{Z})$, that is,  $(M,\mathcal{X},P,\mathcal{Z})$ is a yes-instance. 

Suppose that $C'\cap Z=\{e_i\}$ for some $i\in\{1,2,3\}$. Then, by the construction of the rule, there is a circuit $C''$ of $M_\ell$ such that $C''\cap Z=\{e_i\}$ and 
$C''\bigtriangleup Z$ is a circuit.  By Lemma~\ref{lem:circ} iii), $C=C'\bigtriangleup C''$ is a circuit of $M$. We have that $C$ is a solution for $(M,\mathcal{X},P,\mathcal{Z})$ and it is a yes-instance.

Assume that $C'\cap Z=\{e_h,e_j\}$ for some distinct $h,j\in\{1,2,3\}$. Let $e_i$ be the element of $Z$ distinct from $e_h$ and $e_j$. We have that $M_\ell$ has two circuits $C_h$ and $C_j$ such that $C_h\cap Z=\{e_h\}$, $C_j\cap Z=\{e_j\}$. Then $C_h\bigtriangleup C_j\bigtriangleup Z$ is a cycle of $M_\ell$ by Observation~\ref{obs:symm}, and this cycle contains a circuit $C_i$ such that $C_i\cap Z=\{e_i\}$. Consider $C''=C'\bigtriangleup Z$. By Lemma~\ref{lem:circ-triangle-a}, $C''$ is a circuit of $\hat{M}$ and $C''\bigtriangleup Z$ is a circuit.
By Lemma~\ref{lem:circ} iii), we conclude that $C=C''\bigtriangleup C_i$ is a solution for $(M,\mathcal{X},P,\mathcal{Z})$ and, therefore, $(M,\mathcal{X},P,\mathcal{Z})$ is a yes-instance.

\medskip
\noindent
{\bf Case 2.} The new instance is constructed by Rule~\ref{rule:2-leaf-scir} ii). In this case, $C''\cap Z\in P'(Z)=R$. Recall that $R$ contains sets of size 1 or 2.

Suppose that  $C'\cap Z=\{e_i\}$ for some $i\in\{1,2,3\}$. Then, by the construction of the rule, there is a solution $C''$ for the instance $(M_\ell,\mathcal{X}_\ell,P_\ell,(Z,e_i))$. Notice that 
$C''\cap Z=\{e_i\}$ and $C''\bigtriangleup Z$ is a circuit of $M_\ell$.  By Lemma~\ref{lem:circ} iii), $C=C'\bigtriangleup C''$ is a circuit of $M$. We have that $C$ is a solution for $(M,\mathcal{X},P,\mathcal{Z})$ and it is a yes-instance.

Assume that $C'\cap Z=\{e_h,e_j\}$ for some distinct $h,j\in\{1,2,3\}$. Let $e_i$ be the element of $Z$ distinct from $e_h$ and $e_j$. There is a solution $C''$ for 
$(M_\ell,\mathcal{X}_\ell',P_\ell^{(i)},\emptyset)$. Recall that $C''\cap Z=\{e_i\}$. Consider $C'''=C'\bigtriangleup Z$. By Lemma~\ref{lem:circ-triangle-a}, $C'''$ is a circuit of $\hat{M}$ and $C'''\bigtriangleup Z$ is a circuit. Since $C'''\cap Z=\{e_i\}$, we obtain that $C=C'''\bigtriangleup C''$ is a circuit of $M$. It remains to observe that $C$ is a solution for $(M,\mathcal{X},P,\mathcal{Z})$ and it is a yes-instance.

\medskip
We proved that the rule is safe. To evaluate the running time, notice first that we can check existence of a circuit of $M_\ell$ containing each $e_i$ in Rule~\ref{rule:3-leaf-scir} ii) in polynomial time using Lemmas~\ref{lem:r10-scir}--\ref{lem:cographic-scir} depending on the type of $M_\ell$. The problems for $(M_\ell,\mathcal{X}_\ell,P_\ell,(Z,e_i))$ and  $(M_\ell,\mathcal{X}_\ell',P_\ell^{(i)},\emptyset)$
in Rule~\ref{rule:3-leaf-scir} i) can be solved in \classFPT{} time by Lemmas~\ref{lem:r10-scir}--\ref{lem:cographic-scir} depending on the type of $M_\ell$, because $|\mathcal{X}_\ell|<|\mathcal{X}_\ell'|\leq|\mathcal{X}|$. 
\end{proof}

To complete the proof of Theorem~\ref{thm:scir}, it remains to observe that $\mathcal{M}$ and the corresponding conflict tree $\mathcal{T}$ can be constructed in polynomial time by Theorem~\ref{thm:decomp-good}, and then we apply the reduction rules at most $|V(\mathcal{T})|-1$ times until we obtain an instance of \escir{} for a matroid of one of basic types and solve the problem using Lemmas~\ref{lem:r10-scir}--\ref{lem:cographic-scir}.

\section{Lower bounds and open questions}\label{sec:conclusion}
In this paper we gave \classFPT algorithms for  \probWMSC and
 \probSCIR for regular matroids.  We conclude with a number of open algorithmic questions about circuits in matroids. We also discuss here certain algorithmic limitations for extending our results. 
 
 \medskip\noindent\textbf{Larger matroid classes.}
 The first natural question is  {whether our results can be extended to other classes of matroids}? There is no hope (of course up to certain complexity assumptions) that our  results can be extended to binary matroids. 
 Downey et al. proved in~\cite{DowneyFVW99} that the following  problem is \classW{1}-hard being parameterized by $k$. (We refer to the book of Downey and Fellows \cite{DowneyFbook13} for the definition of W-hierarchy.) 
In the 
 {\textsc{Maximum-Likelihood Decoding}}  problem we are given a binary $n\times m$ matrix $A$, a target binary $n$-element vector $\vec{s}$, and a positive integer $k$.
 The question is whether  
 there is a set of at most $k$ columns of $A$ that sum to $\vec{s}$?  
 As it was observed by Gavenciak et al. \cite{GavenciakKO12}, 
 the result of Downey et al. immediately implies the following proposition.
 
\begin{proposition}[\cite{GavenciakKO12}]\label{prop:W-h-bin}
\wmsc{} is \classW{1}-hard on binary matroids with unit-weights elements when parameterized by $\ell$ even when  $|T|=1$.
\end{proposition}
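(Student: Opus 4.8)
The plan is a direct parameterized reduction from \textsc{Maximum-Likelihood Decoding}, parameterized by $k$, which is \classW{1}-hard by~\cite{DowneyFVW99}, to \wmsc{} on binary matroids with unit weights and $|T|=1$, parameterized by $\ell$. The starting observation is the standard description of circuits of a binary matroid in terms of a representation: if $M$ is the column matroid of a matrix $B$ over ${\rm GF}(2)$, then a nonempty set $C$ of columns is a circuit of $M$ if and only if $C$ is inclusion-minimal among the nonempty sets of columns whose ${\rm GF}(2)$-sum is $\vec{0}$ (over ${\rm GF}(2)$ a set of columns is dependent precisely when some nonempty subset sums to zero, so an inclusion-minimal dependent set is an inclusion-minimal zero-summing set); more generally, a set of columns is a cycle of $M$ iff its sum is $\vec{0}$.

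Given an instance $(A,\vec{s},k)$ of \textsc{Maximum-Likelihood Decoding}, where $A$ is an $n\times m$ binary matrix and $\vec{s}\in{\rm GF}(2)^n$, I would form $A'=[\,A\mid\vec{s}\,]$, let $M$ be its column matroid with ground set $E=\{e_1,\dots,e_m,t\}$, where $t$ is the appended column equal to $\vec{s}$, set $T=\{t\}$, give every element weight $1$, and put $\ell=k+1$. This is computable in polynomial time and $\ell$ is a function of $k$ only, so it is a parameterized reduction producing a binary matroid with unit weights and a single terminal.

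For correctness, in the forward direction suppose $S\subseteq\{e_1,\dots,e_m\}$ satisfies $|S|\le k$ and $\sum_{e\in S}A_e=\vec{s}$, and choose such an $S$ inclusion-minimal. Then $S\cup\{t\}$ has ${\rm GF}(2)$-sum $\vec{s}+\vec{s}=\vec{0}$, so it is a cycle of $M$; I would then argue it is in fact inclusion-minimal with zero sum, hence a circuit: a proper subset with zero sum either misses $t$ --- then $S$ minus it is a proper subset of $S$ still summing to $\vec{s}$ --- or contains $t$ --- then it minus $\{t\}$ is a proper subset of $S$ summing to $\vec{s}$ --- in both cases contradicting minimality of $S$. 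Thus $C=S\cup\{t\}$ is a circuit with $t\in C$ and $w(C)=|S|+1\le k+1=\ell$. Conversely, if $M$ has a circuit $C$ with $t\in C$ and $w(C)\le\ell$, then $\sum_{e\in C}A'_e=\vec{0}$, i.e.\ $\sum_{e\in C\setminus\{t\}}A_e=\vec{s}$, and $|C\setminus\{t\}|=w(C)-1\le k$, so $C\setminus\{t\}$ is a solution of the decoding instance.

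There is no genuine difficulty here; the only step that requires a moment of care is converting ``a set of columns summing to $\vec{s}$'' --- which after adding $t$ is only guaranteed to be a \emph{cycle}, i.e.\ a disjoint union of circuits --- into an honest \emph{circuit through $t$} of the prescribed size, which is exactly why the reduction starts from an inclusion-minimal decoding solution. The degenerate cases ($\vec{s}=\vec{0}$, so $t$ is a loop, corresponding to the empty solution; or $\vec{s}$ equal to a column of $A$, so $t$ is parallel to that column) are consistent with the stated equivalence and need no separate discussion.
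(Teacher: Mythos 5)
Your reduction is correct and is exactly the argument the paper has in mind: the proposition is cited as an immediate consequence of the W[1]-hardness of \textsc{Maximum-Likelihood Decoding} of Downey et al., via appending the target vector $\vec{s}$ as the single terminal column and setting $\ell=k+1$, which is precisely what you spell out (including the needed minimality argument turning a zero-sum set into a circuit through $t$). No gaps; the degenerate cases are handled adequately.
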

Let us note that \wmsc with  $|T|=0$  on binary matroids  is equivalent to \textsc{Even Set}, which  parameterized complexity is a long standing open question, see e.g. 
\cite{DowneyFbook13}.

However Proposition~\ref{prop:W-h-bin} does not rule out a possibility that our results can be extended from the class of regular matroids to  any proper minor-closed class of binary, and even more generally, representable over some finite field,  matroids. It is very likely that the powerful structural theorems obtained by Geelen et al.  in order to settle Rota's conjecture,  see \cite{GeelenB14} for further discussions, can shed some light on this question.  

Solving both problems on transversal matroids is another interesting problem. 

 \medskip\noindent\textbf{Stronger parameterization.}
 Bj{\"{o}}rklund et al.   in~\cite{BjorklundHT12} gave a randomized algorithm that finds a shortest cycle through a given set $T$ of   vertices or edges in a graph in time 
$2^{|T|}\cdot n^{\Oh{1}}$. Hence  \wmsc{} parameterized by $w(T)$ is (randomized) \classFPT{} on graphic matroids if the weights are encoded in unary. Unfortunately, it is possible to show  that \wmsc is \classW{1}-hard already on cographic matroids for this parameterization.

\begin{theorem}\label{prop:W-h-cogr}
\wmsc{} is \classW{1}-hard on cographic matroids with unit-weights elements when parameterized by $|T|$.
\end{theorem}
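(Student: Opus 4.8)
The plan is to give a polynomial parameterized reduction from \textsc{Multicolored Clique}, which is \classW{1}-hard when parameterized by the number $k$ of color classes~\cite{CyganFKLMPPS15}. First I would reformulate the target problem. For any multigraph $G$ the bond matroid $M^*(G)$ is cographic, and its circuits are exactly the inclusion-minimal edge cuts (bonds) of $G$; with unit weights, \probWMSC on $M^*(G)$ therefore asks whether $G$ has a bond $C$ with $|C|\le\ell$ and $T\subseteq C$, equivalently whether there is a partition $(A,B)$ of $V(G)$ with $G[A]$ and $G[B]$ connected, $T\subseteq E(A,B)$, and $|E(A,B)|\le\ell$. Since $G$ may be an arbitrary multigraph and each parallel copy of an edge is a separate matroid element of weight $1$, the construction is free to give edges large multiplicities (which behave like integer capacities) while $M^*(G)$ stays cographic with unit-weight elements; note also that $\ell$ is part of the input and is not the parameter, so $\ell$ may be polynomially large.

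Given an instance of \textsc{Multicolored Clique}, that is, a graph $H$ with color classes $V_1,\dots,V_k$, each of size $N$ and listed as $V_i=\{v_{i,1},\dots,v_{i,N}\}$, I would assemble $G$ from three kinds of pieces sharing a common \emph{root} vertex $r$. The role of $r$ is to make $G$ connected and, more importantly, to make every partition $(A,B)$ separating the neighbors of $r$ automatically induce connected sides on the star around $r$, so that connectivity of $G[A]$ and $G[B]$ is easy to control. (i) For each color $i$, a \emph{selection cycle} $C_i$ passing through $r$, carrying exactly one designated terminal edge $t_i$ together with one internal vertex $y_{i,j}$ for each $v_{i,j}\in V_i$; since a cut and a cycle always share an even number of edges and $t_i$ must lie in the bond, the bond meets $C_i$ in at least two edges, and (after the multiplicities below) in exactly two, the second of which marks where $C_i$ is split and thus encodes a choice $v_i\in V_i$. (ii) For each pair $1\le i<j\le k$, a \emph{verification gadget} built from short cycles and/or parallel classes threaded through $C_i$ and $C_j$ (and, if convenient, carrying $O(k^2)$ additional terminal edges $t_{ij}$), designed so that its contribution to any valid bond equals a fixed small value exactly when $v_iv_j\in E(H)$ and is strictly larger otherwise. (iii) \emph{Multiplicities}: the edges of the selection cycles and verification gadgets are given suitable polynomial multiplicities, chosen so that wasting even one extra block of parallel edges outside the intended pattern already overshoots the budget. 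Finally set $T=\{t_1,\dots,t_k\}$ together with the optional $t_{ij}$, so that $|T|$ depends only on $k$, and set $\ell$ equal to the sum of the forced contributions of the selection cycles plus the minimum total verification cost attained when all chosen pairs are $H$-adjacent.

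For correctness, the forward implication is routine: a multicolored clique $\{v_1,\dots,v_k\}$ with $v_i\in V_i$ prescribes where to split each $C_i$, and the verification gadgets then realize their minimum, yielding a bond through $T$ of size exactly $\ell$ with both sides connected. The backward implication is where the argument must be carried out with care: from a bond $C\supseteq T$ with $|C|\le\ell$ one first uses the terminals and the even-intersection property to show that $C$ meets each selection cycle in precisely two blocks of edges, hence defines a candidate vertex $v_i$ per color; then uses the multiplicities together with the exact value of $\ell$ to conclude that on every verification gadget $C$ is forced into its cheapest configuration, which by construction exists only when $v_iv_j\in E(H)$; hence $\{v_1,\dots,v_k\}$ is a multicolored clique.

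The hard part, and the step I expect to demand the most work, is exactly this ``no-cheating'' analysis in the backward direction: the bond requirement forces both sides of the cut to remain connected, and combined with a budget $\ell$ that is \emph{tight} this must still leave no room for a bond that disregards the gadget semantics --- for instance one that splits some $C_i$ in an unintended way, or that routes the connectivity of $G[A]$ or $G[B]$ through a verification gadget in order to save edges elsewhere. Pinning this down means choosing the multiplicities large enough (yet still polynomial, and with all element weights equal to $1$, so that the matroid stays cographic and the reduction stays polynomial) that a single surplus cut block already breaks the budget; and it means performing an exact edge count in which, for each selection cycle $C_i$, the interaction of $C_i$ with the $k-1$ verification gadgets passing through it is fully accounted for, so that the size of any valid bond decomposes cleanly into the per-gadget quantities used to define $\ell$.
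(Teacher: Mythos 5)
Your proposal is an outline rather than a proof: the step that carries the entire reduction --- the ``verification gadget'' for a pair $(i,j)$ whose contribution to any valid bond equals a fixed value exactly when $v_iv_j\in E(H)$ and is strictly larger otherwise --- is never constructed. You describe it only as ``built from short cycles and/or parallel classes threaded through $C_i$ and $C_j$'' and you yourself flag the no-cheating analysis as the part that ``must be carried out with care,'' but that analysis is exactly what a hardness proof has to supply. Designing such a gadget is genuinely delicate here, because the object you are constraining is a \emph{bond}: both sides of the cut must be globally connected, so the $\Theta(k^2)$ gadgets cannot be analyzed independently (one side of the cut may route its connectivity through a gadget, which is precisely the cheating you mention), and the even-intersection property only gives you ``at least two edges per selection cycle,'' not the ``exactly two blocks, at a position encoding a vertex, coupled across $C_i$ and $C_j$ to adjacency in $H$'' that your budget computation presupposes. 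Without a concrete gadget and an exact accounting of $\ell$, neither direction of the equivalence can actually be verified, so as it stands the argument has a genuine gap.

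For contrast, the paper's proof avoids per-pair gadgets altogether. It reduces from \textsc{Regular Multicolored Clique} (\classW{1}-hard by Cai), keeps a copy of the $d$-regular graph $G$ itself inside the constructed graph $H$, attaches one vertex $v_i$ per color class, and adds two large cliques (on $n$ and on $2n^2$ vertices) glued to all of $V(G)$; the terminals are the $k$ edges $y_1v_1,\dots,y_1v_k$, so $|T|=k$. Regularity makes the size of the cut around a candidate set $\{v_1,\dots,v_k\}\cup\{u_1,\dots,u_k\}$ an exact quantity, $(n-k)+k+(n+d-k+1)k$, which is attained if and only if the chosen $u_i$'s form a clique, and the large cliques force the $x$- and $y$-vertices onto one fixed side of any bond within budget, which settles the connectivity/no-cheating issues by a short counting argument rather than by gadget multiplicities. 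If you want to salvage your approach, you would need to either exhibit the verification gadget explicitly and prove the tight-budget decomposition you sketch, or adopt a global counting device of the kind the paper uses (regularity plus a dominating clique) that makes the clique property equivalent to an exact cut size.
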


\begin{proof}
We reduce the following  variant of the \textsc{Multicolored Clique} problem.
In the {\textsc{Regular Multicolored Clique}}  we are given a 
  regular graph $G$, a positive integer parameter $k$, and a partition $V_1,\ldots,V_k$ of $V(G)$. 
The task is to decide whether 
  $G$ have a clique $K$ such that $|V_i\cap K|=1$ for $i\in\{1,\ldots,k\}$. 
  {\textsc{Regular Multicolored Clique}}  parameterized by $k$ was shown  to be \classW{1}-hard by Cai in~\cite{Cai08}.

Let $(G,k,V_1,\ldots,V_k)$ be an instance of \textsc{Regular Multicolored Clique}, and assume that $G$ is a $d$-regular $n$-vertex graph. Assume without loss of generality that $k<d<n-1$.
We construct the graph $H$ as follows.
\begin{itemize}
\item Construct a copy of $G$.
\item For each $i\in\{1,\ldots,k\}$, construct a vertex $v_i$ and edges $v_iu$ for $u\in V_i$.
\item Construct $n$ pairwise adjacent vertices $x_1,\ldots,x_n$ and make them adjacent to the vertices of $G$.
\item Construct $p=2n^2$ pairwise adjacent vertices $y_1,\ldots,y_p$ and make each of them adjacent to $x_1,\ldots,x_n$.
\item Construct edges $y_1v_1,\ldots,y_1v_k$ and set $T=\{y_1v_1,\ldots,y_1v_k\}$.
\end{itemize}
We put $\ell=n+(n+d-k+1)k$.

We claim that  $(G,k,V_1,\ldots,V_k)$ is a yes-instance of \textsc{Regular Multicolored Clique} if and only if $H$ has a minimal cut-set $C$ of size at most $\ell$ such that $T\subseteq C$. 

Suppose that $K$ is a clique in $G$ with $|V_i\cap K|=1$ for $i\in\{1,\ldots,k\}$. Consider the partition $(A,\overline{A})$ of $V(G)$ with $A=\{v_1,\ldots,v_k\}\cup K$. It is straightforward to verify that $H[A]$ and $H[\overline{A}]$ connected. Therefore $C=E(A,\overline{A})$ is a minimal cut-set. The vertices $v_1,\ldots,v_k$ have $n-k$ neighbors in $V(G)\cap\overline{A}$ in total and all their neighbors are distinct. Also each $v_i$ is adjacent to $y_1\in \overline{A}$.
Since $G$ is $d$-regular, each vertex $u\in K$ has $d-k+1$ neighbors in $V(G)\cap \overline{A}$ and $n$ neighbors $x_1,\ldots, x_n$ among the remaining vertices of $\overline{A}$. Hence, $|C|=(n-k)+k+(n+d-k+1)k  =\ell$.

Assume now that $H$ has a minimal cut-set $C$ of size at most $\ell$ such that $T\subseteq C$. Let $(A,\overline{A})$ be the partition of $V(H)$ with $E(A,\overline{A})=C$. We also assume that $y_1\in \overline{A}$. Then $v_1,\ldots,v_k\in A$. 

First, we show that $x_i,y_j\in \overline{A}$ for $i\in\{1,\ldots,n\}$ and $j\in\{1,\ldots,p\}$. To obtain a contradiction, assume that at least one of these vertices is in $A$. Because
$\{x_1,\ldots,x_n\}\cup \{y_1,\ldots,y_n\}$ is a clique of size $2n^2+n$ and $T\subseteq E(A,\overline{A})$, we have that 
$|E(A,\overline{A})|\geq 2n^2+n-1+k>n+(n+d-k+1)k=\ell$ contradicting $|E(A,\overline{A})|\leq \ell$.

Because $H[A]$ is connected and $v_1,\ldots,v_k\in A$, there is $u_i\in V_i$ such that $u_i\in A$ for each $i\in\{1,\ldots,k\}$. Let $A'=\{v_1,\ldots,v_k\}\cup\{u_1,\ldots,u_k\}$. 
The vertices $v_1,\ldots,v_k$ have $n-k$ neighbors in total in $V(G)\cap\overline{A'}$ and all their neighbors are distinct. Also each $v_i$ is adjacent to $y_1\in \overline{A'}$.
Since $G$ is $d$-regular, each vertex $u_i$ has at least $d-k+1$ neighbors in $V(G)\cap \overline{A'}$, and all the vertices $u_1,\ldots,u_k$ are incident to $(d-k+1)d$ edges of $G$ with exactly one end-vertex  in $A'$ if and only if $\{u_1,\ldots,u_k\}$ is a clique of $G$. Also each vertex $u_i$ is adjacent to $x_1,\ldots,x_k$. Therefore,  $|E(A',\overline{A'})|\geq (n-k)+k+(n+d-k+1)k  =\ell$, and $|E(A',\overline{A'})|=\ell$ if and only if $\{u_1,\ldots,u_k\}$ is a clique of $G$.

 Since $x_i,y_j\in \overline{A}$ for $i\in\{1,\ldots,n\}$ and $j\in\{1,\ldots,p\}$, $A'\setminus A\subseteq V(G)$. Because $G$ is $d$-regular and each vertex of $G$ is adjacent to exactly one vertex $v_i$ and the vertices $x_1,\ldots,x_n$, 
$\ell=|E(A,\overline{A})|\geq |E(A',\overline{A'})|+|A'\setminus A|(n-d-1)\geq |E(A',\overline{A'})|\geq \ell$. 
As $d<n-1$, we obtain that $A=A'$. Hence,  $\{u_1,\ldots,u_k\}$ is a clique of $G$.

To complete the proof, we observe that $H$ has a minimal cut-set $C$ of size at most $\ell$ such that $T\subseteq C$ if and only if $(M(H),w,T,\ell)$ is a yes-instance of \wmsc{} with the weight function $w(e)=1$ for $e\in E(H)$. 
\end{proof}

Interestingly, Theorem~\ref{prop:W-h-cogr} does not rule out a possibility that for a fixed numbers of terminals
\wmsc is still resolvable in polynomial time, or in other words that it is in \classXP parameterized by $|T|$. We conjecture that this is not the case. More precisely,  {is  \wmsc \classNP-complete on cographic matroids for a fixed number, say $|T|=3$, terminal elements?}

 \medskip\noindent\textbf{Other circuit problems.} We do not know if our technique could be adapted to solve the following variant of the spanning circuit problem. Given a regular matroid $M$ with a set of terminals,    decide whether  $M$ contains a circuit of size at least $\ell$ spanning  all terminals. We leave the complexity of this problem parameterized by $\ell$ open.

Another interesting variation of \probWMSC and
 \probSCIR is the problem where we seek for a  circuit of a given parity containing a given set of terminal elements $T$. For graphs (or graphic matroids), Kawarabayshi et al.  \cite{KawarabayashiLR10} proved that the problem is \classFPT parameterized by $|T|$. The complexity of this problem on cographic matroids is open.

\end{document}